\date{}
\newcommand{\forkins}{\texttt{fork}}
\newcommand{\insend}{\texttt{end}}
\newcommand{\finish}{\texttt{endall}}
\newcommand{\allocateins}{\texttt{allocate}}
\newcommand{\freeins}{\texttt{free}}
\newcommand{\thread}{thread}
\newcommand{\bfmodel}{binary-forking model}
\newcommand{\tas}{\texttt{TS}}
\newcommand{\cas}{\texttt{CAS}}
\newcommand{\tszero}{\textbf{zero}}
\newcommand{\tsone}{\textbf{one}}
\newcommand{\fjmodel}{binary fork-join model}
\newcommand{\join}{join}
\newcommand{\joinins}{\texttt{join}}
\newcommand{\pri}{{\mb{pri}}}
\newcommand{\spanc}{span}
\newcommand{\depth}{span}
\newcommand{\TAS}[0]{$\mf{Test-and-Set}$}
\newcommand{\func}[1]{{\sc #1}}
\newcommand{\para}[1]{{\bf \emph{#1}}\,}
\newcommand{\intersection}{\func{Intersection}}
\newcommand{\union}{\func{Union}}
\newcommand{\difference}{\func{Difference}}
\newcommand{\intersect}{\func{Intersect}}
\newcommand{\diff}{\func{Diff.}}
\newcommand{\BF}{\mathcal{BF}}
\newcommand{\NC}{\mathcal{NC}}
\newcommand{\AC}{\mathcal{AC}}
\newcommand{\LogSpace}{\mathcal{L}}
\newcommand{\PRAM}{\mathcal{PRAM}}
\def\fullversion{1}
\begin{document}

\title{Optimal (Randomized) Parallel Algorithms \\in the Binary-Forking Model}

\author{Guy E. Blelloch\\ guyb@cs.cmu.edu\\ Carnegie Mellon Universtiy \and Jeremy T. Fineman \\ jfineman@cs.georgetown.edu \\ Georgetown University \and Yan Gu \\ ygu@cs.ucr.edu \\ University of California, Riverside \and Yihan Sun \\ yihans@cs.ucr.edu \\ University of California, Riverside}

\maketitle

\begin{abstract}
  In this paper we develop optimal algorithms in the binary-forking model for a variety of fundamental problems, including sorting, semisorting, list ranking, tree contraction, range minima, and ordered set union, intersection and difference.  In the binary-forking model, tasks can only fork into two child tasks, but can do so recursively and asynchronously.  The tasks share memory, supporting reads, writes and test-and-sets.  Costs are measured in terms of work (total number of instructions), and span (longest dependence chain).

  The binary-forking model is meant to capture both algorithm performance and algorithm-design considerations on many existing multithreaded languages, which are also asynchronous and rely on binary forks either explicitly or under the covers. In contrast to the widely studied PRAM model, it does not assume arbitrary-way forks nor synchronous operations, both of which are hard to implement in modern hardware.  While optimal PRAM algorithms are known for the problems studied herein, it turns out that arbitrary-way forking and strict synchronization are powerful, if unrealistic, capabilities. Natural simulations of these PRAM algorithms in the binary-forking model (i.e., implementations in existing parallel languages) incur an $\Omega(\log n)$ overhead in span. This paper explores techniques for designing optimal algorithms when limited to binary forking and assuming asynchrony. All algorithms described in this paper are the first algorithms with optimal work and span in the binary-forking model.  Most of the algorithms are simple.  Many are randomized.

\end{abstract}

\hide{Due to the asynchronous nature of the model, and a variety
  schedulers that are efficient in both theory and practice, variants
  of the model are widely used in practice in languages such as Cilk
  and Java Fork-Join.  PRAM algorithms can be simulated in the model
  but at a loss of a factor of $\Omega(\log n)$ so most PRAM
  algorithms are not optimal in the model even if optimal on the PRAM.
  All algorithms we describe are optimal in work and span (logarithmic
  in span).  Several are randomized.  Beyond being the first optimal
  algorithms for their problems in the model, most are very simple.}

\section{Introduction}

In this paper we present several results on the \bfmodel{}.  The model assumes a collection of threads that can be created dynamically and can run asynchronously in parallel.  Each thread acts like a standard random-access machine (RAM), with a constant number of shared registers and sharing a common main memory.  The model includes a \texttt{fork} instruction that forks an asynchronous child thread.  A computation starts with a single thread and finishes when all threads end.  In addition to reads and writes to the shared memory, the model includes a test-and-set (\tas) instruction.
Costs are measured in terms of the work (total number of instructions executed among all threads) and the span (the longest sequence of dependent instructions).

The \bfmodel{} is meant to capture the performance of algorithms on modern multicore shared-memory machines.  Variants of the model have been widely studied~\cite{CLRS,agrawal2014batching,Acar02,blelloch2010low,BCGRCK08,BL98,blumofe1999scheduling,BG04,Blelloch1998,blelloch1999pipelining,BlellochFiGi11,BST12,Cole17,CRSB13,tang2015cache,dinh2016extending,chowdhury2017provably}.  They are also widely used in practice, and supported by programming systems such as Cilk~\cite{frigo1998implementation}, the Java fork-join framework~\cite{Java-fork-join}, X10~\cite{charles2005x10}, Habanero~\cite{budimlic2011design}, Intel Threading Building Blocks (TBB)~\cite{TBB}, and the Microsoft Task Parallel Library~\cite{TPL}.

The binary forking model and variants are practical on multicore shared-memory machines in part because they are mostly asynchronous, and in part due to the dynamic binary forking.  Asynchrony is important because the processors (cores) on modern machines are themselves highly asynchronous, due to varying delays from cache misses, processor pipelines, branch prediction, hyper-threading, changing clock speeds, interrupts, the operating system scheduler, and several other factors.
Binary forking is important since it allows for efficient scheduling in both theory and practice, especially in the asynchronous setting~\cite{blumofe1999scheduling,BGM99,ABP01,ACGRS18}.  Efficient scheduling can be achieved even when the number of available processors changes over time~\cite{ABP01}, which often happens in practice due to shared resources, background jobs, or failed processors.

Due to these considerations, it would seem that these models are more practical for designing parallel algorithms than the more traditional PRAM model~\cite{SV81}, which assumes strict synchronization on each step, and a fixed number of processors.  One can also argue that they are a more convenient model for designing parallel algorithms, allowing, for example, the easy design of parallel divide-and-conquer algorithms, and avoiding the need to schedule by hand~\cite{Blelloch96}.  The PRAM can be simulated on the \bfmodel{} by forking $P$ threads in a tree for each step of the PRAM.  However, this has a $O(\log n)$ overhead in span.  This means that algorithms that are optimal on the PRAM are not necessarily optimal when mapped to the \bfmodel{}.  For example, Cole's ingenious pipelined merge sort on $n$ keys and processors takes optimal $O(\log n)$ parallel time (span) on the PRAM~\cite{Cole1988}, but requires $O(\log^2 n)$ span in the \bfmodel{} due to the cost of synchronization.  On the other hand a $O(n \log n)$ work and $O(\log n \log \log n)$ span algorithms in the \bfmodel{} is know~\cite{Cole17}.  Therefore finding more efficient direct algorithms for the \bfmodel{} is an interesting problem.  Known results are outlined in Section~\ref{sec:related}.

The variants of the \bfmodel{} differ in how they synchronize.  The most common variant is binary fork-join model where every fork corresponds to a later join, and the fork and corresponding joins are properly nested~\cite{CLRS,Acar02,blelloch2010low,BL98,blumofe1999scheduling,Blelloch1998,BlellochFiGi11,Cole17}.  Other models allow more powerful synchronization primitives ~\cite{blelloch1999pipelining,CRSB13,tang2015cache,dinh2016extending,chowdhury2017provably,ABP01}.  In this paper we allow a test-and-set (\tas{}), which is a memory operation that atomically checks if a memory location is zero, returning the result, and sets it to one.  This seems to give some power over the pure fork-join model.   We make use of the \tas{} in many of our algorithms.  We justify including a
\tas{} instruction by noting that all modern multicore hardware includes the instruction.  Furthermore all existing theoretical and practical implementations of the fork-join model require the test-and-set, or equivalently powerful operation to implement the join.

In this paper we describe several algorithms for fundamental problems that are optimal in both work and span in the \bfmodel.
In particular, we show the following results.

\begin{theorem}[Main Theorem]
  Sorting, semisorting, list/tree contraction, random permutation, ordered-set operations, and range minimum queries can be computed in the \bfmodel{} with optimal work and \depth{} ($O(\log n)$).  In many cases the algorithms are randomized, as summarized in Table~\ref{tab:summary}.
\end{theorem}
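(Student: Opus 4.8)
The plan is to prove each clause of the theorem by exhibiting a separate algorithm and to certify optimality by pairing each with a matching lower bound: an $\Omega(n)$ or (for sorting) $\Omega(n\log n)$ work bound, together with the $\Omega(\log n)$ \spanc{} bound forced in the \bfmodel{} by the fact that a single output that depends on all $n$ inputs can only gather that information through $\Omega(\log n)$ levels of binary forks. The unifying difficulty — and the reason the statement is not immediate — is that every one of these problems already admits an optimal-work, $O(\log n)$-depth \PRAM{} algorithm, yet the naive simulation loses a $\log n$ factor because each of the $\Theta(\log n)$ synchronous \PRAM{} rounds must itself be realized by a fork tree of span $\Theta(\log n)$. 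So the crux everywhere is to delete this second logarithmic factor. First I would assemble a small toolkit of primitives that are natively cheap under binary forking — prefix sums / scan, filtering and packing, and parallel merging — each obtained by balanced binary divide-and-conquer so that its \spanc{} is $O(\log n)$ and its work optimal, and build everything else on top of these.

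I would then organize the remaining results around two complementary techniques for defeating the $\log^2 n$ barrier. The first is doubly-exponential size reduction, used for sorting: a recursive sample sort that partitions an $n$-element instance into $\sqrt n$ roughly-balanced buckets (the randomized sample guaranteeing balance with high probability), distributes via the scan/packing primitives, and recurses on buckets of size $\Theta(\sqrt n)$. The recursion depth is then $O(\log\log n)$, and since the per-level span is dominated by searching and distributing over the current subproblem it is $O(\log n)$ at the top and shrinks geometrically to $O(\log n / 2^{i})$ at level $i$, so the spans telescope to $O(\log n)$ while the comparison work telescopes to $O(n\log n)$. The second technique is asynchronous, \tas{}-driven scheduling, so that the \spanc{} is governed by the true \idd{} of the computation rather than by (number of rounds) $\times$ (round span); this is the route for semisort, random permutation, and list/tree contraction.

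The hard part will be this contraction family (list and tree contraction, and the random-permutation and semisort routines built in the same spirit), because the natural algorithm proceeds in $\Theta(\log n)$ rounds of random-mate / coin-flipping contraction, which is exactly the situation that naively costs $\Theta(\log^2 n)$. My approach is to give each element an independent random priority, let each element contract as soon as its local neighborhood condition is settled — resolving the inevitable contention between a competing pair with a single \tas{} — and never impose a global round barrier. The whole computation then becomes an \idg{} whose edges are purely local dependences, so its \spanc{} equals its \idd{}. The technical heart of the proof is a probabilistic argument that this \idd{} is $O(\log n)$ with high probability, i.e.\ that the longest chain of the form ``$x$ cannot contract until neighbor $y$ does, which cannot contract until neighbor $z$ does, $\dots$'' has length $O(\log n)$, while the total work summed over all contractions stays $O(n)$. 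Bounding this longest dependence chain — rather than merely the expected number of rounds — is where I expect the main effort, and randomization is essential to it.

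For the ordered-set operations I would represent the sets as sorted sequences and reduce \union, \intersect, and \difference to an uneven parallel merge solved by divide-and-conquer with binary search on the split point: this yields the optimal $O(m\log(n/m+1))$ work, and the split costs along any root-to-leaf path of the recursion telescope to $O(\log n)$ \spanc{}. For range minima I would use a block decomposition into blocks of size $\Theta(\log n)$, reducing the problem via the Cartesian-tree / Euler-tour route so that the tree processing reuses the list-ranking (contraction) primitive for $O(\log n)$ \spanc{}, with within-block and top-block queries answered in $O(1)$ after $O(n)$-work preprocessing; the one point to watch is building the top structure without reintroducing a $\log^2 n$ factor, which I fold into the same telescoping/asynchronous machinery used above. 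Combining these constructions with their matching lower bounds establishes optimality for every listed problem, and the per-problem (often randomized) bounds are collected in Table~\ref{tab:summary}.
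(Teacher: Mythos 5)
Your two organizing ideas are the right ones and match the paper's: doubly-exponential ($n^{\epsilon}$-way) divide-and-conquer with $O(\log\log n)$ recursion depth for sorting, and asynchronous \tas{}-resolved execution that follows the iteration-dependence structure (a random binary tree of depth $O(\log n)$ \whp{}) for the list/tree-contraction family. But three of your clauses have genuine gaps. First, sorting: you dismiss the bucket-distribution step as ``scan/packing primitives,'' but grouping $n$ elements by their bucket id among $n^{\Theta(1)}$ buckets is itself a semisort, and no scan-based packing achieves it with optimal work \emph{and} $O(\log n)$ \spanc{} in the \bfmodel{} (the matrix-of-counts trick needs $\Theta(\sqrt{n})$ sequential counting per block, and per-bucket prefix sums cost $\Omega(n^{4/3})$ work). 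The paper's actual mechanism is randomized dart-throwing: each element \tas{}'s a random slot in a $2\times$-over-allocated bucket array, retries up to $O(\log n)$ times, and the whole level restarts on failure; bounding the accumulated restart \spanc{} across the $O(\log\log n)$ levels requires a dedicated concentration lemma for sums of scaled geometric variables. This is a missing key idea, not a routine detail. Second, ordered sets: representing the sets as sorted arrays cannot meet the $O(m\log(\frac{n}{m}+1))$ work bound for \union{} and \difference{}, because when $m=o(n/\log n)$ that bound is $o(n)$ while merely writing (or copying the untouched runs of) the output array costs $\Omega(n)$. The paper is forced to use persistent weight-balanced trees so the output shares structure with the large input, and the real technical content there is the reconstruction-based rebalancing with a charging argument showing rebalancing work is proportional to the number of ``red'' search-path nodes; your plan omits rebalancing entirely.

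Third, RMQ: you correctly isolate the danger point (building the sparse table over the $n/\log n$ block minima) but then ``fold it into the same telescoping/asynchronous machinery,'' which does not apply --- the sparse-table levels neither shrink geometrically nor admit a local dependence structure, and the direct parallelization is $O(\log n)$ levels each of $O(\log n)$ \spanc{}. The paper's fix is a specific restructuring: at level $k$ partition into subarrays of size $2^k$ and compute suffix minima of each left half and prefix minima of each right half, so every level is an independent $O(\log n)$-\spanc{} computation and all levels run in parallel. Smaller issues: your span lower-bound argument and primitive toolkit are fine, and your contraction-family analysis is essentially the paper's (the $O(\log n)$ \whp{} bound on the dependence depth is imported from prior work on random binary trees rather than proved fresh); but note that random permutation additionally requires \emph{constructing} the dependence forest, which the paper does via semisort plus a quadratic-work within-chain sort whose expected cost must be shown to be $O(n)$ --- another step your plan does not account for.
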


\begin{table}[t!]\small
\begin{center}
    \begin{tabular}{lccl}
    \hline
    Problem & & Work & \multicolumn{1}{c}{Span}\\
    \hline
    \textbf{List Contraction} & \multicolumn{1}{c}{Sec \ref{sec:list}}
                                                           & $O(n)$ &
                                                                      $O(\log
                                                                      n)^*$
                                \\
    \textbf{Sorting} & \multicolumn{1}{c}{Sec \ref{sec:sorting}} & $O(n\log n)^\dagger$ & $O(\log n)^*$\\
    \textbf{Semisorting} & \multicolumn{1}{c}{Sec \ref{sec:sorting}} & $O(n)^\dagger$ & $O(\log n)^*$ \\
    \textbf{Random Permutation} & \multicolumn{1}{c}{Sec \ref{sec:permutation}} & $O(n)^\dagger$ & $O(\log n)^*$  \\
    \textbf{Range Minimum Query} & \multicolumn{1}{c}{Sec \ref{sec:rmq}} & $O(n)$ & $O(\log n)$  \\
    \textbf{Tree Contraction} & \multicolumn{1}{c}{Sec \ref{sec:tree}} & $O(n)$ & $O(\log n)^*$   \\
    \textbf{Ordered-Set Operations} & \multirow{2}{*}{Sec \ref{sec:setset}} & \multirow{2}[1]{*}{$O(m\log(\frac{n}{m}+1))$} & \multirow{2}[1]{*}{$O(\log n)$ } \\
    \textbf{(\union{}, \intersect{}, \diff{})} &       &       &  \\
    \hline
    \end{tabular}%
\end{center}
    \caption{\small The bounds of our new algorithms in the \bfmodel{}.
    For ordered-set operations, $n$ and $m\le n$ are sizes of two sets.
 $*$: with high probability
      (\whp{}).
\quad $\dagger$: in expectation.
Bounds without superscripts are worst-case bounds.
The ordered-set algorithms can work in \bfmodel{} only with \join{} supported (either by using
\tas{} or just as a default primitive), and the
rest make use of \tas{}.
}
\label{tab:summary}
    \vspace{-2.5em}
\end{table}%

To achieve these bounds, we develop interesting algorithmic approaches.
For some of them, we are inspired by recent results on identifying dependences in sequential iterative algorithms~\cite{blelloch2012internally,BGSS16,shun2015sequential}.
This paper discusses a non-trivial approach to convert the dependence DAG into an algorithm in the \bfmodel{} while maintaining the span of the algorithm to be the same as the longest chain in the DAG.
This leads to particularly simple algorithms, even compared to
previous PRAM algorithms whose span is suboptimal when translated to the \bfmodel{}.
For some other algorithms, we use the $n^\epsilon$-way divide-and-conquer scheme.
By splitting the problem into $n^\epsilon$ sub-problems and solving them in parallel in logarithmic time, we are able to achieve $O(\log n)$ span for the original problem.
Our results on ordered sets are the best known (optimal work in the comparison model and $O(\log n)$ span) even when translated to other models such as the PRAM.

We note that for many of the problems we describe, it remains open whether the same bounds can be achieved deterministically, and
also whether they can be achieved in the binary fork-join model without a \tas{}.   One could argue that avoiding a \tas{} is more elegant.

\subsection{Related Work}
\label{sec:related}

There have been many existing parallel algorithms designed based on variants of the \bfmodel{} (e.g.,~\cite{agrawal2014batching,Acar02,blelloch2010low,BCGRCK08,BG04,Blelloch1998,blelloch1999pipelining,BlellochFiGi11,BST12,Cole17,CRSB13,BBFGGMS16,tang2015cache,dinh2016extending,chowdhury2017provably,BGSS18,dhulipala2020semi,BBFGGMS18,Dhulipala2018,blelloch2020randomized}).  Many of the results are in the setting of cache-efficient algorithms.
This is because binary forking in conjunction with work-stealing or space-bounded schedulers leads to strong bounds on the number of cache misses on multiprocessors with various cache configurations~\cite{Acar02,CRSB13,Cole17,BlellochFiGi11}.


In the binary fork-join model, Blelloch et al.~\cite{blelloch2010low} give work-efficient $O(\log n)$ span algorithms for prefix sums and merging, and a work-efficient randomized sorting algorithm with $O(\log^{3/2} n)$ span \whp{}\footnote{We use the term $O(f(n))$ with high probability (\whp) in $n$ to indicate the bound $O(kf(n))$ holds with probability at least $1-1/n^k$ for any $k\ge 1$. With clear context we drop ``in $n$''.}.  Cole and Ramachandran~\cite{Cole17} improved this and gave a deterministic algorithm with span $O(\log n \log \log n)$.  This is currently the best known result for sorting in the binary fork-join model, without a test-and-set, and also for deterministic sorting even with a test-and-set.


Allowing for more powerful synchronization,
Blelloch et al.~\cite{Blelloch1998,blelloch1999pipelining} discussed how to implement futures using the \tas{} instruction, which leads to some low-span binary-forking algorithms Tang et al.~\cite{tang2015cache,dinh2016extending,chowdhury2017provably} described some dynamic programming algorithms, in the setting of cache efficiency.  They also use a \tas{} for synchronization.    With this they can reduce the span of a variety of algorithms over fork-join computations without the atomic synchronizations.  Without considering the additional support for cache efficiency, we believe their model is equivalent to the \bfmodel{}.

\section{Models and Simulations}
\label{sec:mtram}

\newcommand{\tram}{TRAM}

Here we describe the \bfmodel{} and its relationship to more
traditional models of parallel computing, including the PRAM and
circuit models.  The \bfmodel{} falls into the class of multithreaded
models~\cite{blumofe1999scheduling,BGM99,ABP01,CR17b,BL98}.
Multithreaded computational models assume a collection of threads
(sometimes called processes or tasks) that can be dynamically created,
and generally run asynchronously.  Cost is determined in terms of the
total work and the computational span (also called depth or critical
path length).  There are several variants on multithreaded models depending on
how many threads can be forked, how they synchronize, and assumptions
about how the memory can be accessed.
To be concrete, we define a specific model in this
paper.  

\paragraph{\textbf{The \bfmodel.}}
The \defn{\bfmodel} consists of \defn{\thread{}s} that share a common
memory.  Each \thread{} acts like a sequential RAM---it works on a
program stored in the shared memory, has a constant number of
registers (including a program counter), and has standard RAM
instructions (including an \insend{} instruction to finish the
computation).  The \bfmodel{} extends the RAM with a \forkins{}
instruction, which forks a \emph{child} thread.  We also employ a
special \insend{} instruction named \finish{} to indicate the completion of the whole computation.
The \forkins{}
instruction sets the first register to zero in the parent (\emph{forking})
thread and to one in the child (\emph{forked}) thread, to distinguish them.
Otherwise the states of the threads are identical, including the
program counter to the next instruction.
As is standard with the sequential RAM~\cite{Tarjan83}, we
assume that for input size $n$, all memory locations and
registers can hold $O(\log n)$ bits.

In addition to reads and writes, we include a \texttt{test-and-set}
(\tas) instruction in the \bfmodel{} for accessing memory.  The \tas{}
is an atomic instruction that reads a memory location and if the
memory location is \tszero{}, sets it to \tsone{}, returning
\tszero{}. Otherwise it leaves the value unchanged returning \tsone{}.
We note that all currently produced processors support the \tas{}
instruction in hardware. 

In a binary-forking model, a computation starts with a
single \defn{initial} \thread{} and finishes when \finish{} is called.
The invocation to an \finish{} can be determined by the algorithm, for example,
through using \tas{} instructions (e.g., to implement \joinins{} instructions, see below).
A computation in the \bfmodel{} can therefore be viewed as a tree where each
node is an instruction with the next instruction as a child, and where
the \texttt{fork} instruction has two children corresponding to the next
instruction of the original forking thread and the first instruction of
the forked thread.  The root of the tree is the first instruction of
the initial thread.  We define the \emph{work} of a computation as the
size of the tree (total number of instructions) and the
\emph{\spanc{}} as the depth of the tree (longest path of
instructions).  We assume the results of memory operations are
consistent with some total order (linearization) of the instructions
that preserves the partial order defined by the tree.  For example, a
read will return the value of the previous write or \tas{} to the same
location in the total order.  The choice of total order can affect the
results of a program since \thread{}s can communicate through the
shared memory.  In general, therefore, computations are
nondeterministic.

To simplify issues of parallel memory allocation we assume there is an
\allocateins{} instruction that takes a positive integer $n$ and
allocates a contiguous block of $n$ memory locations, returning a
pointer to the block, and a \freeins{} instruction that given a
pointer to an allocated block, frees it. 

We use $\mathcal{BF}(W(n),S(n))$ to denote the class of algorithms that require
$O(W(n))$ work and $O(S(n))$ span for inputs of size $n$ in the \bfmodel{}.
We use
$\mathcal{BF}^k$ 
when $S(n) = O(\log^k(n))$ and $W(n)$ is
polynomial in $n$, and
$\mathcal{BF}^*$ 
when the span is polylogarithmic and the
work is polynomial.

The \bfmodel{} can be extended to support arbitrary-way forking
instead of binary.  In particular, the \forkins{}
instruction can take an integer specifying the number of threads to
fork, and each forked thread then gets a unique integer identifier in
a register.  The focus of this paper, however, is on binary forking
since there are no known optimal scheduling results for arbitrary-way
forking (see below).  The model can also be augmented with more
powerful atomic memory operation.  For instance, some
algorithms~\cite{agrawal2014batching,BBFGGMS16,dhulipala2020semi,BBFGGMS18,Dhulipala2018}
use compare-and-swap (\cas{}) in addition to the above-mentioned
model.   We refer to this model as the \bfmodel{} with
\cas{}.  A \tas{} is sufficient for our algorithms.  

\paragraph{\textbf{Joining}}
It can be useful to join threads after forking them, and many models
support such joining~\cite{blumofe1999scheduling,BGM99,ABP01,CR17b}.  This can be implemented by
adding a \texttt{join} instruction to the \bfmodel{}.  When
reaching a \texttt{join} instruction in thread $t$ the \emph{forking thread} $t$ must ``wait''
until its most recently \emph{forked} child thread $t_c$ ends.  Specifically,
in the partial order of the tree mentioned above, it means the partial
order is augmented with a dependence from the \insend{} instruction of $t_c$
to the \texttt{join} instruction of $t$.  This partial order is now a
series-parallel DAG instead of a tree, and the total order has to be
consistent with it.  As before, the work is the total number of
instructions, but now the \spanc{} is the longest path of instructions
in the DAG instead of tree.  We call this the \emph{\fjmodel{}}.

Joining can easily be implemented in the \bfmodel{} without a built-in
\joinins{} instruction, but by using the \tas{} instruction.  To implement a \joinins{},
before each fork we initialize a ``synchronization'' location to
\tszero.  For the forking and the forked threads, whichever finishes
later is responsible for processing the rest of the computation after the \join{}.
This is determined by reaching consensus through the synchronization location.
When
the forking thread $T$ reaches a \joinins{} it saves its registers and
then performs a \tas{} on the corresponding synchronization location.
If the \tas{} returns \tsone{}, this means that the other thread has
already finished and set it to \tsone{} first, and $T$ can continue to the next instruction in the program.
Otherwise, it means that the other thread has not finished yet, and thus $T$ ends because the other thread will take over the rest of the computation later.
When the forked thread
reaches its \texttt{end}, it also performs a \tas{} on the synchronization location. Similarly, if the \tas{} returns \tszero{} it ends, otherwise it loads the registers saved by the forking thread, and jumps to the stored program counter.
This
implementation preserves work and span within a constant factor.
By using \forkins{} and \joinins{} one can also simulate a regular parallel for-loop of size $n$
using divide-and-conquer, which takes $\Theta(\log n)$ \spanc{} to fork and synchronize.

The simulation implies that the \bfmodel{} is as least as powerful as
the \fjmodel{} (with or without \tas{}).  We note that unlike \fjmodel{},
by using a general \tas{} instead of just a \join{}, the parallelism
supported by \bfmodel{} is not necessary nested.
We point out that to
implement a constant-time join seems to require an operation at least as powerful as
\tas{}.  In particular reads and writes by
themselves are not powerful enough to get consensus among even just
two processes in a wait-free manner, and \tas{} is the seems to be the
least powerful
memory operation that can achieve two process consensus~\cite{Herlihy91}.
This suggests a primitive as powerful as \tas{} is necessary to
efficiently implement a join on an asynchronous machine since the two
joining threads need to agree (reach consensus) on who will run the
continuation.


\paragraph{\textbf{PRAM}}

For background, we give a brief description of the PRAM
model~\cite{SV81}.  A PRAM consists of $p$ processors, each a sequential
random access machine (RAM), connected to a common shared memory of
unbounded size.  Processors run synchronously in lockstep.  Although
processors have their own instruction pointer, in typical algorithms
they all run the same program.  There are several variants of the
model depending on how concurrent accesses to shared memory are
handled---e.g., CRCW allows concurrent reads and writes, and EREW
requires exclusive reads and writes.  For concurrent writes, in this
paper we assume an arbitrary element is written (the most standard
assumption).  A more detailed description of the model and its
variants can be found in J\'{a}J\'a's book on parallel
algorithms~\cite{JaJa92}.  As with the \bfmodel{}, we assume that for an
input of size $n$, memory locations and registers contain at most
$O(\log n)$ bits.   We use
$\PRAM(W(n),S(n))$ to indicate PRAM algorithms that run in $O(W(n))$
work (processor-time product) and $S(n)$ time,
$\PRAM^k$ when the time is $O(\log^k n)$, and
$\PRAM^*$ when it is polylogarithmic (both with polynomial work).

\paragraph{\textbf{Relationship to the PRAM}}

There have been many scheduling results showing how to schedule
binary and multiway forking on various machine
models~\cite{blumofe1999scheduling,BGM99,ABP01}.  For example, the
following theorem can bound the runtime for programs
in the \bfmodel{} on a PRAM.

\begin{theorem}[\cite{BL98,ABP01}]
  \label{thm:workstealing}
  Any computation in the \bfmodel{} that does $W$ work and has $S$
  span can be simulated on $P$ processors of a loosely synchronous
  parallel machine or the CRCW PRAM in \[O\left(\frac{W}{P}+S\right)\]
  time \whp{} in $W$.
\end{theorem}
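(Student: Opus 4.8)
The plan is to instantiate the randomized work-stealing scheduler of \cite{BL98,ABP01} and to bound its running time by separately accounting for time steps in which a processor does useful computation versus time steps in which it searches for work. Each of the $P$ processors maintains a double-ended queue (deque) of ready threads: it executes the thread at the bottom of its own deque, pushing a newly created thread onto the bottom at each \forkins{}, and whenever its deque becomes empty it turns into a \emph{thief} that repeatedly picks a uniformly random victim and attempts to \emph{steal} the thread at the top of the victim's deque. At each time step every processor is doing exactly one of two things: executing an instruction (a \emph{work step}) or making a steal attempt (a \emph{steal step}). The work steps summed over all processors total exactly $W$, contributing $W/P$ to the time once divided by $P$; it therefore remains only to show that the total number of steal steps is $O(PS)$ \whp{} in $W$, which contributes the additive $O(S)$.

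The heart of the argument, and the step I expect to be the main obstacle, is bounding the steal attempts through a potential-function analysis. Assign to each ready node $u$ of the computation DAG a weight $w(u) = S - d(u)$, where $d(u)$ is the depth of $u$ along the longest path from the root, and set its potential to $3^{2w(u)}$ when $u$ sits in a deque and $3^{2w(u)-1}$ when it is being executed; let $\Phi$ be the sum of all node potentials. Initially $\Phi = 3^{2S}$, and the computation is complete once $\Phi = 0$. Two facts combine to drive $\Phi$ down. First, because the scheduler keeps each deque ordered by depth, the top node carries at least a constant fraction of that deque's potential, so a successful steal (and, symmetrically, executing an assigned node) discharges a constant fraction of the associated potential. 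Second, a balls-into-weighted-bins lemma shows that when $P$ thieves each pick a uniformly random victim, with at least constant probability their targets collectively hold a constant fraction of the total deque potential. Together these imply that across any phase of $\Theta(P)$ consecutive steal steps, $\Phi$ shrinks by a constant factor with at least constant probability, independently of the history.

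Finishing is then a concentration step: since $\Phi$ must descend from $3^{2S}$ to $0$ and each phase of $\Theta(P)$ steal steps shrinks it by a constant factor with constant probability, $O(S)$ successful phases suffice, and a Chernoff bound on the number of phases shows that $O(PS)$ steal steps are enough except with probability polynomially small in $W$ (this is precisely where the \whp{}-in-$W$ quantifier enters, the lower-order additive $O(P\log W)$ term being absorbed into the leading terms). Dividing the $W$ work steps and the $O(PS)$ steal steps by the $P$ processors yields the claimed $O(W/P + S)$ time. A final, more delicate point is model-level rather than combinatorial: on a loosely synchronous machine or the CRCW PRAM one must verify that each deque push/pop and each steal attempt takes $O(1)$ amortized time without serializing the processors, and that simultaneous thieves contending for one victim are resolved consistently by the arbitrary concurrent-write convention (or a \tas{}); this implementation faithfulness is what lets the abstract step counts above translate into actual wall-clock time and is where the ``loosely synchronous'' relaxation is needed.
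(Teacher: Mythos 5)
Your proposal is correct and is essentially the argument the paper relies on: the paper proves this theorem only by citation to the work-stealing results of \cite{BL98,ABP01}, and your deque-based scheduler with the $3^{2w(u)}$ potential function, the top-of-deque structural lemma, the balls-into-weighted-bins step, and the Chernoff bound over phases is precisely the Arora--Blumofe--Plaxton analysis those citations refer to. Your observation that the additive $O(\log W)$ term is absorbed is also right, since binary forking forces $W \le 2^{S+1}$ and hence $S = \Omega(\log W)$.
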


\noindent
This is asymptotically optimal (modulo randomization) since the
simulation must require the maximum of $W/P$ (assuming perfect balance
of work) and $S$ (assuming perfect progress along the critical path).
The result is based on a work-stealing scheduler.  A slight variant of
the theorem applies in a more general setting where individual
processors can stop and start~\cite{ABP01} and $P$ is the average
number of processors available.

Importantly, in the other direction, simulating a $p$-processor PRAM,
even the weakest EREW PRAMg requires a
$\Theta(\log p)$ factor loss in span on the \bfmodel{}.  This is a lower bound for any
simulation that is faithful to the synchronous steps since just
forking $p$ parallel instructions (one step on a PRAM) requires at
least $\log p$ steps on the \bfmodel.

\paragraph{\textbf{Relationship to Circuit Models}}
Beyond the PRAM we can ask about the relationship to circuit models and to
bounded space.  Here we use $\NC$ for Nick's class, $\AC$ when allowing
unbounded in-degree, and $\LogSpace$ for logspace~\cite{Borodin77,KarpR90}.  We first
note that $\NC = \BF^*$.  This follows directly from the
PRAM simulations since
$\NC = \PRAM^*$~\cite{KarpR90}.    We also have the following
more fine-grained results.
\ifx\fullversion\undefined We show the proof in the full version of this paper \cite{blelloch2019optimal}.
\else
We show the proof in Appendix \ref{app:complexity}.
\fi
\begin{theorem}
\label{thm:complexity}
\[ \NC^1 \subseteq \LogSpace \subseteq \BF^1 \subseteq \AC^1 = \PRAM_{CRCW}^1 \subseteq \NC^2 \]
\end{theorem}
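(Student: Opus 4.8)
The plan is to establish each link of the chain separately, since they draw on quite different ideas. The three outer links are classical and I would dispatch them quickly; the heart of the statement is $\LogSpace \subseteq \BF^1$, where the whole point is to obtain span $O(\log n)$ rather than the $O(\log^2 n)$ that a naive simulation produces.

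For the classical links I would argue as follows. For $\NC^1 \subseteq \LogSpace$ I would use Borodin's depth-first evaluation of a logspace-uniform bounded-fan-in circuit of depth $O(\log n)$: rather than storing a gate identifier at each of the $O(\log n)$ recursion levels (which would cost $O(\log^2 n)$ space), one stores only the sequence of left/right turns from the root to the current gate ($O(\log n)$ bits total) together with one bit of partial result per pending fan-in-$2$ gate, recomputing gate adjacencies on the fly from the uniformity machine; this runs in $O(\log n)$ space. For $\AC^1 = \PRAM_{CRCW}^1$ I would invoke the classical equivalence between CRCW PRAM time and unbounded-fan-in circuit depth (\cite{KarpR90} and references therein). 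For $\AC^1 \subseteq \NC^2$ I would replace each unbounded-fan-in gate by a balanced binary tree of bounded-fan-in gates, multiplying depth by $O(\log n)$ and turning depth $O(\log n)$ into $O(\log^2 n)$; combined with the previous equality this also yields $\PRAM_{CRCW}^1 \subseteq \NC^2$.

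For $\BF^1 \subseteq \AC^1$ I would route through scheduling. By Theorem~\ref{thm:workstealing}, a computation with polynomial work $W$ and span $S = O(\log n)$ runs on a CRCW PRAM in $O(W/P + S)$ time; choosing $P = W/\log n$ gives $O(\log n)$ time with polynomially many processors, so $\BF^1 \subseteq \PRAM_{CRCW}^1 = \AC^1$. The one point needing care is that Theorem~\ref{thm:workstealing} is randomized (\whp{}); for a worst-case class inclusion I would instead observe that a span-$S$, work-$W$ computation DAG can be simulated deterministically, level by level, on a CRCW PRAM in $O(S)$ time with $W$ processors, identifying the ready instructions at each level, which suffices.

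The hard part is $\LogSpace \subseteq \BF^1$. A deterministic $O(\log n)$-space machine on input of length $n$ has $N = 2^{O(\log n)} = \mathrm{poly}(n)$ configurations and a deterministic successor function, and it accepts iff iterating the successor from the start configuration reaches an accepting configuration within $N$ steps. I would first build the successor array (one entry per configuration, each computable by a single thread in $O(\log n)$ work) in $O(\log n)$ span, and then compute the configuration reached after $N$ steps by pointer jumping over $\lceil \log N \rceil$ rounds. The obstacle is exactly that executing these rounds with a barrier between them costs $O(\log N)$ span per round, hence $O(\log^2 n)$ overall. I would break this barrier using the asynchrony of the \bfmodel{}: lay the $N \times \lceil \log N\rceil$ grid of pointer-jump values out as a dataflow computation in which cell $(i,k)$ depends on cells $(i,k-1)$ and $(v_i^{(k-1)}, k-1)$, and have consumers wait on producers through \tas{}-based futures rather than through global synchronization. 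The longest dependence chain in this grid has length $O(\log N)$ and the spawn tree has depth $O(\log N)$, so the span is $O(\log n)$ while the work stays $O(N \log N)$. The crux of the write-up, and the step I expect to be most delicate, is verifying that each \tas{}-based future force contributes only $O(1)$ to the span per dependence, so that the realized span truly equals the critical path of the grid and not the per-round cost.
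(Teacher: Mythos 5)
Your treatment of the four outer links is sound and essentially matches the paper: $\NC^1 \subseteq \LogSpace$ and $\AC^1 = \PRAM_{CRCW}^1 \subseteq \NC^2$ are the standard arguments, and your deterministic level-by-level PRAM simulation for $\BF^1 \subseteq \AC^1$ (correctly noting that Theorem~\ref{thm:workstealing} is randomized and therefore cannot be invoked directly) is morally the same as the paper's fix, which builds a deterministic circuit by pessimistically treating every instruction as a potential fork, so that $2^{O(\log n)} = \mathrm{poly}(n)$ processor simulators suffice.

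The genuine gap is in $\LogSpace \subseteq \BF^1$, at precisely the point you flagged as delicate: the claim that each \tas{}-based future force adds only $O(1)$ to the span does not hold for your dataflow grid. The second dependence of cell $(i,k)$ is on $(v_i^{(k-1)}, k-1)$, and since the configuration graph is a many-to-one function, a single cell $(j,k-1)$ may be demanded by $\Theta(N)$ cells at level $k$. A \tas{} on a single flag resolves a handoff between exactly one producer and one consumer; with $d$ waiters the producer must notify all of them, and even a perfectly balanced wake-up tree adds $\Omega(\log d)$ to the span of the last-notified consumer. Moreover the set of waiters is not known until level $k-1$ is fully computed (it is exactly the preimage of $j$ under $v^{(k-1)}$), so the notification trees cannot be precomputed without reintroducing a per-level barrier. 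The resulting recurrence is $S(k) \le S(k-1) + O(\log N)$, i.e., $O(\log^2 n)$ span in the worst case (consider a machine whose configurations all funnel into one after a few steps) --- exactly the bound you set out to beat. The paper avoids this entirely by never pointer-jumping: it views the selected transition edges as a forest with accepting configurations at the roots, builds an Euler tour, and runs the deterministic binary-forking Wyllie list contraction of Section~\ref{sec:list} to decide whether the start configuration's root accepts. In that algorithm every synchronization is a two-party race on a single flag, which is the one situation \tas{} resolves in $O(1)$ span, and the fork structure guarantees each fork advances to a strictly higher level, giving $O(\log n)$ span deterministically with $O(N\log N)$ work. To repair your route you would either need a constant-overhead multi-consumer future (which \tas{} alone does not provide) or, more simply, replace the pointer-jumping core with a two-party-synchronization algorithm of this kind.
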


\section{List Contraction}\label{sec:list}
List
ranking~\cite{Wyllie79,Vishkin84,Cole85,AndersonMiller1990,KarpR90,JaJa92,Baase93,Vishkin93,Reid-Miller93,Randade98,jacob2014complexity}
is one of the canonical problems in the study of parallel algorithms.
The problem is: given a set of linked lists, compute for each element
its position in the list to which it belongs.
The problem can be solved by \emph{list contraction},
which contracts a list by following the pointers in the list.
After contraction one can rank the list by a second phase that expands it back out.
The problem has
received considerable attention because of: (1) its fundamental nature
as a pointer-based algorithm that seems on the surface to be
sequential; and (2) it has many applications as a subroutine in other
algorithms.  Wyllie~\cite{Wyllie79} first gave an $O(n \log n)$ work
and $O(\log n)$ time algorithm for the problem on the PRAM over 40
years ago.  This was later improved to a linear work
algorithm~\cite{Cole1988ApproximatePS}.  Although this problem has
been extensively studied, to the best of our knowledge, all existing
linear-work algorithms have ${\Omega}(\log^2 n)$ \spanc{} in the
\bfmodel{} because they are all round-based algorithms and run in
${\Omega}(\log n)$ rounds.  The main result of this section is a
randomized, linear work, logarithmic span algorithm in the \bfmodel{}.
Then we also describe how to adapt Wyllie's
algorithm to the \bfmodel{} to achieve $O(n \log n)$ work and
$O(\log n)$ \spanc{}; while not work optimal, this latter algorithm is
deterministic.  Both algorithms are the first in the \bfmodel{} to
achieve $O(\log n)$ span.

We now present a simple randomized algorithm (Algorithm~\ref{algo:lc}) for list contraction
that is theoretically optimal (linear work, and $O(\log n)$ \spanc{}~\whp{}) in
the \bfmodel{}.
This algorithm is inspired by the list contraction algorithm
in~\cite{shun2015sequential}, but it improves the \spanc{} by $\Theta(\log n)$, and is
quite simple.

The main challenge in designing a work-efficient parallel list
contraction algorithm is to avoid simultaneously trying to splice-out
two consecutive elements.  One solution is via assigning each element
a priority from a random permutation.  An element can be spliced out
only when it has a smaller priority than its previous and next
elements, so the neighbor elements cannot be spliced out
simultaneously.  If the splicing is executed in rounds (namely,
splicing out all possible elements in a round-based manner), Shun
et al.~\cite{shun2015sequential} show that the entire algorithm
requires $\Theta(\log n)$ rounds \whp{}, leading to $\Theta(\log^2 n)$
\spanc{} \whp{} in the \bfmodel{}.  The dependence structure of the
computation is equivalent to a randomized binary tree.  On each round
we can remove all leaf nodes so the full tree is processed in a number
of rounds proportional to the tree depth.  An example is illustrated
in Figure~\ref{fig:list}.

\begin{figure}[t]
\vspace{-.5em}
\begin{center}
  \includegraphics[width=.6\columnwidth]{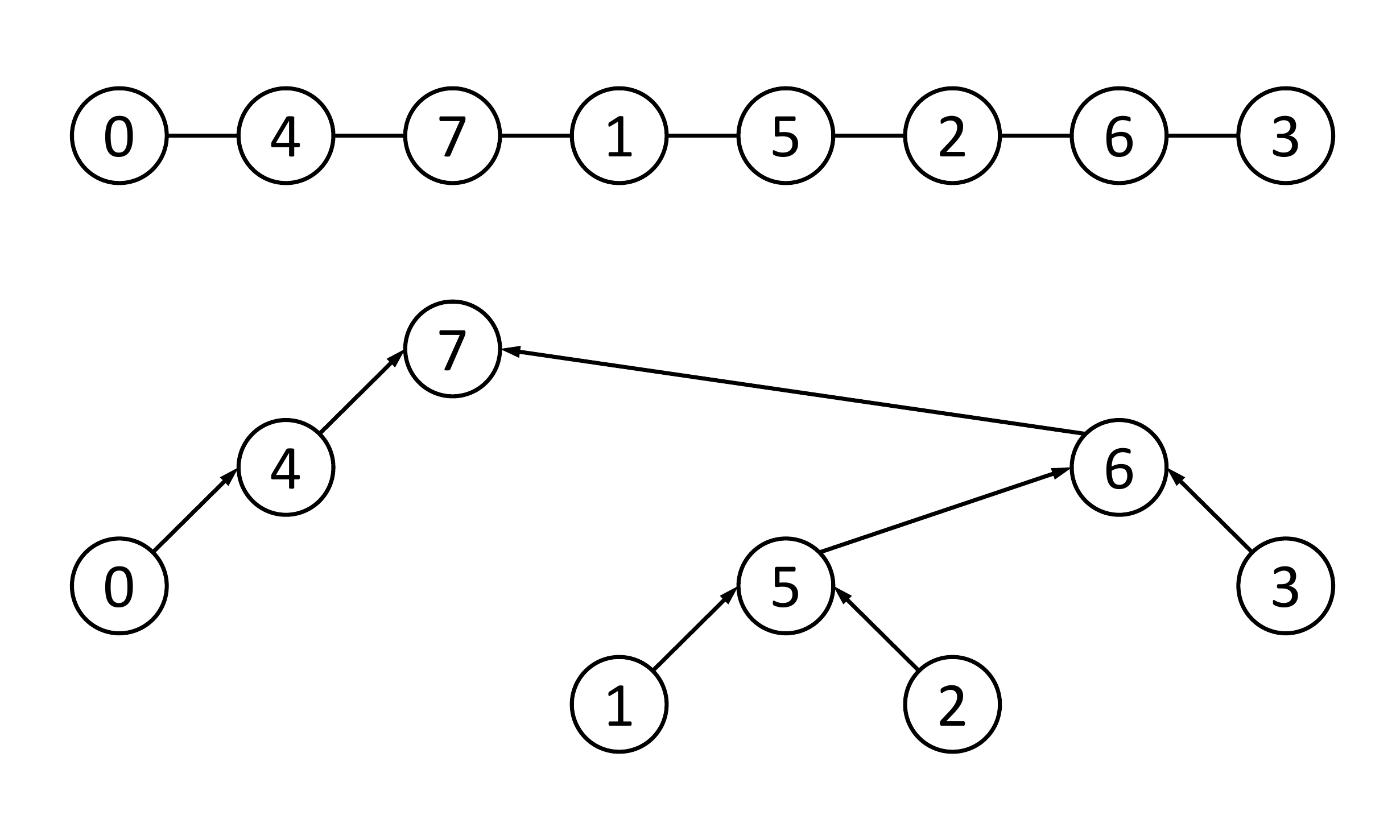}
\end{center}\vspace{-1.5em}
\caption{\small An example of an input list with 8 elements.  The number in each element is the priority drawn from a random permutation.  The dependences of the contractions are shown as a binary tree structure.  In a round-based algorithm~\cite{shun2015sequential}, the execution is in 4 rounds: $\{0,1,2,3\}$, $\{4,5\}$, then $\{6\}$, and finally $\{7\}$.  In Algorithm~\ref{algo:lc}, the execution is asynchronous, and a possible tree-path decomposition is $\{0,4\}$, $\{\varnothing\}$, $\{\varnothing\}$, $\{1\}$, $\{\varnothing\}$, $\{2,5,6,7\}$, $\{\varnothing\}$, and $\{3\}$ for all 8 elements from left to right.  The length of a tree-path is bounded by the tree height.}
\label{fig:list}
\vspace{-1em}
\end{figure}

\begin{algorithm}[t]\small
\caption{$\mf{List-Contraction}(L)$}
\label{algo:lc} \small
\SetKwFor{ParForEach}{parallel foreach}{do}{endfch}
\SetKw{Break}{break}
\KwIn{A doubly-linked list $L$ of size $n$.  Each element $l_i$ has a
  random priority ($l_i.p$), next pointer ($l_i.\mb{next}$), previous
  pointer ($l_i.\mb{prev}$) and flag ($l_i.\mb{flag}$).}
  \DontPrintSemicolon
    \vspace{0.5em}
    \ParForEach {\upshape element $l_i$ in $L$} {
        \vspace{.2em}\tcp{set flag if zero or one child}
        $l_i.\mb{flag}\gets (\pri(l_i)<\pri(l_i.\mb{prev}))$ or
        $(\pri(l_i)<\pri(l_i.\mb{next}))$ \label{line:initbegin}
    }
    \ParForEach {\upshape element $l_i$ in $L$\label{line:listfor}} {
        $c\gets l_i$\\
        \tcp{Execute only if $c$ is a leaf node}
        \If {(\upshape $(\pri(c)<\pri(c.\mb{prev}))$ \textbf{and}
          $(\pri(c)<\pri(c.\mb{next})) $) \label{line:checkleave}}
          {
          \vspace{.2em}\tcp{Stop when list is contracted into one node}
          \While{\textbf{not} ($c.\mb{prev}$ = \emph{null} \textbf{and} $c.\mb{next}$ = \emph{null})}{
            Splice $c$ out ~\label{line:splice}\\
            Let ${c'}$ be $c.\mb{prev}$ or $c.\mb{next}$ with a smaller priority~\label{line:grabparent}\\
            \tcp{If $c$ is not the last child of $c'$, quit}
            \lIf {\upshape $\mf{Test-and-Set}(c'.\mb{flag})$} {\label{line:tscheck}
                \Break\label{line:listbreak}
            }
            $c\gets c'$~\label{line:cparent}
        }
        }
    }
    \vspace{.1in}
\Fn {\upshape$\pri(v)$} {
\lIf*{$v$ = null} {\Return $\infty$} \lElse {\Return $v.p$}
}
\end{algorithm}


After a more careful investigation, we note that the splicing can
proceed asynchronously, and not necessarily based on rounds.  For
example, the last spliced node with priority 7 separates the list into two disjoint
sublists, and the contractions on the two sides are independent and can run
asynchronously.
Conceptually we can do this recursively, and the recursion depth is $\Theta(\log
n)$~\whp{}~\cite{shun2015sequential}.  Unfortunately, we cannot directly
apply the divide-and-conquer approach since $L$ is stored as a linked
list and deciding the elements within sublists is as
hard as the list contraction algorithm itself.

We present our algorithm in Algorithm~\ref{algo:lc}.
Starting from the leaves, Algorithm~\ref{algo:lc} performs equivalent steps to the
algorithm in~\cite{shun2015sequential}, but its \spanc{} is
$\Theta(\log n)$ in the \bfmodel{}; this improvement is achieved by
allowing the splicing in each round to run asynchronously.
The key idea is that, instead
of checking all element for readiness in each round, as long as two children of a node $c$ finished contracting,
we trigger $c$ to start contracting immediately. The child of $c$ that finished later is responsible
for take over $c$, and thus can start immediately.
In particular, in the algorithm, a
parallel-for loop (Line \ref{line:listfor}) generates $n$ tasks (threads) each for a node in the list.
The loop can be
implemented by binary forking for $\log_2 n$ levels.
Only leaf nodes start the execution, and non-leaf nodes
quit immediately (Line \ref{line:checkleave}.
These leaves will splice themselves out (Line~\ref{line:splice}), and then try to move upward and splice its parent (Line \ref{line:grabparent}).
We note that a node $c$ cannot be contracted until both of its children have been spliced out.
Thus we make the child of $c$ that finishes its splicing later to take over $c$.
This is achieved by letting the two children compete through \TAS{} the \emph{flag} field in $c$ (Line \ref{line:listbreak}).
Whichever arrives later takes over and contracts the parent $c'$ (Line \ref{line:cparent}), and the first one simply terminates its computation (Line \ref{line:listbreak}) and let the second one to take continuation. As an example in Figure \ref{fig:list}, the threads for nodes~1 and~2 will
both try to work on node~5 after they finish their first splicings.
They will both attempt to \TAS{} the $\mb{flag}$ of node~5. The one coming first succeeds and terminates, and the later
one will fail and continue splicing node~5.
We initialize the $\mb{flag}$ for each node to be 0, except for those with 0 or 1 child (Line \ref{line:initbegin}), for which we set $\mb{flag}$ directly to 1 (they do not need to wait for two children).


\begin{theorem}
  Algorithm~\ref{algo:lc} for list contraction does $O(n)$ work (worst
  case) and has $O(\log n)$ \spanc{} \whp{} in $n$ in the \bfmodel{}. 
\end{theorem}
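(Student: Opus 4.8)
The plan is to bound the work and the \spanc{} of Algorithm~\ref{algo:lc} separately, routing the \spanc{} analysis through the height of the \emph{dependence tree} induced by the random priorities. Recall that the priorities define a binary tree in which the parent of a node is the smaller-priority neighbor it attaches to when spliced (Line~\ref{line:grabparent}), the root is the globally largest-priority element (which is never a local minimum and so survives until the end), and a node is a tree leaf exactly when it is an initial local minimum. Following \cite{shun2015sequential}, this tree is a random binary tree whose height is $O(\log n)$ \whp{}. I would take this structural fact as the sole source of randomness and treat the remainder of the argument as deterministic given the tree.

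For the work bound I would argue deterministically. The two parallel-foreach loops are implemented by binary forking over $n$ elements, contributing $O(n)$ forks and $O(1)$ work per element, hence $O(n)$. In the contraction phase each element is spliced exactly once, and on the iteration that splices a node the thread performs $O(1)$ work: the splice, locating the smaller-priority neighbor, and one \tas{} on the parent's flag. Since every non-root node contributes exactly one splice and one \tas{}, there are $O(n)$ splices and $O(n)$ \tas{} operations total, and no node is processed by more than one live thread because the loser of each flag contention halts (Line~\ref{line:listbreak}). Summing gives $O(n)$ work, and because none of these counts depend on the priorities, the bound is worst case.

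For the \spanc{} I would show that the critical path of the contraction phase is proportional to the tree height. Let $D(c)$ denote the length of the longest dependence chain ending at the instruction that splices node $c$. The flag mechanism guarantees that $c$ is spliced only by the later of its two children to reach $c$: the first child to \tas{} the flag halts, while the second, observing the flag already set, takes over and splices $c$. Hence in the computation DAG the splice of $c$ depends on both children's splices, so $D(c) = \max(D(\ell), D(r)) + O(1)$ over its children $\ell, r$ (with $D = O(1)$ at a leaf), the $+O(1)$ being valid because the \tas{} is atomic and resolves contention in $O(1)$ \spanc{}. By induction $D(c) = O(\mathrm{height}(c))$, so $D(\mathrm{root}) = O(\log n)$ \whp{}. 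Adding the $O(\log n)$ \spanc{} needed to fork the $n$ starting threads and to set the flags in the first loop yields $O(\log n)$ \spanc{} \whp{}.

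The main obstacle is the \spanc{} argument, specifically justifying the recurrence $D(c) = \max(D(\ell), D(r)) + O(1)$ against the model's asynchrony. I must confirm that the only cross-thread dependence created by a \tas{} is exactly the child-to-parent edge of the dependence tree, that the two children contend on $c$'s flag in $O(1)$ \spanc{} (immediate from atomicity of \tas{}), and that the asynchronous hand-off introduces no other waiting or serialization. Once the critical path is thereby identified with a single root-to-leaf tree path, the remaining content is purely the probabilistic height bound of \cite{shun2015sequential}, which supplies the ``\whp{}''.
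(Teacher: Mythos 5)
Your proposal is correct and follows essentially the same route as the paper's proof: both bound the work by observing that each element is spliced exactly once with $O(1)$ work per splice, and both reduce the \spanc{} to the $O(\log n)$ \whp{} height of the random dependence tree from~\cite{shun2015sequential}. The only cosmetic difference is that you bound the depth of each splice via the recurrence $D(c)=\max(D(\ell),D(r))+O(1)$, whereas the paper equivalently observes that each thread's execution traces a single leaf-to-ancestor path in the tree and is therefore bounded by the tree height.
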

\begin{proof}
The correctness of this algorithm can be shown as it applies the same
operations as the list contraction algorithm
in~\cite{shun2015sequential}, although Algorithm~\ref{algo:lc} runs in a much less synchronous manner.
The execution of each thread corresponds to a tree-path in the dependence structure starting from a leaf node and ending on either the root or when winning a \TAS{}.
A possible decomposition of the example is shown in the caption of Figure~\ref{fig:list}.
This observation also indicates that the number of
iterations of the while-loop on line~\ref{line:tscheck} for any task is $O(\log
n)$ \whp{}, bounded by the tree height.
The \spanc{} is therefore $O(\log n)$ \whp{}.
The work is linear because every time
Line~\ref{line:splice}--\ref{line:listbreak} is executed, one element will be spliced out.
\end{proof}

It is worth noting that, even disregarding the improved \depth{} for
the \bfmodel{}, we believe this algorithm is conceptually simpler and
easier to implement compared to existing linear-work, logarithmic-time
PRAM algorithms~\cite{cole1989faster,AndersonMiller1990}.  Our algorithm requires starting with a random permutation (discussed further in Section~\ref{sec:permutation}).
We note that it is
straightforward to extend the analysis to using integer
priorities instead of random permutations, where
the integer priorities are chosen independently and uniformly from the range $[1,n^k]$, for
$k\ge 2$, with ties broken arbitrarily.


\paragraph{Binary Forking Wyllie.}
Here we outline a binary forking version of Wyllie's algorithm with
$O(n \log n)$ work and $O(\log n)$ span, both in the worst
case.  It is useful for our simulation of logspace in $\BF^1$.  The
idea is to allocate an array of $\log_2 n$ cells per node of the list,
each containing two pointers and a boolean value used for \tas{}.
At the end of the algorithm the two pointers in the $i$-th cell
(level) will point to the element $2^i$ links forward and $2^i$
backward in the list (or a null pointer if fewer
than $2^i$ before or after).  The algorithm initially forks off a
thread for each node at level 1 in the list.  A thread is responsible
for splicing out its link at the current level.  It does this by
writing a pointer to the other neighbor to the corresponding pointer
cells of its two neighbors (i.e., splicing itself out), then doing a
\tas{} on the boolean flag of each neighbor.  For each flag on
which it gets a 1 (i.e., it is the second thread to write the pointer
at this level), it forks a thread to splice out that neighbor at the
next level.  Since this fork at the next level only occurs on the
second update to the node, both links at the next level must already
be available.  In general, each splicing step may create 0, 1, or 2
child threads, depending on the timing of arrivals at the neighbors.
The first and last element in each list must start with its flag set
and writes a null pointer to its one neighbor.  As in Wyllie's original
algorithm, it is easy to keep counts to generate the ranks of each
node in a list.  The total work is proportional to the number of
cells, $O(n \log n)$ since each cell gets processed once.  Since each
fork corresponds to performing a splice at a strictly higher level,
the span is proportional to the number of levels, i.e., $O(\log n)$.

\section{Sorting}
\label{sec:sorting}

In this section we discuss optimal parallel algorithms to comparison
sort and semisort~\cite{Valiant91} $n$ elements using $O(n\log n)$ and
$O(n)$ expected work respectively, and $O(\log n)$ \depth{} \whp{}.
For comparison sort, the best previous work-efficient result in the
binary-forking model requires $O(\log n \log \log n)$
\depth{}~\cite{Cole17}.
In this paper, we discuss a relatively simple algorithm (Algorithm~\ref{algo:sort}) that sorts $n$ elements in $O(n\log n)$ expected work and $O(\log n)$ \depth{} \whp{}.

Our algorithm, given in Algorithm~\ref{algo:sort}, is based on sample sorting~\cite{Frazer70}.
It runs recursively.
In the base case when the subproblem size falls below a constant threshold, it sorts sequentially.
Otherwise, for a subproblem of size $n$, the algorithm selects ${n}^{1/3}\log_2 n$ samples uniformly at random, and uses the quadratic-work sorting algorithms to sort these samples (i.e., by making all pairwise comparisons).
These two steps can be done in $o(n)$ work and $O(\log n)$ \depth{} in the \bfmodel{}.
Then the algorithm subselects every $\log_2 n$-th sample to be a pivot, and use these ${n}^{1/3}$ pivots to partition all elements into ${n}^{1/3}+1$ buckets.

\begin{lemma}\label{lem:nc1}
In the distribution step on Line \ref{line:distribute} in Algorithm \ref{algo:sort}, the number of elements falling into one bucket is no more than $c_1rn^{2/3}$ with probability at least $1-n^{-c_1}$ for certain constant $r$ and any constant $c_1>1$.
\end{lemma}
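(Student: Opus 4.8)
The plan is to use a standard oversampling tail bound. Write $k = \log_2 n$ for the oversampling factor, so that we draw $s = n^{1/3}k$ samples uniformly at random, sort them, and promote every $k$-th sample to a pivot, yielding $n^{1/3}$ pivots that cut the sorted order of the $n$ elements into $n^{1/3}+1$ buckets. Set the target size $L = c_1 r n^{2/3}$ and list the $n$ input elements in sorted order $x_1 \le \cdots \le x_n$. The first step is a purely combinatorial reduction: if some bucket contains more than $L$ elements, then it contains a window of $L$ consecutive elements $x_i,\ldots,x_{i+L-1}$ lying between two consecutive pivots, and such a window can contain only the (at most $O(k)$) samples that fall between those two pivots. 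Hence the event ``some bucket exceeds $L$'' is contained in the event ``some window of $L$ consecutive elements contains at most $k$ samples.''

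Next I would bound the latter event by a union bound over the at most $n$ windows $W_i = \{x_i,\ldots,x_{i+L-1}\}$. Fix a window $W_i$. Each of the $s$ samples lands in $W_i$ with probability $L/n$, so the number $S_i$ of samples in $W_i$ behaves like a binomial with mean $\mu = sL/n = n^{1/3}k \cdot (c_1 r n^{2/3})/n = c_1 r k$. Because $\mu = c_1 r k$ sits a factor of $\Theta(c_1 r)$ above the threshold $k$, a Chernoff lower-tail bound gives $\Pr[S_i \le k] \le \exp(-\Omega(c_1 r k)) = n^{-\Omega(c_1 r)}$, where the hidden constant is absolute, arising only from the Chernoff exponent and the conversion from $\exp$ to a power of $n$ via $k = \log_2 n$. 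A union bound over the $\le n$ windows then yields $\Pr[\text{some bucket} > L] \le n \cdot n^{-\Omega(c_1 r)} = n^{1 - \Omega(c_1 r)}$.

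Finally I would pin down the constant $r$. Writing the per-window exponent as $c_3 c_1 r$ for an absolute constant $c_3$, the failure probability is at most $n^{1 - c_3 c_1 r}$, and we want this to be at most $n^{-c_1}$, i.e. $c_3 c_1 r \ge c_1 + 1$. Since $c_1 > 1$ makes $(c_1+1)/c_1 < 2$, any $r \ge 2/c_3$ works simultaneously for every $c_1 > 1$; this fixes $r$ as a single constant independent of $c_1$, exactly as the statement requires. (If one prefers to read the lemma as a statement about a single bucket, the same argument gives it, since the all-buckets version proved here is strictly stronger.)

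I would flag two places that need care rather than heavy computation. First, the reduction from ``a bucket is too large'' to ``a window is sample-deficient'' must be set up so that the union ranges over a clean family of windows and so that the sample count charged to an overflowing bucket is genuinely $O(k)$; this is where boundary/pivot conventions and ties among elements must be handled, e.g. by breaking ties by index. Second, if the samples are drawn without replacement, the counts $S_i$ are hypergeometric rather than binomial, but the identical lower-tail bound still holds because the sample indicators are negatively associated (equivalently by Hoeffding/Serfling for sampling without replacement); this is the one spot where I would be explicit about which concentration inequality is invoked. Everything else is routine constant-chasing.
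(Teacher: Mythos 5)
Your proposal is correct and matches the paper's intent: the paper's entire justification for this lemma is the single sentence ``This follows from Chernoff bound,'' and your argument is precisely the standard fleshing-out of that claim (reduce an overfull bucket to a sample-deficient window of $L=c_1 r n^{2/3}$ consecutive elements, union-bound over the $\le n$ windows, and apply the Chernoff lower tail with mean $c_1 r\log_2 n$ against threshold $\log_2 n$). Your closing remarks on tie-breaking and sampling without replacement are careful additions the paper does not address, but they do not change the approach.
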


This follows from Chernoff bound. The algorithm then allocates ${n}^{1/3}+1$ arrays, one per bucket, each with size $2c_1rn^{2/3}$.
Then in parallel, each element uses binary search to decide its corresponding bucket.
It then tries to add itself to a random position in the bucket by using a \tas{} on a flag to reserve it. If the \tas{} fails, it tries again since the slot is already taken.
  We limit the number of retries for each element to be no more than $c_2\log_2 n$.
  If any element cannot find an available slot in this number of retries, the algorithm restarts the process from the random-sampling step (Line~\ref{line:sortsample}).
  Otherwise, after all elements are inserted, the algorithm packs the buckets into contiguous
elements for input to the next recursive calls.

\begin{algorithm}[t]
\caption{$\mf{Comparison-Sort}(A)$}
\label{algo:sort}\small
\SetKwFor{ParForEach}{parallel foreach}{do}{endfch}
\SetKw{Break}{break}
  \DontPrintSemicolon
    \vspace{0.5em}
    Let $n=|A|$\\
    \lIf {\upshape $n$ is a constant} {
        Sort the base case and \textbf{return}
    }
    Randomly select ${n}^{1/3}\log_2 n$ samples\label{line:sortsample}\label{line:sample}\\
    Use quadratic sorting algorithm to sort the samples\\
    Subsample ${n}^{1/3}$ pivots from the samples~\label{line:subsample}\\
    Distribute all elements in $A$ to $n^{1/3}+1$ buckets based on the
    samples (to form a partition of $A$ to $A_0, A_1,\ldots,
    A_{n^{1/3}}$).  If failed, restart from Line~\ref{line:sortsample}\label{line:distribute}\\
    \lParForEach {\upshape $i\gets 0$ to $n^{1/3}$} {
        $\mf{Comparison-Sort}(A_i)$
    }
\end{algorithm}

\begin{theorem}\label{thm:sort}
  Algorithm~\ref{algo:sort} sorts $n$ elements in $O(n\log n)$ expected work and $O(\log n)$ \depth{} \whp{} in the \bfmodel{}.
\end{theorem}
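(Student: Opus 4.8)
The plan is to analyze work and span separately over the recursion tree induced by Algorithm~\ref{algo:sort}, exploiting that each subproblem of size $m$ spawns subproblems of size at most $c_1rm^{2/3}$ (Lemma~\ref{lem:nc1}), so that $\log$ of the subproblem size contracts geometrically across levels and the recursion depth is only $O(\log\log n)$. First I would establish that a single invocation on a subproblem of size $m$, counting one successful attempt but not its recursive calls, has span $O(\log m)$ and local work $O(m\log m)$ in the \bfmodel{}. For the span: drawing the $m^{1/3}\log m$ samples and forking the $m^{1/3}+1$ recursive calls are parallel-for loops implementable by binary forking in $O(\log m)$ span; the quadratic sort of the samples computes each sample's rank by a parallel count-and-reduce over $m^{1/3}\log m$ elements in $O(\log m)$ span; each element's binary search over the $m^{1/3}$ pivots is $O(\log m)$ sequential comparisons; the capped test-and-set insertion uses at most $c_2\log_2 m$ retries, each $O(1)$; and packing the buckets is a prefix sum of span $O(\log m)$. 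For the local work, the binary search dominates at $\Theta(m\log m)$, while sampling and quadratic sort are $o(m)$ and insertion and packing are $O(m)$ in expectation.

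Next I would bound the two failure modes of the distribution step and their restarts. Bucket overflow is controlled by Lemma~\ref{lem:nc1} with a union bound over the $m^{1/3}+1$ buckets. For insertion, the key observation is that each bucket array has size $2c_1rm^{2/3}$ but holds at most $c_1rm^{2/3}$ elements whenever no bucket overflows, so at every instant a uniformly random slot is occupied with probability at most $1/2$; hence each retry succeeds with probability at least $1/2$, an element exhausts its $c_2\log_2 m$ retries with probability at most $m^{-c_2}$, and a union bound over the $m$ elements bounds the per-attempt restart probability by $m^{-\Omega(1)}$. Since this is bounded away from $1$ (ensuring a large enough base-case threshold), the expected number of attempts per node is $O(1)$, leaving expected local work and span unchanged up to constants. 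Summing expected local work over the tree, level $\ell$ contains subproblems of total size $n$ with $\log(\text{size})=O((2/3)^\ell\log n)$, giving level work $O((2/3)^\ell\,n\log n)$, which is geometric and sums to $O(n\log n)$; this yields the expected-work bound.

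The final and hardest step is the high-probability span bound. Ignoring restarts, the span satisfies $S(m)=O(\log m)+S(c_1rm^{2/3})$, which telescopes to $\sum_\ell O((2/3)^\ell\log n)=O(\log n)$, the $O(1)$ additive terms over the $O(\log\log n)$ levels contributing only $O(\log\log n)$. The obstacle is accounting for restarts \whp{}\ in $n$: a node of size $m$ restarts with probability only $m^{-\Omega(1)}$, which is weak for the small subproblems at deep levels, and there are up to $\Theta(n)$ such nodes, so a naive ``no restart anywhere'' union bound fails. I would instead bound the span along each root-to-leaf path, which meets one node per level; the restart overhead on a fixed path is $\sum_\ell \Theta((2/3)^\ell\log n)\,R_\ell$, with $R_\ell$ independent geometric counts of mean $m_\ell^{-\Omega(1)}$, whose expectation works out to $o(\log n)$. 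The key cancellation is that $\Pr[(2/3)^\ell\log n\cdot R_\ell \ge \beta\log n]=\Pr[R_\ell\ge\beta(3/2)^\ell]=m_\ell^{-\Omega(\beta(3/2)^\ell)}=n^{-\Omega(\beta)}$, independent of the level $\ell$; feeding this into a Chernoff-type bound on the weighted sum should show each path has overhead $O(\log n)$ except with probability $n^{-\Omega(1)}$, after which a union bound over the $O(n)$ paths gives total span $O(\log n)$ \whp{}. Making this concentration precise, and verifying that restart counts do not correlate badly across shared path prefixes, is the main technical difficulty.
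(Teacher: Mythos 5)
Your proposal is correct and follows essentially the same route as the paper: expected work is bounded by showing the per-level work decreases geometrically and is dominated by the root, and the \whp{} span bound is obtained by analyzing each root-to-leaf path of the recursion as a weighted sum of geometric restart counts (the paper packages exactly your concentration step as Lemma~\ref{lem:sort}, with weights $k^i$ and success probabilities $1-2^{-k^i}$) followed by a union bound over the $n$ paths. The cross-path correlation you flag as a difficulty is a non-issue, since the union bound requires no independence between paths.
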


To bound span, we need to consider the number of retries and the cost
of each retry along any path to a leaf in the recursion tree.  The
number of retries is upper bounded by a geometric distribution since
each retry is independent, but the probability of that distribution
depends on the level of recursion since problem sizes get smaller.
Furthermore the span of a try also
depends on the level of recursion (it is bounded by $O(\log n_i)$,
where $n_i$ is the input size of level $i$).  To help analyze the span,
we will use the following Lemma.

\begin{lemma}\label{lem:sort}
  Let $X_1\,\cdots,X_m$ be independent geometric random variables, and $X_i$ has success probability $p_i=1-2^{-k^i}$ where $k>1$ is a constant.  Then $\sum_{i=1}^{m}{k^i\cdot X_i}\le O(c k^m)$ holds with probability at least $1-2^{-ck^m}$ for any given constant $c\ge 1$.
\end{lemma}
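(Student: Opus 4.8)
The plan is to bound the upper tail of $S := \sum_{i=1}^m k^i X_i$ by the exponential-moment (Chernoff) method, exploiting that the failure probabilities $q_i := 1 - p_i = 2^{-k^i}$ decay doubly-exponentially in $i$, so each $X_i$ is overwhelmingly likely to sit at its minimum value. Fix a constant $t$ with $0 < t < \ln 2$ and set $\delta := \ln 2 - t > 0$. By Markov's inequality applied to $e^{tS}$ together with independence of the $X_i$,
\[
  \Pr[S \ge T] \le e^{-tT}\prod_{i=1}^m \mathbb{E}\!\left[e^{t k^i X_i}\right].
\]
The first point to check is that each factor is finite. Writing $\lambda_i := t k^i$, the geometric moment generating function $\mathbb{E}[e^{\lambda_i X_i}] = p_i e^{\lambda_i}/(1 - q_i e^{\lambda_i})$ (and its $X_i \ge 0$ variant) exists exactly when $q_i e^{\lambda_i} < 1$; here $q_i e^{\lambda_i} = 2^{-k^i} e^{t k^i} = e^{-\delta k^i} < 1$ for every $i \ge 1$, uniformly, precisely because $t < \ln 2$. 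Using a single constant $t$ across all levels, rather than an $i$-dependent one, is the crux.

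Next I would split the product into a numerator and a denominator piece. The numerator satisfies $\prod_i p_i e^{\lambda_i} \le e^{t\sum_i k^i}$, and since $\sum_{i=1}^m k^i = (k^{m+1}-k)/(k-1) \le \frac{k}{k-1}\,k^m$ is a geometric series, this is $\exp(O(t k^m))$ with a constant depending only on $k$. The denominator is $\prod_{i=1}^m (1 - e^{-\delta k^i})$: because $k>1$ the exponents $k^i$ grow geometrically, so $\sum_{i \ge 1} e^{-\delta k^i}$ converges and the infinite product $\prod_{i \ge 1}(1 - e^{-\delta k^i})$ converges to a constant $C_0 = C_0(k,t) \in (0,1)$. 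As every factor lies in $(0,1)$, the finite product is at least $C_0$, whence
\[
  \prod_{i=1}^m \mathbb{E}\!\left[e^{\lambda_i X_i}\right] \le C_0^{-1}\exp\!\left(t\,\frac{k}{k-1}\,k^m\right).
\]

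Finally I would fix the threshold. Combining the bounds gives $\Pr[S \ge T] \le C_0^{-1}\exp(-tT + t\frac{k}{k-1}k^m)$. Taking $T = A c k^m$ and logarithms, the target $\Pr[S \ge T] \le 2^{-ck^m}$ reduces to $A \ge \frac{\ln 2}{t} + \frac{1}{c}\cdot\frac{k}{k-1} + \frac{\ln(1/C_0)}{t c k^m}$; since $c \ge 1$ and $k^m \ge k > 1$, the right-hand side is bounded by an absolute constant depending only on $k$ (e.g.\ $t = \frac12 \ln 2$ makes $\frac{\ln 2}{t} = 2$), so a fixed $A = A(k)$ works for all $c \ge 1$ and all $m$. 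This gives $T = O(c k^m)$ with a hidden constant independent of $c$, as claimed. I expect the main obstacle to be exactly this uniform bookkeeping: confirming that the infinite product stays bounded away from $0$ independently of $m$ (where the geometric growth of $k^i$ is essential) and that a single $t < \ln 2$ keeps every per-level MGF finite. The weighting by $k^i$ and the doubly-exponential decay of $q_i = 2^{-k^i}$ are what let each level contribute only an $O(k^i)$ term to the exponent and a bounded factor to the product, producing the clean $2^{-ck^m}$ tail at threshold $O(ck^m)$.
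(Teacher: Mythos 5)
Your proof is correct, but it takes a genuinely different route from the paper's. The paper avoids moment generating functions entirely: it couples each term $k^i X_i$ with a sequence of fair coin flips --- a ``failure'' of the $i$-th geometric (probability $2^{-k^i}$) is exactly the event of $k^i$ consecutive heads, so $k^i X_i$ is dominated by the number of heads seen before the first tail, and the whole sum is dominated by the number of heads before the $m$-th tail. A single lower-tail Chernoff bound on the number of tails among $(q+1)k^m$ tosses (using $m \le k^m$) then gives the $2^{-\Omega(q k^m)}$ tail. You instead run the exponential-moment method directly on the weighted sum $S=\sum_i k^i X_i$ with one uniform parameter $t<\ln 2$: the key points you identify are exactly the right ones --- $q_i e^{tk^i}=e^{-(\ln 2 - t)k^i}<1$ keeps every per-level MGF finite, $\sum_i k^i$ is geometric so the numerators contribute only $\exp(O(tk^m))$, and $\sum_i e^{-\delta k^i}<\infty$ makes the product of denominators bounded below by a constant $C_0(k,t)>0$ independent of $m$. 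The final bookkeeping (choosing $A$ uniformly in $c\ge 1$ and $m$) is also handled correctly, and your bound is in fact uniform in $c$, which is slightly stronger than the literal statement. What each approach buys: the paper's coupling is shorter and reduces everything to one standard binomial tail bound, at the cost of a somewhat informal domination argument (``we can only overestimate the sum''); your argument is more mechanical but fully self-contained, makes the dependence of the constants on $k$ and $t$ explicit, and does not require the negative-binomial reformulation.
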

\begin{proof}
  We view the contribution from each term $k^i\cdot X_i$ to the sum based on a series of independent unbiased coin flips.
  The term $k^i\cdot X_i$ can be considered as the event that we toss $k^i$ coins simultaneously, and if all $k^i$ coins are heads we charge $k^i$ to the sum and this process repeats (corresponding to the geometric random variable with probability $p_i = 1 - 2^{-k^i}$).
  However, in this analysis, we toss one coin at a time until we get a tail, and we charge 1 to the sum for each head before the tail.
  In this way we can only overestimate the sum.
  Hence, $\sum_{i=1}^{m}{k^i\cdot X_i}$ can be upper bounded by the number of heads when tossing an unbiased coin until we see $m$ tails.
  We use Chernoff bound\footnote{\href{https://en.wikipedia.org/wiki/Chernoff_bound}{https://en.wikipedia.org/wiki/Chernoff\_bound}.}
  $\Pr(X\le(1-\delta)\mu)\le e^{-\delta^2\mu/2}$, where $X$ is the sum of indicator random variables, and $\mu=\E[X]$.
  Now let's consider the probability that we see more than $qk^m$ heads before $m$ tails.
  Since $m<k^m$, we analyze the probability to see no more than $k^m$ tails, which only increases the probability.
  In this case, we make $(q+1)k^m$ tosses, so $\mu=(q+1)k^m/2$ and $\delta=(q-1)/(q+1)$.
  The probability is therefore no more than:
\begin{align*}
  &\exp\left[-{\left(\frac{q-1}{q+1}\right)^2\cdot\frac{(q+1)k^m}{4}}\right] = \exp\left[-\frac{\left(q-1\right)^2k^m}{4(q+1)}\right]\\
  =&\exp\left[{-\frac{(q^2-2q+1)k^m}{4(q+1)}}\right]<\exp\left[{-\frac{(q^2-3q-4)k^m}{4(q+1)}}\right]\\
  =&\exp\left[-\frac{(q+1)(q-4)k^m}{4(q+1)}\right]=e^{-\left(\frac{q}{4}-1\right)k^m}<2^{-\left(\frac{q}{4}-1\right)k^m}
\end{align*}
  This proves the lemma by setting $q=4(c+1)$.
\end{proof}

\begin{proof}[Proof of Theorem~\ref{thm:sort}]
  The main challenge is to analyze the work and \depth{} for the distribution cost (Line~\ref{line:distribute}), especially to bound the cost of restarting the distribution step.
  There are two reasons that the call return to Line~\ref{line:sample}: badly chosen pivots such that some buckets contain too many elements and become overfull (defined later), or unlucky random number sequences such that the positions tried by a particular element are all occupied (for more than $c_2\log n$ consecutive slots).
  We say a bucket is overfull if it has more than $c_1rn^{2/3}$ elements (more than half of the allocated space).
  From Lemma \ref{lem:nc1}, the probability of this event is no more than $n^{-c_1}$.
  We pessimistically assume that the distribution step restarts once a bucket is overfull.
  Therefore, for the latter case with a bad random number sequence, the allocated array is always no more than half-full, which is useful in analyzing this case.

  We now analyze the additional costs for the restarts.
  For the latter case, with probability at most $n^{1-c_2}$, at least one element retries more than $c_2\log_2 n$ times.
  For the first case, the probability that any bucket is oversize is $n^{1-c_1}$.
  By setting $c_1$ and $c_2$ to be at least 2, the expected work including restarts is asymptotically bounded by the first round of selecting pivots and distributing the elements.  The work of the first round is bounded by $O(n \log n)$ since there are $n$ elements, each doing a binary search and then each trying at most $O(\log n)$ locations.  Therefore the expected work for each distribution
is $O(n_i \log n_i)$, where $n_i$ is the size of the input.

The total number of elements across any level of recursion is at most $n$ since every element goes to at most one bucket.    Also the size of each input reduces to at most $k n^{2/3}$ from level to level, for
some constant $k$.      The total expected work across each level of recursion therefore decreases geometrically from level to level.      Hence the total work is asymptotically bounded
by the work at the root of the recursion, which is $O(n \log n)$ in expectation.

  We now focus on the \depth{}, and first analyze the case for the chain of subproblems for one element.
  The number of recursive levels is $O(\log \log n)$. For each level with subproblem size $n'$,
  let $c=c_1=c_2\ge 2$. The probability for a restart is less than $2(n')^{1-c}$, and the \depth{} cost for a restart is $c\log_2 n$.
  Treating the number of restarts in each level as a random variable, we can plug in Lemma~\ref{lem:sort} with $k=1.5$ and $m=\log_{1.5}\log_2 n$, and show that the \depth{} of this chain is $O(ck^m)=O(c\log n)$ with probability at least $1-2^{-c\log n}=1-n^{-c}$.
  Then by taking a union bound for the $n$ chains to all leaves of the recursion, the probability is at least $1-n^{1-c}$.
  Combining the analyses of the work and the \depth{} proves the theorem.
\end{proof}



\myparagraph{Semisorting.}
Semisorting reorders an input array of $n$ keys such that equal keys are contiguous but different keys are not necessarily in sorted order.
It can be used to implement integer sort with a small key range.
Semisorting is a widely-used primitive in parallel algorithms (e.g., the random permutation algorithm in Section~\ref{sec:permutation}).

We note that with the new comparison sorting algorithm with optimal work and \depth{}, we can plug it in the semisorting algorithm by Gu et al.~\cite{gu2015top} (Step 3 in Algorithm 1).
The rest of the algorithm is similar to the distribution step but just run for one round, so it naturally fits in the \bfmodel{} with no additional cost.
Hence, this randomized algorithm is optimal in the \bfmodel{}---$O(n)$ expected work and $O(\log n)$ \depth{}~\whp{}.


\newcommand{\boundcontent}{m \log \left( \frac{n}{m}+1 \right)}
\newcommand{\bound}{O\left( \boundcontent \right)}
\newcommand{\tchunk}{chunk}
\newcommand{\sketch}{sketch of the tree}

\newcommand{\upper}{upper node}

\newenvironment{proofsketch}{%
  \renewcommand{\proofname}{Proof Sketch}\proof}{\endproof}

\section{Ordered Set-set Operations}
\label{sec:setset}

In this section, we show deterministic algorithms for ordered set-set operations (\union{}, \intersection{} and \difference{}) based on weight-balanced binary search trees. 
In particular we prove the following theorem.

\begin{theorem} 
  \label{thm:setset}
\union{}, \intersection{} and \difference{} of two ordered sets of size $n$ and $m<n$ can be solved in $\bound$ work and $O(\log n)$ span in the \bfmodel{}.    This is optimal for comparison-based algorithms.
\end{theorem}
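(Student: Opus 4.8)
The plan is to handle all three operations in a single split/join framework on weight-balanced binary search trees, presenting \union{} in detail since \intersection{} and \difference{} differ only in which separator keys are kept and in the base cases. Write $T_1$ for the larger input (size $n$) and $T_2$ for the smaller (size $m$). A textbook recursive scheme---expose the root $k$ of $T_2$, \texttt{split} $T_1$ at $k$, fork the two sub-unions in parallel, and \texttt{join} the results around $k$---already attains the optimal work $\bound$ by the standard convexity argument (the splits cut $T_1$ into $m+1$ pieces of sizes $s_i$ with $\sum_i s_i \le n$, and $\sum_i \log(s_i+1)$ is maximized, by concavity of $\log$, when the pieces are balanced). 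Its span, however, is only $O(\log m \cdot \log n) = O(\log^2 n)$, because there are $\Theta(\log m)$ recursion levels, each performing a \texttt{split} and a \texttt{join} that can individually cost $\Theta(\log n)$. Removing this multiplicative blow-up is the crux of the theorem.

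To obtain $O(\log n)$ span I would not recurse naively on $T_2$ but instead decompose $T_1$ into a two-level structure: a \sketch{} consisting of $O(m)$ \upper{}s (the topmost part of the weight-balanced tree), beneath which hang $\Theta(m)$ \tchunk{}s, each a subtree of size $\Theta(n/m)$. The \sketch{} has $O(m)$ nodes and depth $O(\log m)$, while each \tchunk{} has depth $O(\log(n/m))$. The algorithm then proceeds in three composable phases: (i) merge $T_2$ against the \sketch{} by a balanced parallel merge of two structures of size $O(m)$, costing $O(m)$ work and $O(\log m)$ span and determining, for each element of $T_2$, the \tchunk{} into which it falls; (ii) route the elements of $T_2$ to their target \tchunk{}s and perform, in parallel across all \tchunk{}s, the union of the routed elements into the size-$\Theta(n/m)$ chunk, at total work $\bound$ (again by convexity over the per-chunk loads) and span $O(\log(n/m))$ per chunk; and (iii) reattach the updated \tchunk{}s under the updated \sketch{} and rebalance, in $O(\log n)$ span.

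The main obstacle is the span, and the decisive point is that the \sketch{}-level and \tchunk{}-level costs \emph{add} rather than multiply: phase (i) contributes $O(\log m)$, phases (ii)--(iii) contribute $O(\log(n/m))$, and since $\log m + \log(n/m) = \log n$ the total is $O(\log n)$. The one subtlety I must address is an adversarial load: if many elements of $T_2$ fall into a single \tchunk{}, the per-chunk union in phase (ii) is itself a nontrivial set operation and must be recursed upon. I would choose the \tchunk{} size so that each level of this recursion shrinks the large-side size from $N$ to roughly $N^{1-\epsilon}$ for a constant $\epsilon$; then the span contributed across the $O(1/\epsilon)$ levels is $O(\log N + \log N^{1-\epsilon} + \cdots)$, a geometric series summing to $O(\log n)$, so the recursion telescopes rather than accumulating a factor equal to its depth. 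Verifying that the weight-balance invariants are maintained through the \sketch{}/\tchunk{} split and the final reattachment, and that the per-chunk work still sums to $\bound$ under arbitrary loads, is the remaining technical work.

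Finally, optimality among comparison-based algorithms follows from an information-theoretic bound: any comparison algorithm computing \union{} must in particular determine the interleaving of the two sorted key sequences, of which there are $\binom{n+m}{m}$, so at least $\log_2 \binom{n+m}{m} = \Omega\!\left(\boundcontent\right)$ comparisons are required in the worst case, matching the upper bound.
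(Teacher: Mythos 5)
Your overall instinct---decompose so that the sketch-level and chunk-level spans \emph{add} rather than multiply, and telescope the recursion---is the same one driving the paper's algorithm, but as stated your parameter choice cannot simultaneously deliver the work and the span bounds, and this is precisely the tension the paper's construction is built to resolve. With $\Theta(m)$ \tchunk{}s of size $\Theta(n/m)$, one level of recursion shrinks the large side from $N$ to $N/m$, which is \emph{not} $N^{1-\epsilon}$ for any constant $\epsilon$ unless $m\ge N^{\epsilon}$; in the adversarial case where essentially all of $T_2$ lands in one \tchunk{} at every level, you get $\Theta(\log_m n)$ levels whose per-level spans decay arithmetically, not geometrically, for a total span of $\Theta(\log^2 n/\log m)$ (e.g.\ $\Theta(\log^2 n)$ when $m=O(1)$). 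If you instead enlarge the fan-out to $N^{\epsilon}$ chunks to force polynomial shrinkage, the \sketch{} has $N^{\epsilon}$ nodes and merely touching it costs $\Omega(N^{\epsilon})$ work, which exceeds $\bound$ whenever $m\ll n^{\epsilon}$ (consider $m=1$). The paper escapes this by (a) splitting \emph{both} trees with $d-1$ pivots chosen by global rank among all $m+n$ elements, with $d=\Theta(\sqrt{m+n})$, so the span recurrence is $T(m+n)=O(\log(m+n))+T(\sqrt{m+n})=O(\log(m+n))$; (b) cutting the recursion off early, switching to a deliberately work-\emph{inefficient} $O(m'\log n')$ base case as soon as $m'<\sqrt{n'+m'}$, with a convexity lemma showing all base cases together cost only $\bound$; and (c) observing that only $O(m)$ pivots are ever created because any branch whose small-side input is empty terminates immediately. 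Your proposal is missing this cutoff-plus-accounting mechanism, and without it (or something equivalent) the two bounds cannot both be met.

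A second, independent gap is the rebalancing. For \intersection{} and \difference{} the combined tree can be arbitrarily unbalanced after the per-chunk operations (whole subtrees may essentially vanish), so ``reattach and rebalance in $O(\log n)$ span'' with a constant number of rotations per level does not work, and it is not obvious that rebalancing fits in $\bound$ work. The paper needs a separate reconstruction-based rebalancing pass---flatten and rebuild any subtree whose children's effective sizes differ by more than a constant factor---together with an amortized charging argument: a subtree is reconstructed only if it contains $\Omega(1)$ \upper{}s of $T_S$-elements per node, and there are only $\bound$ such \upper{}s in total. Your lower-bound paragraph (the $\log_2\binom{n+m}{m}$ counting argument) is correct and matches the paper's citation.
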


Our approach is based on a (roughly $\sqrt{n}$-way) divide-and-conquer algorithm with lazy reconstruction-based rebalancing. At a high-level, for two sets of size $n$ and $m\,(\le n)$, we will split both trees with $d-1$ pivots equally distributed among the $m+n$ elements, where $d=\Theta(\sqrt{m+n})$ is a power of 2.
The algorithm runs recursively until the base case when $m'\le \sqrt{m'+n'}$, where $m'$ and $n'$ are the sizes of the two input trees in the current recursive call, respectively. For the base cases, we apply a weaker (work-inefficient) algorithm discussed in
\ifx\fullversion\undefined the full version of this paper \cite{blelloch2019optimal}.
\else
Appendix~\ref{sec:set:inefficient}.
\fi
The work-inefficient approach will not affect the overall asymptotic bound because of the criterion at which the base cases are reached.
After that, the $d$ pieces are connected using the pivots.
At this time, rebalancing may occur, but we do not handle it immediately.
Instead, we apply a final step at the end of the algorithm to recursively rebalance the output tree based on a reconstruction-based algorithm discussed in Section~\ref{sec:set:rebalance}.
The high-level idea is that, whenever a subtree has two children imbalanced by more than some constant factor (i.e., one subtree is much larger than the other one), the whole subtree gets flattened and reconstructed.
Otherwise, the subtree can be rebalanced using a constant number of rotations.
An illustration of our algorithm is shown in Figure~\ref{fig:setalgorithm}.
\ifx\fullversion\undefined Due to page limitation, we put the algorithm description of base case algorithms, and the cost analysis of the algorithm in the full version of the paper \cite{blelloch2019optimal}, and only briefly show some intuition in Section \ref{sec:set:outline}.
\else
Due to page limitation, we put the algorithm description of base case algorithms (Appendix \ref{sec:set:inefficient}), and the cost analysis of the algorithm (Appendix \ref{sec:set:proof}) in the appendix, and only briefly show some intuition in Section \ref{sec:set:outline}.
\fi

\subsection{Background and Related Work}
\label{sec:set:related}

Ordered set-set operations \union{}, \intersection{} and \difference{} are fundamental algorithmic primitives, and there is a rich literature of efficient algorithms to implement them.
For two ordered sets of size $n$ and $m\le n$, the lower bound on the number of comparisons (and hence work or sequential time) is $\Omega\left(\boundcontent\right)$~\cite{hwang1972simple}.    The lower bound on span in the \bfmodel{} is $\Omega(\log n)$.
Many sequential and parallel algorithms match the work bound~\cite{Blelloch1998,blelloch2016just,akhremtsev2016fast,brownT80}.   In the parallel setting, some algorithms achieve $O(\log n)$ span on the PRAM~\cite{PVW83,PP01}. However, they are not work-efficient, requiring $O(m \log n)$
work. There is also previous work focusing on I/O efficiency \cite{bender2019small} and concurrent operations \cite{fatourou2019persistent,braginsky2012lock} for parallel trees, and parallel data structures supporting batches \cite{gilbert2019parallel,workingset,milman2018bq}.

Some previous algorithms achieve optimal work and polylogarithmic span.
Blelloch and Reid-Miller proposed algorithms on treaps with optimal expected work and $O(\log
n)$ span~\whp{} on an EREW PRAM with scan operations, which translates to $O(\log^2 n)$ span in the \bfmodel{}.  Akhremtsev and Sanders~\cite{akhremtsev2016fast} described an algorithm for array-tree \union{} based on $(a,b)$-trees with optimal work and $O(\log n)$
span on a CRCW PRAM.
Blelloch et al.~\cite{Blelloch1998} proposed ordered set algorithms for a variety of balancing schemes \cite{blelloch2016just} with optimal work.
All the above-mentioned algorithms have $O(\log m \log n)$ span in the \bfmodel{}. There have also been parallel bulk operations for self-adjusting data structures~\cite{workingset}.
As far as we know, there is no parallel algorithm for ordered set functions (\union{}, \intersection{} and \difference{}) with optimal work and $O(\log n)$ span in the \bfmodel{}.

\begin{figure*}
    \centering
  \vspace{-.5em}
  \includegraphics[width=0.9\columnwidth]{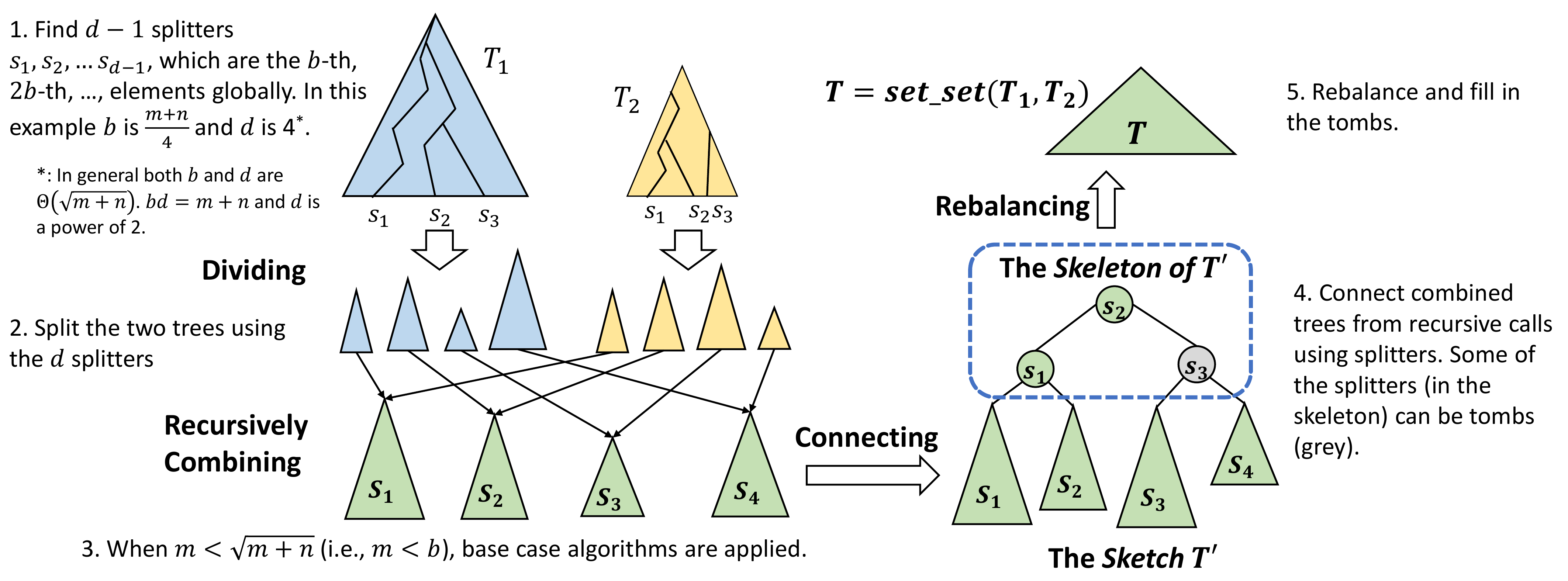}
  \vspace{-1em}
  \caption{\small An illustration of the set-set algorithms. We first split both trees into \tchunk{s} by the glaobally $b$-th, $2b$-th, ..., elements. Here $b=\frac{m+n}{4}$, but in general $b=(n+m)/d$ should be $\Theta(\sqrt{m+n})$ where $d=\Theta(\sqrt{m+n})$ is a power of 2. We then recursively sketch each pair of \tchunk{s}, until we reach the base case and call the base case algorithms. We then connect the results with pivots, and get the \emph{sketch} of the result tree. Finally we rebalance the tree structure and fill in all tombs.}\label{fig:setalgorithm}
  \vspace{-.5em}
\end{figure*}

\subsection{Preliminaries}\label{sec:set:prelim}
Given a totally ordered universe $U$, the problem is to take the union, intersection, and difference of two subsets of $U$.  We assume the comparison model over the \emph{elements} of $U$, and require that the inputs and outputs can be enumerated in-order with no additional comparison (i.e., no cheating by being lazy).

We assume the two inputs have sizes $m$ and $n\ge m$ stored in weight-balanced binary trees~\cite{nievergelt1973binary} with balancing parameter $\alpha$ (WBB[$\alpha$] tree). The weight of a subtree is defined as its size plus one, such that the weight of a tree node is always the sum of the weights of its two children.
WBB[$\alpha$] trees maintain the invariant that for any two subtrees of a node, the weights are within a factor of $\alpha$ ($0 < \alpha \leq 1-1/\sqrt{2}$) of each other.
For the two input trees, we refer to the tree of size $n$ as the \emph{large tree}, denoted as $T_L$, and the tree of size $m$
as the \emph{smaller tree}, denoted as $T_S$. We present two definitions as follows.
Note that these two definitions are more general than the definitions of ancestors and descendants, \emph{since $k$ may or may not appear in $T$}.

\begin{definition}
In a tree $T$, the \emph{\upper{s}} of an element $k \in U$, are all the nodes in $T$ on the search path
to $k$ (inclusive).
\end{definition}

\begin{definition}
In a tree $T$, an element $k \in U$ \emph{falls into} a subtree $T_x\in T$, if the search path to $k$ in $T$ overlaps the subtree $T_x$.
\end{definition}



\paragraph{Persistent Data Structures.} In this section, we use underlying \emph{persistent}~\cite{persistence} (and actually purely functional) tree structure, which uses path-copying to update the weight-balanced trees.  This means that when a change is
made to a node $v$ in the tree, a copy of the path to $v$ is made, leaving the old path and old value of $v$ intact.   Figure \ref{fig:persistence} shows an example of inserting a new element into the tree.  Such a persistent insertion algorithm also copies nodes that are involved in rotations since their child pointers change.

In particular, our algorithm will use a persistent \mf{Split}$(T,k)$ function on WBB$[\alpha]$ trees as discussed in \cite{blelloch2016just,pam}. This function splits tree $T$ by key $k$ into two trees and a bit, such that all keys smaller than $k$ and larger than $k$ will be stored the two output trees, respectively, and the bit indicates if $k\in T$. Because of path-copying, the persistent \mf{Split} returns two output trees and leaves the input tree intact. This algorithm costs $O(\log n)$ work on a tree of size $n$.

\begin{figure}
\begin{minipage}{.35\columnwidth}
    \centering
  \includegraphics[width=.9\columnwidth]{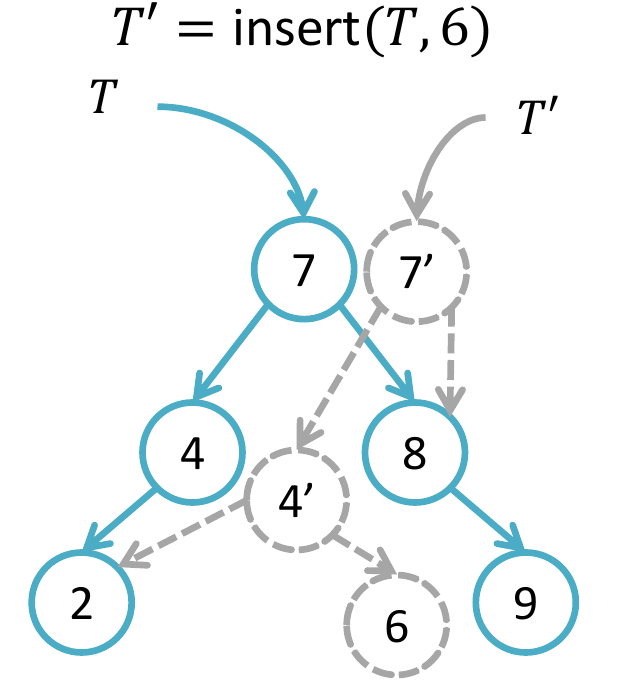}
\end{minipage}
\begin{minipage}{.6\columnwidth}
  \caption{\small A persistent insertion on a tree. The algorithm basically copies all tree nodes on the insertion path, such that the new (copied) root represents the output tree, and the input tree is intact represented by the old root pointer. This algorithm costs $O(\log n)$ time for an input tree of size $n$.}\label{fig:persistence}
\end{minipage} 
\vspace{-1em}
\end{figure}



\begin{algorithm}[!t]
\caption{$T\gets \mf{Set\_Set}(T_1, T_2)$, the main algorithm for ordered set-set operations}
\label{algo:setset}\small
\SetKwFor{ParForEach}{parallel foreach}{do}{endfch}
\SetKwFor{ParFor}{parallel for}{do}{endpfor}
\SetKw{Break}{break}
  \DontPrintSemicolon
    \vspace{0.5em}
    \KwIn{Two weight-balanced trees storing two ordered sets. }
    \KwOut{A weight-balanced tree $T$ storing the union/intersection/difference of the two input sets.}
    \lIf {$|T_1|<|T_2|$} {\Return {\mf{Set\_Set}$(T_2, T_1)$}}
    $T' \gets \mf{Sketch}(T_1,T_2)$\tcp*[f]{Algorithm~\ref{algo:sketch}}\\
    $T \gets \mf{Rebalance}(T',\mf{false})$\tcp*[f]{Algorithm~\ref{algo:rebalance}}\\
    \Return {$T$}
    \vspace{0.5em}
\end{algorithm}

\begin{algorithm}[!th]
\caption{$T'\gets \mf{Sketch}(T_1,T_2)$}
\label{algo:sketch}\small
\SetKwFor{ParForEach}{parallel foreach}{do}{endfch}
\SetKwFor{ParFor}{parallel for}{do}{endfor}
\SetKw{Break}{break}
  \DontPrintSemicolon
    \vspace{0.5em}
    \KwIn{Two WBB$[\alpha]$ trees. $T_1$ is from the original larger tree $T_L$ and $T_2$ is from the original smaller tree $T_S$.}
    \KwOut{A binary tree sketch $T'$ representing the union/intersection/difference of the two input sets.}
    \vspace{0.5em}
    Let $n'\gets |T_1|$ and $m'\gets |T_2|$\\
    \lIf {$n'$ is $0$} { \Return{$T_2$} }
    \lIf {$m'$ is $0$} { \Return{$T_1$} \label{line:basecaset1}}
    \lIf {$m'<\sqrt{n'+m'}$} { \Return{$\mf{Base\_Case}(T_1, T_2)$}  \label{line:basecase}}
    $d\gets 2^{\lceil\log_2\sqrt{m'+n'}\rceil}$\\
    $b\gets (m'+n')/d$\\
    Let $\mb{splitter}_0 \gets -\infty$ and $\mb{splitter}_{d} \gets +\infty$\\
    \ParFor {$i \gets 1$ to $d-1$} {
        Find $\mb{splitter}_i$, which is the $(i\cdot b)$-th element in $T_1$ and $T_2$ (duplicate value counts twice) by dual-binary search\\
        Let $f_i$ indicate if $\mb{splitter}_i$ is a tomb\\
    }
    \ParFor {$i \gets 1$ to $d$} {
        Split $T_1$ using $\mb{splitter}_{i-1}$ and $\mb{splitter}_i$, output tree $T_{1,i}$\\
        Split $T_2$ using $\mb{splitter}_{i-1}$ and $\mb{splitter}_i$, output tree $T_{2,i}$\label{line:split}\\
        $T'_{i}\gets \mf{Sketch}(T_{1,i},T_{2,i})$
    }
    Connect $T'_{1},\ldots,T'_{d}$ using $\mb{splitter}_{1},\ldots,\mb{splitter}_{d-1}$\label{line:connect} \\
    \Return{the result tree}\\
\end{algorithm}

\begin{algorithm}[!th]
\caption{$\langle T,e\rangle\gets \mf{Rebalance}(T', \mb{last})$}
\label{algo:rebalance}\small
\SetKwFor{ParForEach}{parallel foreach}{do}{endfch}
\SetKwFor{ParFor}{parallel for}{do}{endpfor}
\SetKw{Break}{break}
\KwIn{A tree sketch $T'$, and a boolean flag \mb{last} indicating if the last element should be extracted.}
\KwOut{A valid weight-balanced tree $T$ with no tombs. If \mb{last} is true, $e$ is the last element extracted from $T'$.}
\SetKwInOut{Note}{Note}
\Note{$\mf{EffectiveSize}(T)$ returns the number of non-tombs in $T$.}
  \DontPrintSemicolon
    \vspace{0.3em}
    \lIf {$\mf{EffectiveSize}(T')$ is $0$} {\Return {$\langle\varnothing,\varnothing\rangle$}}
    \If {$T'$ is obtained by base cases (Line \ref{line:basecase} in Algorithm \ref{algo:sketch}}
    {\label{line:skipbase}
    \lIf {last} {\Return{$\langle \mf{RemoveLast}(T'), \mf{Last}(T')\rangle$}}\label{line:removelast1}
    \lElse{\Return {$\langle T', \varnothing\rangle$}}
    }
    \If {no nodes in $T_S$ fall into $T'$ (by checking small$(T')$)}
    {\label{line:skip1}
    \lIf {last} {\Return{$\langle \mf{RemoveLast}(T'), \mf{Last}(T')\rangle$}}\label{line:removelast2}
    \lElse{\Return {$\langle T', \varnothing\rangle$}}
    }
    \lIf* {\mb{last}} {$b\gets 1$} \lElse {$b\gets 0$}
    \If {$\mf{EffectiveSize}(\mf{LeftTree}(T'))+1$ and $\mf{EffectiveSize}(\mf{RightTree}(T'))-b+1$ differs by more than a factor of $2/\alpha$} {\label{line:ifbalance}
    Flatten $T'$ and reconstruct it (if \mb{last} then extract the last element in $T'$)~\label{line:reconstruct}\\
    \Return{the new tree}\label{line:reconstructreturn}}
    \lIf* {the root of $T'$ is a tomb} {$t \gets \texttt{true}$} \lElse {$t \gets \texttt{false}$}
    \SetKwProg{inparallel}{In parallel:}{}{}
    \inparallel{}{
    $\langle T_l, e_l\rangle=\mf{Rebalance}(\mf{LeftTree}(T'), t)$\\
    $\langle T_r, e_r\rangle=\mf{Rebalance}(\mf{RightTree}(T'), \mb{last})$\\
    }
    \If {$\mf{EffectiveSize}(\mf{RightTree}(T'))$ is 0 \textbf{and} \mb{last}} {
    $T\gets T_l$\\
    \lIf {the root of $T'$ is a tomb} {$e \gets e_l$}
    \lElse {$e \gets T'.\mb{root}$}
    }
    \Else{
    $e\gets e_r$\\
    \lIf {the root of $T'$ is a tomb} {$T \gets \mf{Connect}(T_l,e_l,T_r)$}
    \lElse {$T \gets \mf{Connect}(T_l,T'.\mb{root},T_r)$}
    }
    \Return {$\langle\mf{RebalanceByRoatation}(T), e\rangle$}\label{line:rotation}
\end{algorithm}

\subsection{The Main Algorithms}
\label{sec:set:algo}
We first give a high-level description of our algorithms for the three set-set functions.  As mentioned, we denote the larger input tree as $T_L$, and the smaller input tree as $T_S$. We use two steps, \defn{sketching} and \defn{rebalancing}. The sketching step aims at combining the elements in the two input trees in-order into one tree, which is not necessarily balanced.  The rebalancing step will apply a top-down algorithm to rebalance the whole tree by the WBB[$\alpha$] criteria.

Our sketching algorithm is based on a $d$-way divide-and-conquer scheme, where $d=\Theta(\sqrt{n+m})$ is a power of 2. It is a recursive algorithm, for which the two input trees are denoted as $T_1$ and $T_2$. In particular, $T_1$ contains a subset of $T_L$ and $T_2$ contains a subset of $T_S$. The algorithm will combine the two subsets and return one result tree. Note that even though $T_1\subseteq T_L$ and $T_2 \subseteq T_S$, the sizes of $T_1$ is not necessarily larger than $T_2$.
Throughout the recursive process, we track the following quantities for each tree node $v$:
\begin{enumerate}
  \item The size of the subtree, noted as $\text{size}(v)$.
  \item The number of elements originally from $T_L$, noted as $\text{large}(v)$.
  \item The number of elements originally from $T_S$, noted as $\text{small}(v)$.
  \item The number of elements appearing both in $T_L$ and $T_S$, noted as $\text{common}(v)$.
\end{enumerate}
The tree size $\text{size}(v)$ is required by the WBB[$\alpha$] invariant. The other three are used for $T_L-T_S$, $T_S-T_L$, and $T_S \cap T_L$ respectively.
The generic algorithm for all three operations is given in Algorithms~\ref{algo:setset},~\ref{algo:sketch},  and~\ref{algo:rebalance}. An illustration is shown in Figure~\ref{fig:setalgorithm}.     The difference between the three set-set functions is only in the base cases.

We now present the two steps of the algorithm in details:
\begin{enumerate}[leftmargin=*]
  \item \textbf{Sketching} (Algorithm \ref{algo:sketch}).
        This step generates an output tree $T'$ with all elements in the result, although not rebalanced.
        There are three subcomponents in this step. Denote $n'=|T_1|$ and $m'=|T_2|$, which means the number of tree nodes handled by this recursive call that are originally from the larger and smaller tree, respectively. As mentioned, $m'$ can be even larger than $n'$ in some of the recursive calls.
      \begin{enumerate}[labelindent=0pt]
      \item \textbf{Base Case.} When $m'<\sqrt{n'+m'}$, the algorithm reaches the base case.
          It calls the work-inefficient algorithm to generate a balanced output tree in $O(m'\log n')$ work and $O(\log n')$ span, which are presented in
          \ifx\fullversion\undefined the full version of this paper \cite{blelloch2019optimal}.
          \else
          Appendix \ref{sec:set:inefficient}.
          \fi
      \item \textbf{Dividing.} We then use $d-1$ \defn{pivots} to split both input trees into $d$ \defn{\tchunk{s}}, and denote the partitioning of $T_1$ as $T_{1,\{1,\ldots,d\}}$, and $T_2$ as $T_{2,\{1,\ldots,d\}}$.
          The $d-1$ pivots are the global $b$-th, $2b$-th, \dots elements in the two trees, where $b=(n+m)/d$, so that $|T_{1,i}|+|T_{2,i}|$ for all $i$ have the same value (or differ by at most 1).
          All the splits (Line \ref{line:split}) can be done in parallel using a persistent \texttt{split} algorithm on weight-balanced trees~\cite{blelloch2016just}.  We then apply the
algorithm recursively on each pair of \tchunk{s}.

          Note that not all the pivots should appear in the output tree of the entire algorithm, depending on the set function.
          For example, for \intersection{}, those pivots that only appear in one tree will not show up at the end.
          In this case, in the \mf{Sketch} step, we will mark such pivot nodes as \emph{tomb}s, and filter them out later in the rebalancing step.

      \item \textbf{Connecting.} After the dividing substep and recursive calls, we have $d-1$ pivots (including tombs), and $d$ combined \tchunk{s} returned by the recursive calls.
          In the connecting substep, we directly connect them regardless of balance. Since $d$ is a power of 2, the $d-1$ pivots will form a full balanced binary tree structure on the top $\log_2 d$ levels, and all the \tchunk{s} output from recursive calls will dangle on the $d$ pivots. This process is shown in Line~\ref{line:connect}.
      \end{enumerate}
      The output $T'$ of the \mf{Sketch} step is a binary tree, which may or may not be balanced.
      We will call $T'$ the \defn{sketch} of the final output of the algorithm.  We also call the top $\log_2 d$ levels in $T'$ consists of pivots the \defn{skeleton} of $T'$.
      We note that $T'$ may contain tombs, and we will filter them out in the next step.
  \item \textbf{Rebalancing}. We will use a reconstruction-based rebalancing algorithm to remove the tombs and rebalance the sketch tree $T'$ (Algorithm~\ref{algo:rebalance}, see more details in Section~\ref{sec:set:rebalance}). This rebalancing algorithm is stand-alone, and is of independent interest. 
\end{enumerate}

\subsection{The Rebalancing Algorithm}
\label{sec:set:rebalance}
We now present the reconstruction-based rebalancing algorithm.  A similar idea was also used in~\cite{BGSS18}. In this paper, we use this technique to support better parallelism instead of write-efficiency.

We use the \emph{effective size} of a subtree as the number of elements in this subtree excluding all tombs.
The effective size for a tree node $v$ can be computed based on size($v$), large($v$), small($v$) and common($v$), depending on the specific set operation.
It is used to determine if two subtrees will be balanced after removing all tombs.

The rebalancing algorithm is given in Algorithm~\ref{algo:rebalance}.
The algorithm recursively \emph{settles} each level top-down.
For a tree node, we check the effective sizes of its two children and decide if they are almost-balanced.
Here \emph{almost-balanced} indicates that sizes of the two subtrees differ by at most a factor of $2/\alpha$.\footnote{Generally speaking, the constant 2 here can be any value, but here we use 2 for convenience.}
If not, we flatten the subtree and re-build it.
Otherwise, we recursively settle its two children, and after that we re-connect the two subtrees back and rebalance using at most a constant number of rotations.

We also need to filter out \emph{tombs}, since they should not appear in the output tree.
We do this recursively.
If the current subtree root of $T'$ in Algorithm \ref{algo:rebalance} is a tomb, we will need to fill it in using the last element in its left subtree. We note that the effective size of the left subtree cannot be 0 (otherwise the algorithm returns at Line \ref{line:reconstructreturn}). To do this, the algorithm will take an extra boolean argument \texttt{last} denoting if the last element of the result needs to be extracted (returned as $e$ in the output of Algorithm \ref{algo:rebalance}). In this case, if the root of $T'$ is a tomb, the algorithm simply passes a \texttt{true} value to the left recursive call, getting the last element to replace the tomb.

For computing the last value (denoted as $r$), there are two cases.
First, if the subtree needs rebalancing, then after flattening the elements into an array, we simply take out the last element in the array as $r$ and return. Extracting the last element is inlined in the process of reconstruction (Line \ref{line:reconstruct}). Otherwise, we recursively deal with the two subtrees. If \texttt{last} is true, we also extract the last element in its right subtree.

Multiple base cases apply to this rebalancing algorithm. If the effective size of $T'$ is 0, the algorithm directly returns an empty tree and an empty element. The second case is when no element in $T_S$ falls into $T'$. This can be determined by looking at $\mb{small}(T')$. Note that all the \tchunk{s} in the sketching algorithm is designed to be the same size. Therefore, in this case, the whole subtree should be (almost) perfectly balanced, so we directly return it. These base cases are essential in bounding the work of rebalancing, since we do not need to traverse the whole subtree for these special cases.

\subsection{Base Case Algorithms and Cost Proof Outline}
\label{sec:set:outline}
Due to page limitation, we put the algorithm description of base case algorithms and the cost analysis of the algorithm in
\ifx\fullversion\undefined the full version of this paper \cite{blelloch2019optimal}.
\else
the appendix.
\fi
Here we show a very brief description about the intuition.

\textbf{Base case algorithms.} The base case algorithms $\mf{Base\_Case}(T_1, T_2)$ in Algorithm \ref{algo:sketch} use an work-inefficient version ($O(m'\log n')$ work and $O(\log(n'+m'))$ span) of the set-set algorithms. These algorithms are applied when $m'<\sqrt{n'+m'}$, which guarantees the total base case cost is $\bound$. The intuition of the base case algorithms is to search all $m'$ elements from $T_1$ in $T_2$, and based on the set operation being performed, add (remove) the $m'$ elements into (from) $T_2$. The same rebalancing algorithm as in Section \ref{sec:set:rebalance} is applied to guarantee a balanced output tree. Detailed description is in
\ifx\fullversion\undefined the full version of this paper \cite{blelloch2019optimal}.
\else
Appendix \ref{sec:set:inefficient}.
\fi

\textbf{Span.} We will show that all base cases, \mf{Sketch}, and \mf{Rebalance} algorithms have span $O(\log (n+m))$. We first prove that the height of the sketch $T'$ is $O(\log (n+m))$
\ifx\fullversion\undefined in the full version of this paper \cite{blelloch2019optimal}.
\else
(Lemma \ref{lem:step1height}).
\fi
The span of the base cases is straight-forward.
For \mf{Sketch}, this bound holds because of the $\sqrt{m+n}$-way divide-and-conquer. For \mf{Rebalance}, the span holds because the algorithm settles each node top-down, and settling each level in the skeleton only costs a constant span. For the skeleton of the returned tree, if a node is nearly-balanced, then a constant number of rotations settles it. Otherwise, flattening and reconstructing a tree of height $h$ takes $O(h)$ span, which is also equivalent to a constant per level.
In all, the span is $O(\log (n+m))$. We formally prove the span of the algorithm in
\ifx\fullversion\undefined the full paper~\cite{blelloch2019optimal}.
\else
Lemma \ref{lem:setdepth}.
\fi

\textbf{Work bound.} For work, we will prove that all base cases, \mf{Sketch}, and \mf{Rebalance} algorithms cost work $\bound$. We first show that the number of pivots is
\ifx\fullversion\undefined $O(m)$.
\else
$O(m)$ (Lemma \ref{lem:mpivots}).
\fi
Most interestingly, for \mf{Rebalance}, the optimality in work lies in the reconstruction-based algorithm.
For all pivots in the skeleton, if it is nearly balanced, the rebalancing cost is a constant. Therefore the total work is proportional to the size of the skeleton, which is no more than $\bound$.

To show the total reconstruction work, in the sketch $T'$, we mark all upper nodes of the elements in $T_S$ as red. There are at most $\bound$ red nodes in
\ifx\fullversion\undefined
$T'$.
\else
$T'$ (Lemma \ref{lem:uppers}).
\fi
We will show that the reconstruction work averaged to each red node is a constant.
The key observation is that, rebalancing for a subtree $T_x\in T'$ happens only when there are $m_x\ge c|T_x|$ red nodes in $T_x$, where $c$ is a constant.
This is because the two subtrees of $T_x$ are supposed to have the same size ($\sqrt{n'+m'}$) due to the selection of pivots. However there can be duplicates in \union{}; also \intersection{} and \difference{} do not keep all input elements in the output. Therefore there can be imbalanced in size.
We will show that the size of either subtree changes by no more than $m_x$. Therefore, to make them unbalanced, $m_x$ has to be at least $c|T_x|$ for some constant $c$. This makes the average cost per red node to be $O(1)$.
Adding the cost of all red nodes gives the stated optimal work bound.
We will formally prove the work of the algorithm in
\ifx\fullversion\undefined the full version of this paper~\cite{blelloch2019optimal}.
\else
Lemma \ref{lem:setwork}.
\fi

\section{Random Permutation}\label{sec:permutation}

Generating random permutation in parallel is useful in parallel algorithms, and is used in the list and tree contraction algorithms in this paper.
Hence it has been well-studied both theoretically~\cite{Alonso1996,Anderson1990,Czumaj96,Gibbons1996,Gil91,Gil91a,Gustedt03,Hagerup91,Miller85,RR89,shun2015sequential}
and experimentally~\cite{CongB05,Gustedt2008,shun2015sequential}.
To the best of our knowledge, none of these algorithms can be implemented in the \bfmodel{} using linear work and $O(\log n)$ \depth{}.
We now consider the simple sequential algorithm of Knuth~\cite{Knuth69} (Durstenfeld's~\cite{Durstenfeld64}) shuffle that iteratively decides each element:

{
\setlength{\interspacetitleruled}{0pt}%
\setlength{\algotitleheightrule}{0pt}%
\begin{algorithm}[h]
\Fn {\mf{KnuthShuffle}(A, H)} {
$A[i]\gets i$ for all $i=0,\ldots,n-1$\\
\For {$i \leftarrow n-1$ to $0$} {
    swap($A[i]$, $A[H[i]]$)
}
}
\end{algorithm}
}

\noindent where $H[i]$ is an integer uniformly drawn between 0 and $i-1$, and $A[\cdot]$ is the output random permutation.

A recent paper~\cite{shun2015sequential} shows that this sequential iterative algorithm is readily parallel.
The key idea is to apply multiple swaps in parallel as long as the sets of source and destination locations of the swaps are disjoint.
Figure~\ref{fig:rand} shows an example, and we can swap location 5 and 2, 7 and 1, 6 and 3 simultaneously in the first round, and the three swaps do not interfere each other.
If the nodes pointing to the same node are chained together and the self-loops are removed, we get the dependences of the computation.  An example is given in Figure~\ref{fig:rand}(b).
Similar to list contraction, we can execute the swaps for all leaf nodes and remove them from the tree in a round-based manner.
It can be shown that the modified dependences by chaining all the roots in the dependence forest (as shown in Figure~\ref{fig:rand}(c)) correspond to a random binary search tree, and the tree depth
is again bounded by $O(\log n)$~\whp{}.
The \depth{} of this algorithm is therefore $O(\log^2 n)$~\whp{} in the \bfmodel{}.

Similar to the new list contraction algorithm discussed in
Section~\ref{sec:list}, the computation can be executed
asynchronously.  Namely, the swaps in different leaves or subtrees are
independent.  Therefore, once the dependence structure is generated,
we can apply a similar approach as in Algorithm~\ref{algo:lc}, but
instead of splicing out each node, we swap the values for the pair of
nodes.

The remaining question is how to generate the dependence structure.
We do this in two steps.  We first semisort all nodes based on the
destination locations (grouping the nodes on all the horizontal chains
in Figure~\ref{fig:rand}(b) or right chains in
Figure~\ref{fig:rand}(c)).  Then we use an algorithm that takes
quadratic work to
sort the nodes within each group, and connect the nodes as discussed.

\begin{figure}[t]
\begin{center}
  \includegraphics[width=0.9\columnwidth]{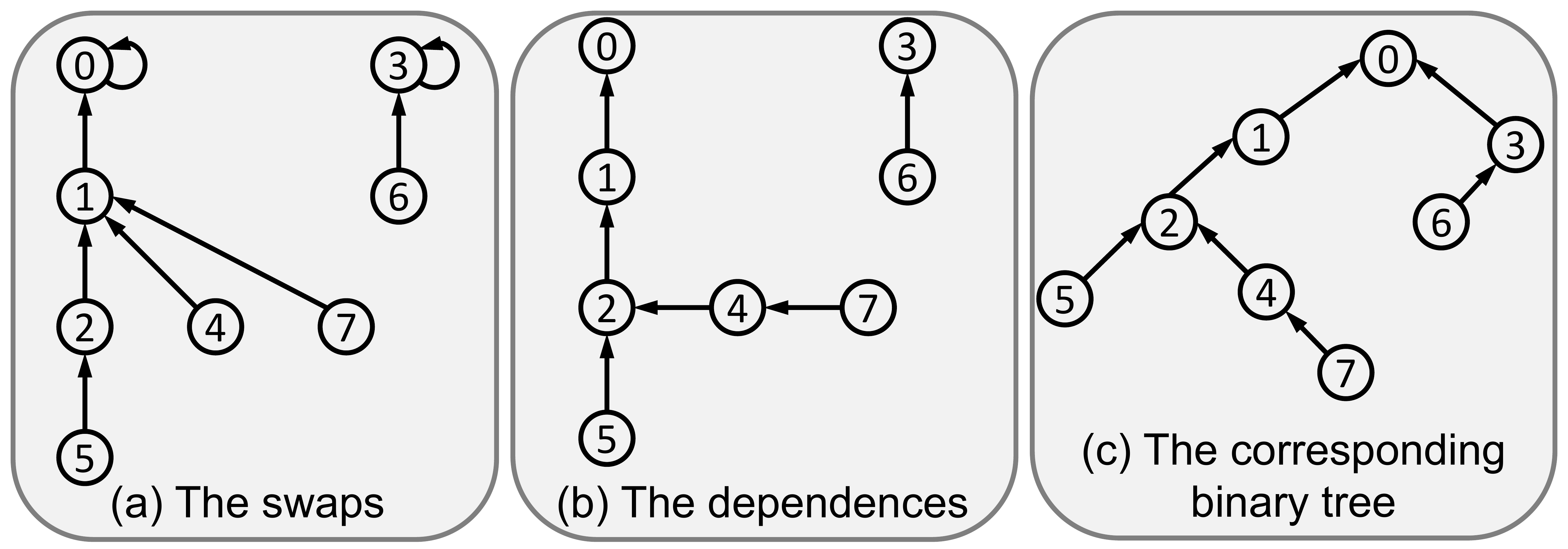}
\end{center}\vspace{-1.2em}
\caption{\small An example when $H = [0, 0, 1, 3, 1, 2, 3, 1]$.  Figure~(a)
  indicates the destinations of the swaps shown by $H$.  The
  dependences of the swaps are shown by Figure~(b), indicating the
  order of the swaps.  Figure~(c) links the roots of the forest to
  make it a binary tree.}
\vspace{-2em}
\label{fig:rand}
\end{figure}

\begin{theorem}
  The above algorithm generates a random permutation of size $n$ using $O(n)$ expected work and
  $O(\log n)$ \spanc{} \whp{} in the \bfmodel{}.
\end{theorem}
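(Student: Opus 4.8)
The plan is to split the algorithm into three sequential phases, bound the work and \depth{} of each, and then add them. The phases are: (i) \emph{semisort} the $n$ swap-nodes by their destination $H[i]$; (ii) within each resulting group sort the nodes (by index) using the quadratic-work sorter and chain them, then link the roots of the resulting forest into the single binary tree of Figure~\ref{fig:rand}(c); and (iii) execute the swaps asynchronously along this tree, exactly as in Algorithm~\ref{algo:lc} but replacing each splice by a value swap. Correctness is inherited from the sequential Knuth shuffle: the dependence tree encodes precisely the order in which swaps touching a common location must be applied, so any linearization consistent with the tree yields the same output as the sequential algorithm.

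For phase (i) I would invoke the semisorting result of Section~\ref{sec:sorting}: $O(n)$ expected work and $O(\log n)$ \depth{} \whp{} (the destinations are integers in $[0,n-1]$, so semisort applies directly). For phase (iii) I would reuse the analysis of Algorithm~\ref{algo:lc} almost verbatim: each thread traces a root-ward path in the dependence tree, terminating when it loses a \tas{} or reaches the root, so its length is at most the tree height; since chaining the roots yields a random binary search tree whose height is $O(\log n)$ \whp{}~\cite{shun2015sequential}, this phase has \depth{} $O(\log n)$ \whp{} and work $O(n)$, because each node performs exactly one constant-cost swap and one \tas{} per incident edge.

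The crux is the work bound for phase (ii). Writing $G_\ell = \{\, i : H[i] = \ell \,\}$ for the group with destination $\ell$, the quadratic sorter costs $O(|G_\ell|^2)$ per group, so the total is $\sum_\ell |G_\ell|^2$. Since the $H[i]$ are independent with $\Pr[H[i]=\ell] = 1/i$ for $\ell < i$, the indicators defining $|G_\ell| = \sum_{i>\ell}\mathbb{1}[H[i]=\ell]$ are independent, giving $\E|G_\ell| = \sum_{i>\ell} 1/i = \Theta(\log(n/\ell))$ and $\mathrm{Var}(|G_\ell|) \le \E|G_\ell|$. Hence $\E\!\left[\sum_\ell |G_\ell|^2\right] = \sum_\ell\big((\E|G_\ell|)^2 + \mathrm{Var}(|G_\ell|)\big) = O\!\big(\sum_\ell \log^2(n/\ell) + \sum_\ell \log(n/\ell)\big) = O(n)$, where the final bound follows from $\int_0^1 \ln^2(1/x)\,dx = 2$ and $\int_0^1 \ln(1/x)\,dx = 1$. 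This is what makes the whole algorithm run in $O(n)$ expected work. For the \depth{} of phase (ii), a Chernoff bound on each $|G_\ell|$ (whose mean is $O(\log n)$) together with a union bound over the $n$ groups gives $\max_\ell |G_\ell| = O(\log n)$ \whp{}; since the quadratic sorter on a group of size $k$ runs in $O(\log k)$ \depth{} (all pairwise comparisons in parallel, then a rank-counting reduction), phase (ii) has \depth{} $O(\log\log n)$ \whp{}, which is absorbed into $O(\log n)$.

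Summing the three phases (plus the $O(n)$-work, $O(\log n)$-\depth{} initialization that sets $A[i]\gets i$ and draws $H$) yields $O(n)$ expected work and $O(\log n)$ \depth{} \whp{}, after a union bound over the constant number of high-probability events. The main obstacle I expect is the phase-(ii) work bound: one must check that the heavy-tailed group sizes, which reach $\Theta(\log n)$ for small destinations, nevertheless contribute only linearly to the sum of squares — exactly the content of the $\int_0^1 \ln^2(1/x)\,dx$ computation above. Everything else follows from the semisort result and a near-verbatim reuse of the list-contraction analysis.
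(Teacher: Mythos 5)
Your proposal is correct and follows essentially the same route as the paper: semisort by destination, quadratic-work sorting within each group, and a near-verbatim reuse of the list-contraction analysis for the asynchronous swap phase. The only cosmetic difference is in bounding $\E\bigl[\sum_\ell |G_\ell|^2\bigr]$ --- you use $(\E|G_\ell|)^2+\mathrm{Var}(|G_\ell|)$ and an integral estimate, while the paper expands the square into diagonal and cross terms and sums $1/i$ and $1/(ik)$ directly via pairwise independence --- but these are the same elementary computation and both yield $O(n)$.
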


\begin{proof}
Similar to the list contraction algorithm in Section~\ref{sec:list}, this algorithm applies the same operations as the random permutation algorithm in~\cite{shun2015sequential}, and the swaps obey the same ordering for any pair of nodes with dependency.
The improvement for \depth{} is due to allowing asynchrony for disjoint subtrees.

The cost after the construction of dependence tree is the same as the list contraction algorithm (Algorithm~\ref{algo:lc}), which is $O(n)$ work and $O(\log n)$ \spanc{} \whp{}.
For constructing the dependence tree, the semisort step takes $O(n)$ expected work and $O(\log n)$ \depth{} \whp{} using the algorithm in Section~\ref{sec:sorting}.
The quadratic work sorting can easily be implemented in $O(\log n)$ \depth{} \whp{},
as in Section~\ref{sec:sorting}.
We now analyze the work to sort the chains.

Let a 0/1 random variable $A_{i,j}$ is 1 if $H[i]=j$ for $j<i$, and the probability $\Pr[A_{i,j}=1]$ is $1/i$.
$\Pr[A_{i,j}A_{k,j}=1]$ is then $1/(ik)$ for $j<i<k$ since they are independent.  
The expected overall work for sorting is (omitting constant in front of $A_{i,j}$).
\begin{align*}
  &\E[W_{\smb{RandPerm}}(n)] = \E\left[\sum_{j=1}^{n}\left(\sum_{i=j}^{n}{A_{i,j}}\right)^2\right] \\
  =&\E\left[\sum_{j=1}^{n}\sum_{i=j}^{n}{A_{i,j}^2}\right]+2\cdot\E\left[\sum_{j=1}^{n}\sum_{i=j+1}^{n}\sum_{k=i+1}^{n}{A_{i,j}A_{k,j}}\right] \\
  =&  \sum_{i=1}^{n}\sum_{j=1}^{i-1}{1/i}+\sum_{k=1}^{n}\sum_{i=1}^{k-1}\sum_{j=1}^{i-1}{1/ik} ~=~O(n)\\
\end{align*}
Combining all results gives the stated theorem.
\end{proof}


\begin{figure*}[!h!t]
\begin{center}
  \includegraphics[width=\columnwidth]{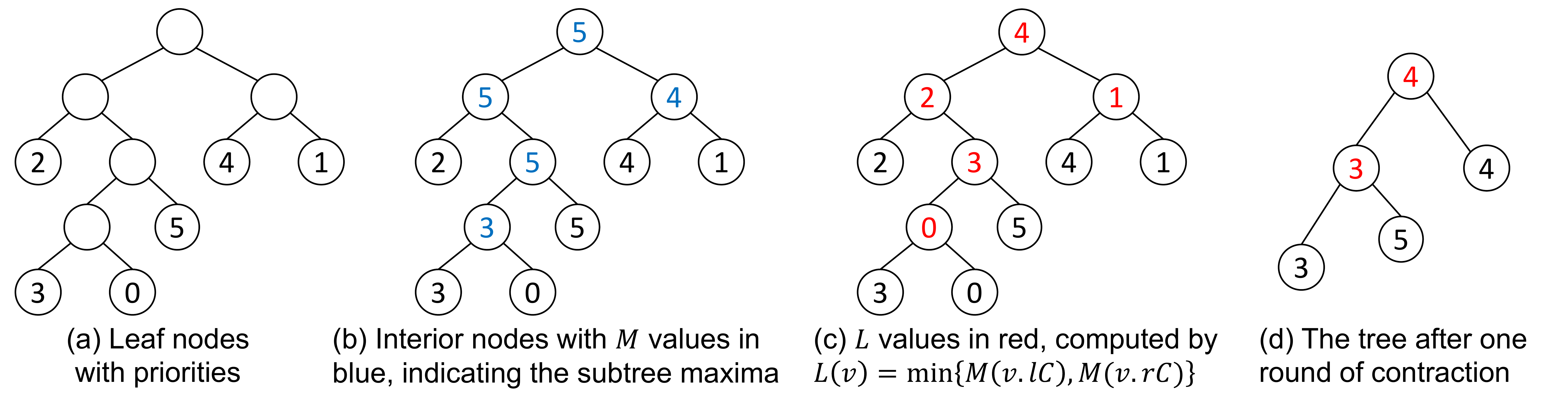}
\end{center}\vspace{-1.2em}
\caption{\label{fig:tree} \small An example of the tree contraction algorithm.  It first generates the priority of the leaf node from a random permutation, as shown in Figure~(a).
Then every interior node computes the highest priority label in the subtree, and the computed values are shown as the blue numbers in Figure~(b).
Then each interior node $v$ is paired with the leaf node that has the priority which is stored in the child node different from the priority of $v$.
Taking the root node as an example, the priority 5 is different from the right node which has priority 4, so the root node is paired with the leaf node with priority 4.
The pairing results are shown as the red numbers in Figure~(c).
In this example, leaf nodes 0, 1, and 2 can be contracted simultaneously, and the contracted tree is shown in Figure~(d).
}
\end{figure*} 
\section{Range Minimum Queries}\label{sec:rmq}

Given an array $A$ of size $n$, the range minimum query (RMQ) takes two input indices $i$ and $j$, and reports the minimal value within this range.
RMQ is a fundamental algorithmic building block that is used to solve other problems such as the lowest common ancestor (LCA) problem on rooted trees, the longest common prefix (LCP) problem, and lots of other problems on trees, strings and graphs.

An optimal RMQ algorithm requires linear preprocessing work and space, and constant query time.
It can be achieved by a variety of algorithms (e.g.,~\cite{berkman1993recursive,bender2000lca,alstrup2004nearest,fischer2006theoretical,arge2013dynamic}).
These algorithms are based on the data structure referred to as the sparse table that can be precomputed in $O(n \log n)$ work where $n$ is the input size, and constant RMQ cost.
To further improve the work, the high-level idea in these algorithms is to chunk the array into $O(n/\log n)$ groups each with size $O(\log n)$, find the minima of the groups, and only preprocess the sparse table for the minima.
Within each group, these algorithms use different techniques to preprocess in $O(\log n)$ work per group, and support constant query cost within each group.
Then for a range minimum query $(i,j)$, the minimum can be answered by combining by the query for the sparse table for the whole groups in this range, and the query for the boundary groups that contain elements indexed at $i$ and $j$.
These algorithms can be trivially parallelized in the PRAM model using $O(\log n)$ span (time), but when translating to the \bfmodel{}, the span becomes $O(\log^2 n)$ in preprocessing the sparse table, and needs to be improved.

For simplicity, we first assume the number of groups $n'$ is a power of 2.
In the classic sparse table, we denote $T_{i,k}$ as the minimal value between group range $i$ and $i+2^k-1$. It can be computed as $\min\left\{T_{i,k-1},T_{i+2^{k-1},k-1}\right\}$.
Let $k=\lfloor \log_2{(j-i)}\rfloor$. 
Then for query from group $i$ to $j (>i)$, we have $\mb{RMQ}(i,j)=\min\left\{T_{i,k},T_{j-2^k+1,k}\right\}$.
Directly parallelizing the construction for the sparse table uses $O(\log^2 n)$ \depth{}---$O(\log n)$ levels in total and $O(\log n)$ \depth{} within each level.
We now consider a variant of the sparse table which is easier to be generated in the \bfmodel{} and equivalently effective.

In the modified version, we similarly have $\log_2 n'$ levels, and in $k$-th level we partition the array into $n'/2^k$ subarrays each with size $2^k$.
For each subarray, we further partition it to two parts with equal size, and compute the suffix minima for the left side, and prefix minima for the right side.
We denote $T'_{i,k}$ as such value with index $i$ in the $k$-th level.
For each query $(i,j)$, we find the highest significant bit that is different for $i$ and $j\ge i$.
If this bit is the $k$-th bit from the right, then we have $\mb{RMQ}(i,j)=\min\left\{T'_{i,k},T'_{j,k}\right\}$.
An illustration is shown in Figure~\ref{fig:rmq}.

\begin{figure}[t]
\begin{center}
  \includegraphics[width=.7\columnwidth]{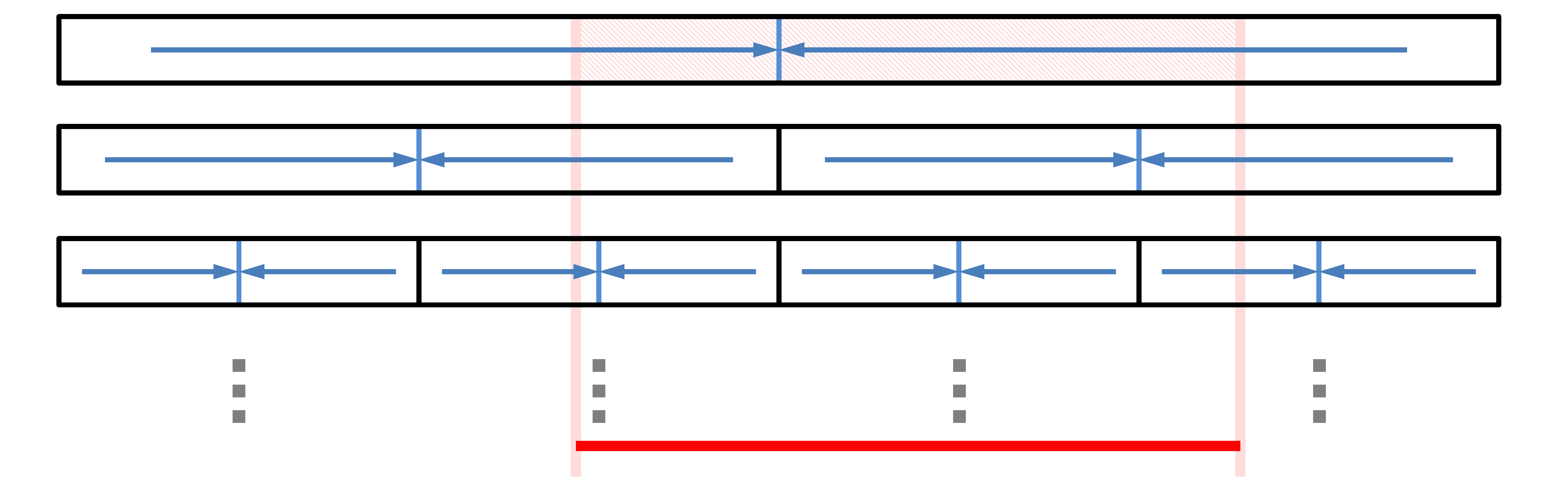}
\end{center}\vspace{-1em}
\caption{\small An illustration of the modified sparse table in Section~\ref{sec:rmq}.
The range is hierarchically partitioned into logarithmic number of levels, and the prefix and suffix minima are computed as the arrows indicate.
For each query range shown as the red segment, we can locate a unique level such that the minimum of the range can be answered by the suffix and the prefix minima (the shaded range).
}
\label{fig:rmq}
\end{figure}

We now describe how to compute $T'_{i,k}$.
We note that the computation for each subarray is independent, and each takes linear work and logarithmic \depth{} proportional to the subarray size~\cite{blelloch1990pre}.
Since each element corresponds to $\log_2n'$ computed values, the overall cost is therefore $O(n'\cdot \log_2n')=O(n)$ work and $O(\log n)$ \depth{}.

\begin{theorem}
  The range minimum queries for an array of size $n$ can be preprocessed in $O(n)$ work and $O(\log n)$ \depth{} in the \bfmodel{}, and each query requires constant cost.
\end{theorem}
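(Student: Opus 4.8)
The plan is to verify the three claims of the theorem in turn—correctness of the modified-sparse-table query rule, the $O(n)$ work and $O(\log n)$ \spanc{} of preprocessing, and $O(1)$ query time—while isolating the one genuinely new ingredient: why the modified table escapes the $\Omega(\log^2 n)$ \spanc{} that a naive binary-forking simulation of the classical table incurs. The grouping into $\Theta(\log n)$-sized groups is the standard device for reducing the table from $O(n\log n)$ to $O(n)$ work, so the real content lies in the structure of the table built over the group minima.

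First I would prove correctness of the rule $\mb{RMQ}(i,j)=\min\{T'_{i,k},T'_{j,k}\}$, where $k$ is the position of the highest bit on which $i$ and $j$ differ. Since $i$ and $j$ agree on all higher bits, both lie in a common block $B$ of consecutive indices sharing those high bits; bit $k$ splits $B$ into a left half $B_L$ and a right half $B_R$ of equal size, and because $i<j$ first differ at bit $k$ we get $i\in B_L$ and $j\in B_R$ with the midpoint $p=\min B_R$ satisfying $i<p\le j$. By construction $T'_{i,k}$ is the suffix minimum of $B_L$ at $i$, namely $\min A[i..p-1]$, and $T'_{j,k}$ is the prefix minimum of $B_R$ at $j$, namely $\min A[p..j]$, so their minimum is exactly $\min A[i..j]$. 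This straddling argument is the crux, and it is the only place I expect real subtlety: I must fix the indexing convention (whether ``the $k$-th bit from the right'' is $0$- or $1$-indexed) so that the block size $2^{k+1}$ matches the level at which the halves have size $2^{k}$; with the paper's $1$-indexed convention the level is exactly $k$ and the formula holds as stated.

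Next I would bound preprocessing. The key observation—and the reason we beat a direct fork-join simulation of the classical recurrence $T_{i,k}=\min\{T_{i,k-1},T_{i+2^{k-1},k-1}\}$—is that in the modified table each value $T'_{\cdot,k}$ depends only on the raw array, not on the preceding level. Hence all $\log_2 n'$ levels are computed fully in parallel rather than in $\log n'$ sequentially dependent rounds, each of which would cost $\Omega(\log n')$ merely to fork over the elements and would thus compound to $\Omega(\log^2 n')$. Within a fixed level $k$ the array of group minima is cut into size-$2^k$ blocks whose halves' prefix/suffix minima are independent scans, each done in work linear in the block and \spanc{} $O(k)$ by the scan of~\cite{blelloch1990pre}. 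Summing over one level gives $O(n')$ work, and over all levels $O(n'\log n')$ work; run in parallel the \spanc{} is $O(\log n')$. Choosing groups of size $g=\Theta(\log n)$ makes $n'=n/g$ and $O(n'\log n')=O(n)$, and I would separately precompute each group's prefix/suffix minima and its constant-query intra-group RMQ structure (standard four-Russians/Cartesian-tree typing, e.g.~\cite{fischer2006theoretical}) in $O(g)$ work per group, all groups forked in parallel; this adds $O(n)$ work and $O(\log n)$ \spanc{}.

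Finally I would assemble the query. Given $(i,j)$, if they share a group the answer is one intra-group RMQ; otherwise it is the minimum of the suffix minimum of $i$'s group, the prefix minimum of $j$'s group, and a single modified-table query over the full groups strictly between them. Each piece is $O(1)$ provided we can compute $k=\lfloor\log_2(i\oplus j)\rfloor$ in constant time; since words hold $O(\log n)$ bits, the most-significant set bit is obtainable either via a machine primitive or an $O(n)$-space lookup table, so the query is $O(1)$. Combining the three parts yields $O(n)$ work, $O(\log n)$ \spanc{}, and $O(1)$ queries, proving the theorem.
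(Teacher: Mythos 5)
Your proposal is correct and follows essentially the same route as the paper: group into $\Theta(\log n)$-sized blocks, build the modified sparse table over the group minima, and exploit the fact that each level depends only on the raw array (independent prefix/suffix minima per block) so that all levels run in parallel with $O(\log n)$ span and $O(n'\log n')=O(n)$ work. Your added details—the straddling argument for correctness of $\mb{RMQ}(i,j)=\min\{T'_{i,k},T'_{j,k}\}$ and the constant-time most-significant-bit computation—are correct elaborations of steps the paper leaves implicit.
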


\section{Tree Contraction}\label{sec:tree}

Parallel algorithms for tree contraction have received considerable
interest because of its ample applications for many tree and graph
applications~\cite{JaJa92,Miller85,Reif93,shun2015sequential,BBFGGMS16}.
There are many variants
of parallel tree contraction.  Here we will assume we are contracting
rooted binary trees in which every internal node has exactly two
children.  Any rooted tree can be reformatted to this shape in linear
work and logarithmic span.  We assume the tree $T$ has $n$ leaf nodes
(and $n-1$ interior nodes). We use $v.lC$ and $v.rC$ to denote the left
and the right child of a node $v$, respectively.

List contraction can be considered as a degenerated case of tree
contraction when all interior nodes are chained up.  As a result, we
do not know an optimal parallel algorithm for tree contraction with
$O(n)$ work and $O(\log n)$ span.  Similarly, the difficulty in
designing such an algorithm remains in using no synchronization.

Here we consider parallelizing the sequential iterative algorithm that
``rakes'' one leaf node at a time.  A rake operation removes a leaf
node and its parent node $v$, and if $v$ is not the root, it sets the
other child of $v$ to replace $v$ as the child of $v$'s parent.  We
assign each leaf node a priority drawn from a random permutation, so
the priority defines a global ordering of the nodes to be removed, and
eventually only one node with the lowest priority remains.  By
maintaining some additional information on the tree nodes, we can
apply a variety of tree operations such as expression evaluation,
roofix or leafix, which are useful in many applications~\cite{Reif93}.

Similar to list contraction, we want to avoid applying two rake operations simultaneously such that one of the interior nodes is the parent of the other.
Beyond that, we can rake a set of leaf nodes in parallel.
For instance, in Figure~\ref{fig:tree}(a), we can contract leaf nodes 0, 1, 2, and their parents together, as shown in Figure~\ref{fig:tree}(d).

To decide the nodes that can be processed together, we define $M(v)$ of each interior node $v$ as the lowest priority (maximum value) of any of the leaves in its subtree (blue numbers in Figure~\ref{fig:tree}(b)).
Based on $M(\cdot)$, we further define $L(v)=\min\{M(v.\mb{lC},v.\mb{rC})\}$ (red numbers in Figure~\ref{fig:tree}(c)) if $v$ is an interior node, or its own priority if $v$ is a leaf node.
$L(v)$ defines a one-to-one mapping between the interior nodes and the leaf nodes (except for one leaf node that stays at the end), and $L(v)=u$ indicates that $v$ will be raked by the leaf node $u$.
Based on the labeling, the parallel algorithm in~\cite{shun2015sequential} checks every node $v$, and it can be raked immediately if $v$' parent has an $L$ value smaller than those of $v$'s sibling and $v$'s grandparent (if it exists).
Otherwise the node waits for the next round.
If we rake all possible leaf nodes in a round-based manner, the number of rounds is $O(\log n)$ \whp{}, leading to an $O(\log^2 n)$ span \whp{} in the \bfmodel{}.

Assuming $L(\cdot)$ has already been computed, we can change the round-based algorithm to an asynchronous divide-and-conquer algorithm similar to the list contraction algorithm (Algorithm~\ref{algo:lc}) in Section~\ref{sec:list}. 
The only difference is when setting the $\mb{flag}$s, since now there can be 1, 2, or 3 directions that may activate a postponed node (in list contraction it is either 1 or 2, depending on the initialization of the $\mb{flag}$ array).
This however, can be easily decided by checking the number of neighbor interior nodes.
Similarly, the last thread corresponding to the contraction of a neighbor node that reaches a postponed node activates it and apply the rake operation.
Since the longest possible path has length $O(\log n)$, the algorithm for the contraction phase uses $O(n)$ work, and $O(\log n)$ span \whp{}.

The last challenge is computing $L(\cdot)$.
As shown in Figure~\ref{fig:tree}(b), computing $M(\cdot)$ is a leafix operation on the tree (analogy to prefix minima but from the leaves to the root), which can be solved by the standard range minimum queries as discussed in Section~\ref{sec:rmq}, based on Euler-tour of the input tree.
In Section~\ref{sec:list}, we discussed the list ranking algorithm to generate the Euler tour.
As a result, computing $M(\cdot)$ and $L(\cdot)$ uses $O(n)$ work and $O(\log n)$ span \whp{}.
In summary, we have the following theorem.

\begin{theorem}
  Tree contraction uses $O(n)$ work and $O(\log n)$ span \whp{} in the \bfmodel{}.
\end{theorem}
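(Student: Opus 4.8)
The plan is to split the argument into the two phases already sketched in this section: a preprocessing phase that computes all the labels $L(\cdot)$, and an asynchronous contraction phase that performs the rakes. I would prove that each phase runs in $O(n)$ work and $O(\log n)$ \spanc{} \whp{}, and then combine the bounds. Since both phases are dominated by high-probability events, a union bound over a constant number of such events will suffice at the end.

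First I would dispatch the preprocessing. Computing $M(\cdot)$ is a leafix (the minimum leaf priority over each subtree), which I reduce to range minimum queries on the Euler tour of $T$: the leaves of any subtree occupy a contiguous interval in the Euler-tour order, so each $M(v)$ is a single RMQ. Constructing the Euler tour requires list ranking, which by the algorithm of Section~\ref{sec:list} costs $O(n)$ work and $O(\log n)$ \spanc{} \whp{}. Preprocessing the RMQ structure costs $O(n)$ work and $O(\log n)$ \spanc{} (Section~\ref{sec:rmq}) with constant query time, so the $n-1$ independent queries yielding all $M(v)$ cost $O(n)$ total work and can be issued by a parallel-for in $O(\log n)$ \spanc{}. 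Each $L(v)$ is then a constant-time combination of the two children's $M$-values. Hence the preprocessing phase meets the bounds.

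Next I would analyze the asynchronous contraction phase, which mirrors Algorithm~\ref{algo:lc}. Correctness follows because the algorithm applies exactly the rakes of the round-based algorithm of~\cite{shun2015sequential}: a node $v$ is raked only after all of its neighboring interior nodes (among its parent, left child, and right child) that are responsible for activating it have been contracted. This is enforced by initializing $v.\mb{flag}$ to the number of such neighbors (1, 2, or 3, decided by inspecting which neighbors are interior) and letting the last arriving thread win a \TAS{} on $v.\mb{flag}$ and take over $v$. For the work I charge each rake to the interior/leaf pair it removes; since every execution of the inner loop body removes one such pair, the total work is $O(n)$. For the \spanc{} I would reuse the tree-path decomposition from the list-contraction proof: each thread traverses a single path in the dependence structure induced by the random leaf priorities, and because asynchrony lets disjoint subtrees proceed independently, the \spanc{} is proportional to the longest such path.

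The main obstacle is precisely this \spanc{} bound for the contraction phase: I must argue that the longest dependence chain (tree-path) is $O(\log n)$ \whp{}. This rests on showing that, under a random permutation of leaf priorities, the induced dependence structure behaves like a random tree of logarithmic depth---the same phenomenon exploited for list contraction and established in~\cite{shun2015sequential} (the round-based version runs in $O(\log n)$ rounds \whp{})---and on verifying that the asynchronous, \TAS{}-based hand-off faithfully realizes this structure without creating any longer chain than the tree height. Combining the two phase bounds by a union bound over the constantly many high-probability events then yields $O(n)$ work and $O(\log n)$ \spanc{} \whp{}, which proves the theorem.
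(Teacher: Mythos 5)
Your proposal matches the paper's own argument essentially step for step: the same two-phase decomposition (computing $L(\cdot)$ via list ranking, Euler tour, and RMQ-based leafix, then an asynchronous \TAS{}-based contraction modeled on Algorithm~\ref{algo:lc}), the same charging argument for $O(n)$ work, and the same appeal to the $O(\log n)$ \whp{} depth of the dependence structure from~\cite{shun2015sequential} for the span. The only loose point---how a single \TAS{} flag handles a node awaiting up to three activations---is left equally informal in the paper, so there is no substantive divergence.
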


\section*{Acknowledgement}

This work was supported in part by NSF grants CCF-1408940, CCF-1629444, CCF-1718700, CCF-1910030, CCF-1918989, and CCF-1919223.

\bibliographystyle{plain}
\bibliography{bibliography/strings,bibliography/main}

\ifx\fullversion\undefined
\else
\appendix
\section{Base Case Algorithms for Set-set Operations}
\label{sec:set:inefficient}
\begin{figure*}
  \centering
  \includegraphics[width=\columnwidth]{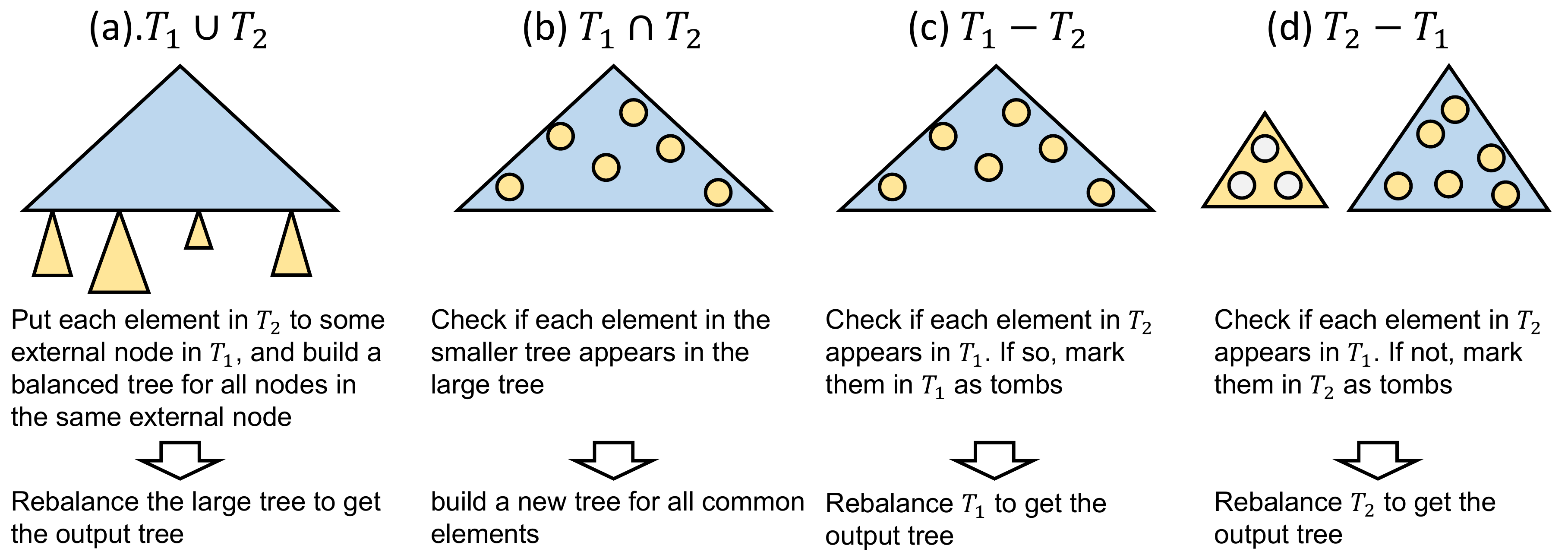}
  \caption{Illustrations for the base case algorithms for \union{}, \intersection{} and \difference{}. Objects in $T_L$ are marked in blue, and objects in $T_S$ are marked in yellow.}\label{fig:basecase}
\end{figure*}

We now present algorithms for base cases on two input trees $T_1$ and $T_2$. These algorithms solve the set functions in $O(\log n')$ span and $O(m'\log n')$ work, for $|T_1|=n'$ and $|T_2|=m'$. In our algorithms, $T_1$ is from the original larger tree, and $T_2$ is from the original smaller tree. As mentioned, the algorithm is called only when $m'<\sqrt{n'+m'}$.
We give a brief illustration of the algorithms in Figure~\ref{fig:basecase}.

\para{Intersection.} As shown in Figure~\ref{fig:basecase}(b), we will only need to search in $T_1$ for each element in $T_2$, and for those appearing in both sets, we build a new tree structure.
This requires $O(m'\log n')$ work and $O(\log n')$ span.

\para{Union.} As shown in Figure \ref{fig:basecase}(a), our approach consists of two parts: a \emph{scatter} phase, which locates all nodes in $T_1$ in one of the external nodes in $T_2$, and a \emph{rebalancing} phase, which rebalances the tree via our reconstruction-based algorithm (of course we skip line~\ref{line:skipbase} in Algorithm~\ref{algo:rebalance}). Later in Corollary \ref{cor:rebalancework} we show that the work of rebalancing is $O\left(m'\log\left(\frac{n'}{m'}+1\right)\right)$, and the span of rebalancing is $O(\log n')$.

We now focus on the scatter algorithm.
The scatter phase first flattens $T_2$ into an array, and applies $m'$ searches for each element from $T_2$ in $T_1$.
For all the tree nodes falling in the same external node in $T_1$, we build a balanced tree from the array, and directly attach the root to the external node.
The work of scattering is $O(m'\log n')$, and the span is $O(\log m'+\log n')=O(\log n')$ (given $m'<\sqrt{n'+m'}$).\footnote{
In some models that allow multi-way forking (e.g., the PRAM model), the scatter step can be done effectively in optimal $O\left(m'\log\left(\frac{n'}{m'}+1\right)\right)$ work and $O(\log n')$ span, which makes the whole base case algorithm work-efficient.} In total, this algorithm has work $O(m'\log n')$ and span $O(\log n')$.


\para{Difference}($T_1-T_2$).
As shown in Figure \ref{fig:basecase}(c), we will only need to search in $T_1$ for each element in $T_2$, and for those appearing in both sets, we mark them as a tomb in $T_1$.
Then we call Algorithm~\ref{algo:rebalance} to filter out the tombs and rebalance the tree.
Similarly as the \union{} algorithm, applying Corollary \ref{cor:rebalancework} we can show that this algorithm has work $O(m'\log n')$ and span $O(\log n')$.

\para{Difference}($T_2-T_1$).
Similarly, as shown in Figure \ref{fig:basecase}(d), we search in $T_1$ for each element in $T_2$, and for those appearing in both sets, we mark them as a tomb in $T_2$.
Then we use Algorithm~\ref{algo:rebalance} to rebalance the tree.
The total cost is also $O(m'\log n')$ work and $O(\log n')$ span.

\section{Cost Analysis for Set-set Operations}
\label{sec:set:proof}

We now prove the work and span of Algorithm~\ref{algo:setset}, which proves Theorem~\ref{thm:setset}.

\subsection{Preliminary and Useful Lemmas}
We start with presenting some useful definitions and Lemmas. We first recall the following two definitions we gave in Section \ref{sec:set:prelim}.

\begin{definition}
In a tree $T$, the \emph{\upper{s}} of an element $k \in U$, not necessarily in $T$, are all the nodes in $T$ on the search path
to $k$ (inclusive).
\end{definition}

\begin{definition}
In a tree $T$, an element $k \in U$ \emph{falls into} a subtree $T_x\in T$, if the search path to $k$ in $T$ overlaps the subtree $T_x$.
\end{definition}

We now present some useful lemmas and their proofs.

\hide{
\begin{lemma}
  When $n\ge m$, $1+\log_2 \frac{n}{m}$ is asymptotically equivalent to $\log_2 \left(\frac{n}{m}+1\right)$.  \guy{seems too trivial to be a lemma.}
\end{lemma}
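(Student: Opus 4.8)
The plan is to reduce the statement to a single-variable claim by setting $x = n/m$, so that $x \ge 1$ by the hypothesis $n \ge m$. The goal then becomes showing that $1 + \log_2 x$ and $\log_2(x+1)$ differ by at most a constant factor for every $x \ge 1$; equivalently, that each is $O$ of the other. I would prove this by establishing a two-sided inequality sandwiching $\log_2(x+1)$ between $\frac{1}{2}(1+\log_2 x)$ and $1+\log_2 x$.

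First I would handle the direction $\log_2(x+1) \le 1 + \log_2 x$. Since $x \ge 1$ implies $x + 1 \le 2x$, monotonicity of $\log_2$ gives $\log_2(x+1) \le \log_2(2x) = 1 + \log_2 x$, which is exactly the desired upper bound (with constant $1$).

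Next I would establish the reverse direction $1 + \log_2 x \le 2\log_2(x+1)$. The key observation is that for $x \ge 1$ we simultaneously have $\log_2(x+1) \ge \log_2 2 = 1$ and $\log_2(x+1) \ge \log_2 x$, again by monotonicity. Adding these two lower bounds yields $2\log_2(x+1) \ge 1 + \log_2 x$, as needed. Combining the two inequalities gives $\frac{1}{2}(1 + \log_2 x) \le \log_2(x+1) \le 1 + \log_2 x$, so the two expressions are within a factor of $2$ of one another and hence asymptotically equivalent.

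There is essentially no obstacle here, consistent with the marginal note that the claim is nearly too trivial to be a lemma. The only point requiring a moment's thought is the lower bound: rather than trying to bound $1 + \log_2 x$ against $\log_2(x+1)$ directly, one should split the additive $1$ and the $\log_2 x$ term and dominate each by a copy of $\log_2(x+1)$, which is where the constant factor of $2$ enters.
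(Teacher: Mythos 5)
Your proof is correct, but it takes a different (and arguably cleaner) route than the paper's. The paper's proof sketch is a two-case analysis on the ratio $n/m$: when $n \ge 2m$ it declares the equivalence straightforward (the additive $1$ inside the logarithm is then absorbed, since $\frac{n}{m} \le \frac{n}{m}+1 \le 2\cdot\frac{n}{m}$), and when $1 \le n/m \le 2$ it observes that both quantities are sandwiched between absolute constants ($1 \le 1+\log_2\frac{n}{m} \le 2$ and $1 \le \log_2\left(\frac{n}{m}+1\right) \le \log_2 3$), so they are trivially within a constant factor of each other. You instead give a single uniform two-sided bound valid for all $x = n/m \ge 1$: the upper bound $\log_2(x+1) \le 1+\log_2 x$ from $x+1 \le 2x$, and the lower bound $2\log_2(x+1) \ge 1+\log_2 x$ obtained by dominating the additive $1$ and the $\log_2 x$ term separately by $\log_2(x+1)$. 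Your version avoids the case split entirely and produces an explicit constant of $2$, whereas the paper leaves its "large ratio" case unspelled; both arguments are valid, and the difference is purely one of presentation for a claim the authors themselves flag as nearly too trivial to state (indeed, the lemma and its proof sketch are commented out of the compiled paper).
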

\begin{proofsketch}
  When $n\ge 2m$ the stated lemma is straightforward. When $1\le n/m\le 2$, $1\le \log \frac{n}{m} +1\le 2$, and $1\le \log \left(\frac{n}{m}+1\right)\le \log_2 3$.
\end{proofsketch}
}

\begin{lemma}
\label{lem:uppers}
Suppose a tree $T$ of size $n$ satisfies that for any subtree $T_x\in T$, the height of $T_x$ is $O(\log |T_x|)$. Let $S$ be the set of all the \upper{} of $m\le n$ elements in $T$, then $|S|\in\bound$.
\end{lemma}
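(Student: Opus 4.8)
The plan is to reduce the lemma to a statement about the union of search paths and then bound that union by a charging argument that exploits the balance condition. First I would observe that the set $S$ of all upper nodes of the $m$ elements is exactly the union of the $m$ root-to-node search paths, so $S$ is a connected subtree of $T$ that contains the root and has at most $m$ leaves: each leaf of $S$ is the terminal node of at least one search path, and there are at most $m$ distinct terminal nodes. The task therefore becomes bounding the size of a connected, root-containing subtree $S \subseteq T$ with at most $m$ leaves, given that every subtree of $T$ has height $O(\log(\text{its size}))$.

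The naive bound — each of the $m$ paths has length $O(\log n)$, so $|S| = O(m\log n)$ — is too weak when $m$ is close to $n$, where the answer should be $O(n)$ rather than $O(n\log n)$. The fix is to separate the part of $S$ near the root, where paths overlap heavily, from the part deep in the tree, where they have fanned out into small subtrees. Concretely I would use the threshold $n/m$ on subtree size: call $v$ \emph{heavy} if $|T_v| > n/m$ and \emph{light} otherwise. Heaviness is inherited by ancestors, so the heavy nodes of $S$ form a connected subtree containing the root, with the maximal light subtrees hanging beneath it.

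For the heavy part I would argue it has only $O(m)$ nodes: its bottom-most heavy nodes have pairwise-disjoint subtrees of size exceeding $n/m$, so there are at most $m$ of them, and because subtree sizes shrink geometrically with depth, any chain of heavy nodes between two branchings has only $O(1)$ length; hence the heavy region is a binary tree with $O(m)$ branching and bottom nodes and $O(m)$ total nodes. For the light part, $S$ decomposes into at most $m$ subtrees, each sitting inside a subtree $T_v$ with $|T_v| = O(n/m)$; within such a subtree every search path has length $O(\log(n/m))$ by the height hypothesis, and if $\mu_v$ paths enter $T_v$ with $\sum_v \mu_v \le m$, the contribution is $O\!\left(\sum_v \mu_v \log(n/m)\right) = O(m\log(n/m))$. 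Adding the two parts gives $|S| = O(m) + O(m\log(n/m)) = O\!\left(m\log\!\left(\tfrac{n}{m}+1\right)\right)$, using the elementary equivalence $1+\log(n/m) = \Theta(\log(n/m+1))$ for $n \ge m$.

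The main obstacle is converting the balance hypothesis into usable geometric size decrease. The one-sided statement ``every subtree has height $O(\log(\text{size}))$'' does bound the length of a single search path, but by itself it does not force the child subtrees at every node to shrink by a constant factor, which is exactly what makes the heavy region have $O(m)$ nodes and the light subtrees have size $O(n/m)$. I would therefore lean on the fact that the trees here are weight-balanced WBB$[\alpha]$ trees, where each child's weight is at most $(1-\alpha)$ times its parent's and every internal node has two children; this yields both the geometric decrease and the branching structure the counting needs. A self-contained alternative is the recursion $U(n,m) \le 1 + U(n_L,m_L) + U(n_R,m_R)$ with $m_L + m_R \le m$ and $n_L, n_R \le (1-\alpha)n$, solved by induction to the target form; in either route the delicate point is the same, namely turning balance into the size decrease on which the decomposition relies.
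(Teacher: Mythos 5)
There is a genuine gap, and it sits exactly at the step your whole decomposition rests on: the claim that the heavy region (nodes of $S$ whose subtree has size exceeding $n/m$) has only $O(m)$ nodes because ``any chain of heavy nodes between two branchings has only $O(1)$ length.'' That claim is false even for weight-balanced trees: a non-branching chain of heavy nodes descending from the root can pass from subtree size $n$ down to subtree size $n/m$, which under geometric shrinkage takes $\Theta(\log m)$ steps, and under the lemma's actual hypothesis (only that every subtree of size $s$ has height $O(\log s)$) such a chain can be as long as $\Theta(\log n)$ --- take $m=2$ and a spine in which each node sheds $n/(2\log n)$ elements, so that $\Theta(\log n)$ consecutive nodes all have subtree size above $n/2$. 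Your fallback --- ``lean on the fact that the trees here are WBB$[\alpha]$'' (and likewise the recursion with $n_L,n_R\le(1-\alpha)n$) --- is not available: the lemma is deliberately stated with only the subtree-height hypothesis because it is applied to the \emph{sketch} tree $T'$ produced before rebalancing, which is not weight-balanced; the only structural guarantee for $T'$ is the height property established separately. So the heavy part of your decomposition is unbounded by your argument, and without it the whole proof does not close. (Your light part is fine: disjoint light subtrees of size $O(n/m)$, at most $m$ paths entering them, $O(\log(n/m+1))$ per path by the height hypothesis.)

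The paper avoids the size threshold entirely and instead decomposes $S$ by its own branching structure: nodes of $S$ with two children in $S$ (``joint'' nodes) versus nodes with one (``linking'' nodes). Since $S$ has at most $m$ leaves there are $O(m)$ joint nodes; the difficulty is the linking chains, and the paper handles them by an exchange argument that pushes all joint nodes to the top without decreasing $|S|$, so that in the worst case the linking nodes form at most $m$ chains hanging from nodes whose subtrees are \emph{pairwise disjoint}, with sizes $n_1,\dots,n_m$ summing to at most $n$. Each chain then has length at most the height of its own subtree, $O(\log(n_i+1))$, and Jensen's inequality gives $\sum_i\log(n_i+1)\le m\log(n/m+1)$. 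The disjointness of the subtrees containing the chains is the ingredient your threshold-based split does not provide, and it is what lets the paper get by with the one-sided height hypothesis alone. If you want to salvage your route, you would need to prove directly that the heavy region of $S$ has $\bound$ nodes under the height hypothesis only, which essentially requires reintroducing an argument of this disjoint-charging type.
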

\begin{figure}
    \centering
    \includegraphics[width=0.6\columnwidth]{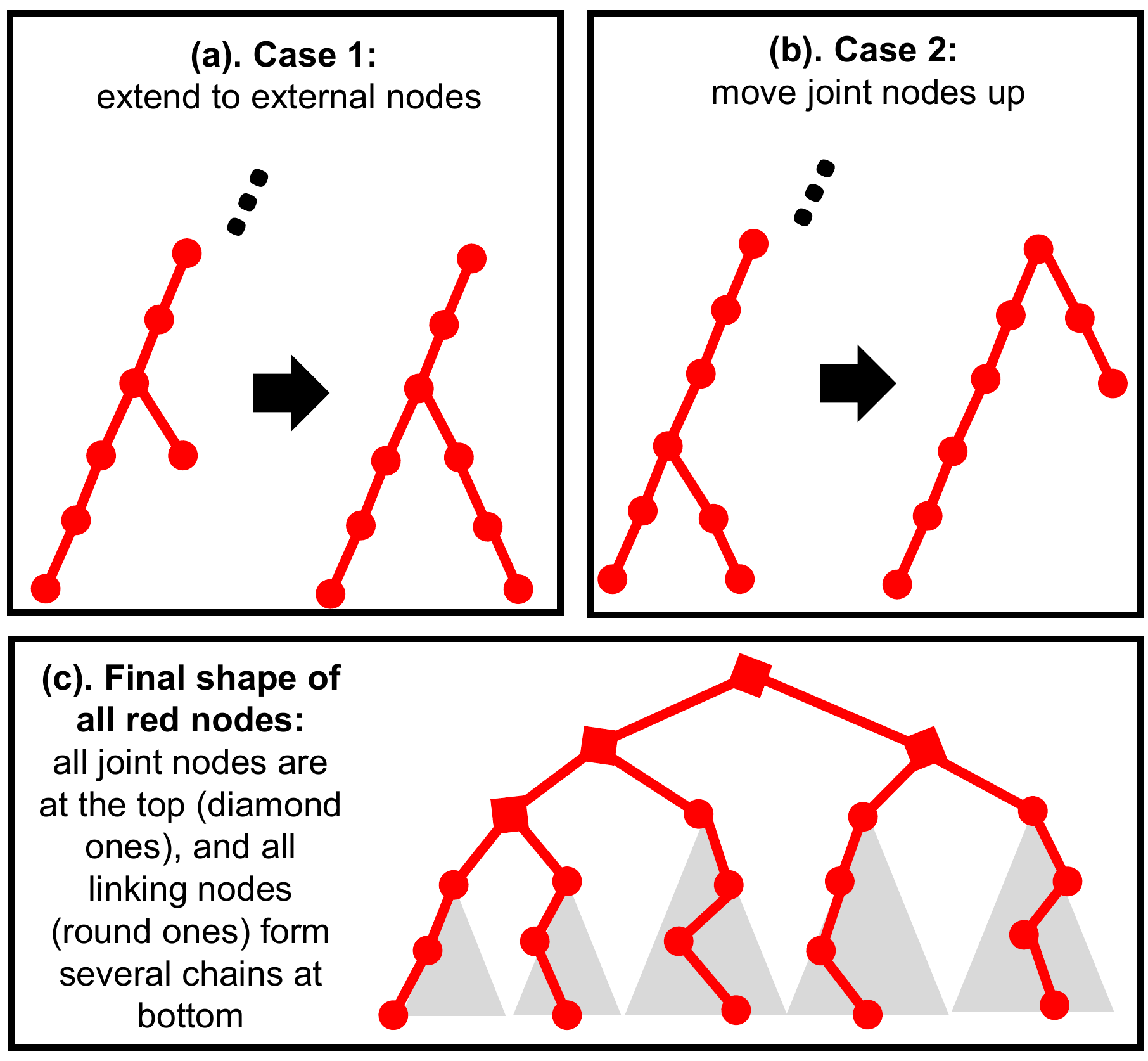}
    \caption{An illustration about adjusting red nodes in Proof \ref{lem:uppers}. (a) Extend all inner joint nodes to some external node. (b) Move all joint nodes to upper levels as far as possible. (c) The final shape after adjusting the red nodes.  All joint nodes are at the top, and all linking nodes form chains at bottom.}\label{fig:treepath}
\end{figure}
\begin{proof}
  First of all, we find all the searching paths to all the $m$ elements, and mark all related nodes on the path as red. All the red nodes form a tree structure, which is a connected component of $T$. We then adjust the red nodes, such that the number of red nodes does not decrease. We define a red node with two red children as a \emph{joint} node, and a red node with one red child as a \emph{linking} node.
  \begin{enumerate}
    \item First, as shown in Figure \ref{fig:treepath} (a), any of the red nodes are internal nodes in $T$, we arbitrarily extend it to some external node in the tree.
    \item Second, as shown in Figure \ref{fig:treepath} (b), if there is any linking node $v$ with some joint nodes as its descendent, then we move the first joint node of its descendants up to replace the non-red child of $v$.
  \end{enumerate}
  We repeat the two steps until there is no such situations. These adjustments only make the total number of red nodes larger. Finally we will have all joint nodes on the top levels of the tree, forming a connected component. All the linking nodes form several (at most $m$) chains at the bottom levels. The total number of joint nodes is $O(m)$ because the number of chains is at most $m$. For all the linking nodes, they form several chains. These chains starts from a linking node $u_i$ whose parent is a joint node $p(u_i)$. All such $p(u_i)$ nodes have disjoint subtrees. We assume the size of $p(u_i)$ is $n_i$, then we have $\sum_{i=1}^{m}n_i\le n$. The total length of all chains is:
  \begin{align*}
    &\sum_{i=1}^{m} \log (n_i+1)\le  m\log\left(\frac{\sum_{i=1}^{m}n_i+1}{m}\right)\le m\log\left(\frac{n}{m}+1\right)
  \end{align*}
  The above inequality can be shown by the Jensen's inequality.
\end{proof}

\begin{lemma}
\label{lem:sumbound}
  Let $\sum_{i=1}^{k}m_i=m$,  $\sum_{i=1}^{k}n_i=n$ and $\forall i, m_i,n_i \in \mathbb{Z^+}, m_i<\sqrt{n_i+m_i}$,
  then we have $\sum_{i=1}^{k} m_i\log n_i \in \bound$.
\end{lemma}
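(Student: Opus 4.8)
The plan is to prove the bound termwise and then aggregate with a concavity (Jensen) argument, so that the only real use of the hypothesis $m_i<\sqrt{n_i+m_i}$ is confined to a single clean inequality relating $\log n_i$ to $\log(n_i/m_i+1)$.

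First I would extract the consequences of the constraint. Squaring $m_i<\sqrt{n_i+m_i}$ gives $m_i^2<n_i+m_i$; for $m_i\ge 2$ this forces $n_i>m_i(m_i-1)\ge m_i$, and for $m_i=1$ we trivially have $n_i\ge 1=m_i$, so $m_i\le n_i$ always and hence $m_i^2<n_i+m_i\le 2n_i$. Taking logarithms, $\log m_i\le \tfrac12\log n_i+\tfrac12$. Writing $\log n_i=\log(n_i/m_i)+\log m_i$ and substituting yields $\tfrac12\log n_i\le \log(n_i/m_i)+\tfrac12$, i.e.\ $\log n_i\le 2\log(n_i/m_i)+1$. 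Since $m_i\le n_i$ we also have $n_i/m_i\ge 1$, so $\log(n_i/m_i+1)\ge 1$, which lets me absorb the additive $1$ and conclude the per-term bound $m_i\log n_i\le 3\,m_i\log(n_i/m_i+1)$.

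Summing over $i$, it then remains to show $\sum_{i=1}^{k} m_i\log(n_i/m_i+1)=O\!\left(\boundcontent\right)$. For this I would apply Jensen's inequality to the concave function $t\mapsto \log(t+1)$ (whose second derivative is negative), treating $\{m_i/m\}$ as a probability distribution over the values $n_i/m_i$:
\[
\sum_{i=1}^{k}\frac{m_i}{m}\,\log\!\left(\frac{n_i}{m_i}+1\right)\le \log\!\left(\sum_{i=1}^{k}\frac{m_i}{m}\cdot\frac{n_i}{m_i}+1\right)=\log\!\left(\frac{n}{m}+1\right).
\]
Multiplying by $m$ gives $\sum_i m_i\log(n_i/m_i+1)\le m\log(n/m+1)$, and combined with the per-term bound this proves $\sum_i m_i\log n_i\le 3m\log(n/m+1)=\bound$.

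I expect the termwise step to be the main obstacle, since it is the only place the square-root hypothesis enters and it is easy to get wrong: without it $\log n_i$ and $\log(n_i/m_i+1)$ can differ by an arbitrary factor (e.g.\ when $m_i$ is comparable to $n_i$), so I must verify that the constraint really forces $\log m_i=O(\log n_i)$ uniformly and that $n_i/m_i\ge 1$, so that the constant slack per term is affordable. Once that inequality is in hand, the Jensen aggregation is routine and mirrors the concavity argument already used for Lemma~\ref{lem:uppers}.
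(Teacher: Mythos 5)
Your proof is correct, but it takes a genuinely different route from the paper's. The paper argues by an extremal redistribution: it observes that, with $n$ fixed, the sum $\sum_i m_i\log n_i$ is maximized when the $n_i$ are proportional to the $m_i$ (i.e., $n_i=\frac{m_i}{m}n$), then substitutes this worst case and uses the hypothesis to get $m_i<\frac{2n}{m}$, from which $\sum_i m_i\log n_i\le m\log\frac{n}{m}+\sum_i m_i\log\frac{2n}{m}\in\bound$ follows. You instead keep the original $n_i$'s, prove the pointwise inequality $\log n_i\le 3\log(n_i/m_i+1)$ (which is where the square-root hypothesis enters, via $m_i^2<2n_i$ and hence $\log m_i\le\frac12\log n_i+\frac12$), and then aggregate with Jensen applied to the concave map $t\mapsto\log(t+1)$ with weights $m_i/m$ at the points $n_i/m_i$. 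Both arguments are sound and yield the same bound with an explicit small constant; yours has the advantage of sidestepping the paper's somewhat informal step in which the redistribution "might invalidate the condition $m_i<\sqrt{n_i+m_i}$" yet the condition is still invoked afterward, whereas the paper's version makes the extremal configuration (and hence the tightness of the bound) more transparent. Your edge-case checks ($m_i\le n_i$ in all cases, so $n_i/m_i+1\ge 2$ and the additive $1$ can be absorbed) are exactly the points that needed verification, and they go through.
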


\begin{proof}[Proof of Lemma~\ref{lem:sumbound}]
  We will show that given the conditions above, the maximum value of $\sum_{i=1}^{k} m_i\log n_i$ is no more than $c\boundcontent$ for some constant $c$.
  This follows the observation that $m_i\log n_i+m_j\log n_j$ gets the maximum value when $\frac{n_i}{n_j}=\frac{m_i}{m_j}$, given $n_i+n_j$ is fixed. Then for any two terms $n_i$ and $n_j$ on the left-hand side of the equation, if $n_i$ and $n_j$ are not distributed as the ratio of $m_i/m_j$, we re-distribute them as so, and the objective will never decrease. Thus the maximum value of the objective is when $n_i=\frac{m_i}{m}n$. This might invalidate the condition that $m_i<\sqrt{n_i+m_i}$, but will only make the bound looser.
  In this case, $m_i<\sqrt{n_i+m_i}<\sqrt{2n_i}=\sqrt{\frac{2m_i}{m}n}$, leading to $m_i<{2n\over m}$. Then we have:
    \begin{align*}
  \sum_{i=1}^{k} m_i\log n_i &\le \sum_{i=1}^{k} m_i\log \frac{m_i}{m}n\\
  &=\sum_{i=1}^{k} m_i\log \frac{n}{m} + \sum_{i=1}^{k} m_i\log {m_i}\\
  &<m\log \frac{n}{m}+\sum_{i=1}^{k} m_i\log {\frac{2n}{m}}\\
  &\in \bound
  \end{align*}

\end{proof}

\begin{lemma}
\label{lem:rotation}
  For a binary tree $T$, if its left and right subtrees are both valid WBB[$\alpha$] trees, and their weights differ by no more than $2/\alpha$, then we can rebalance $T$ as a WBB[$\alpha$] tree by a constant number of rotations, given $0 < \alpha \leq 1-1/\sqrt{2}$.
\end{lemma}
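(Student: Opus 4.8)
The plan is to follow the classical rebalancing analysis for weight-balanced trees~\cite{nievergelt1973binary}, adapted to the bounded-imbalance hypothesis here. Throughout, recall that a node $v$ is WBB$[\alpha]$-balanced exactly when each of its two children carries at least an $\alpha$ fraction of the total weight $w(v)$; equivalently the balance $\beta(v)=w(\mf{LeftTree}(v))/w(v)$ lies in $[\alpha,1-\alpha]$. Write $w_L,w_R$ for the weights of the left and right subtrees of the root of $T$, and $w=w_L+w_R$.

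First I would dispose of the trivial case: if the root of $T$ is already WBB$[\alpha]$-balanced, zero rotations suffice, so assume it is not, and by left--right symmetry assume the left subtree is the heavy one, i.e.\ $w_L > (1-\alpha)\,w$. Combining this with the hypothesis that the two weights differ by at most a factor of $2/\alpha$ (so $w_L/w_R \le 2/\alpha$) pins $w_L/w$ into a narrow window, $1-\alpha < w_L/w \le \tfrac{2}{2+\alpha}$, which is the quantitative statement that the root is only mildly out of balance.

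Next comes the standard case split on the children of the heavy subtree. Let $w_{LL},w_{LR}$ be the weights of the left and right children of the root of $T_L$ (so $w_{LL}+w_{LR}=w_L$), and note that since $T_L$ is a valid WBB$[\alpha]$ tree we have $w_{LL},w_{LR}\in[\alpha w_L,(1-\alpha)w_L]$. If the outer grandchild is at least as heavy as the inner one ($w_{LL}\ge w_{LR}$), I would apply a single right rotation at the root, which promotes the subtree of weight $w_{LL}$ to be the new left child and fuses the subtrees of weights $w_{LR}$ and $w_R$ into a new right child. If instead $w_{LL}<w_{LR}$, I would apply a double rotation (a left rotation inside $T_L$ followed by a right rotation at the root). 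In either case only a constant number of nodes are restructured, and all untouched subtrees remain valid WBB$[\alpha]$ trees, so it remains only to certify the $O(1)$ reshaped nodes near the root.

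The crux of the argument, and the step I expect to be the main obstacle, is verifying that every node created or modified by the chosen rotation is again WBB$[\alpha]$-balanced. Each such check reduces to an inequality in $\alpha$ obtained by feeding the window bound on $w_L/w$, together with the internal balance bounds on $w_{LL},w_{LR}$ (and, for the double rotation, on the children of $T_{LR}$), into the definition of $\beta$ for the new nodes. The constant $2$ in the hypothesis and the ceiling $\alpha\le 1-1/\sqrt{2}$ are exactly what make these inequalities hold: $1-1/\sqrt{2}$ is the well-known threshold below which weight balance is restorable by rotations, and the bounded imbalance $2/\alpha$ guarantees that at least one of the single or double rotation always lands every affected balance back inside $[\alpha,1-\alpha]$. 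Since this is precisely the classical BB$[\alpha]$ rebalancing calculation, I would cite the standard analysis for the generic inequalities and carry out explicitly only the few bounds that use the specific factor $2/\alpha$ rather than the tighter post-insertion imbalance of the textbook setting.
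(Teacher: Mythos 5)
Your overall framing is reasonable, but the crux you identify is exactly where the argument breaks, and it does break: with an imbalance as large as a factor of $2/\alpha$, one single or double rotation at the root is \emph{not} always enough, so your claim that ``at least one of the single or double rotation always lands every affected balance back inside $[\alpha,1-\alpha]$'' is false. Concretely, take $\alpha=1-1/\sqrt{2}$, $w_R=1$, $w_L=6.8$ (so $w_L/w_R\le 2/\alpha\approx 6.83$ and the root balance is $6.8/7.8\approx 0.872>1-\alpha$), and let $T_L$ have children of weights $w_{LL}=w_{LR}=3.4$ with $T_{LR}$ split as $w_{LRL}=1.0$ and $w_{LRR}=2.4$ (all consistent with the WBB$[\alpha]$ invariant, up to scaling to integer weights). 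Your single rotation produces a new right child with subtree weights $3.4$ and $1$, whose balance $3.4/4.4\approx 0.773$ exceeds $1-\alpha\approx 0.707$; the double rotation produces a new left child with subtree weights $3.4$ and $1.0$, which fails for the same reason. The classical one-or-two-rotation analysis you are importing is calibrated to the post-insertion/deletion regime, where a node's balance is only marginally outside $[\alpha,1-\alpha]$; the window you correctly derived, $w_L/w\in\left(1-\alpha,\,\frac{2}{2+\alpha}\right]$, is far wider, and the textbook inequalities genuinely fail inside it.

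What is missing is an argument for a cascade of rotations whose length is bounded by a constant depending on $\alpha$. For instance, after the single rotation above the offending new child again satisfies the hypotheses of the lemma with a strictly smaller weight ratio (roughly halved, since $w_{LR}\le w_L/2$ while $w_R$ is unchanged), so iterating drives the ratio below $(1-\alpha)/\alpha$ after $O(\log(2/\alpha))=O(1)$ rounds; alternatively one can descend the spine of the heavy subtree to a node of weight comparable to $w_R$, attach there, and rebalance upward with single/double rotations, which is exactly the weight-balanced join analysis of Lemma~3 in~\cite{blelloch2016just} that the paper itself appeals to (the paper gives no self-contained proof of this lemma, only that citation). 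Either way, you need to either (i) prove that each rotation contracts the imbalance ratio by a constant factor and close the induction, or (ii) reduce to the join lemma; the single/double case split at the root alone does not complete the proof.
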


Lemma \ref{lem:rotation} is a special case of Lemma 3 in \cite{blelloch2016just} and can be proved in a similar way as shown in \cite{blelloch2016just}.

\subsection{The Proof for Theorem~\ref{thm:setset}}
We now prove the cost bound of Algorithm~\ref{algo:setset}. Although Algorithm~\ref{algo:setset} is not very complicated, the analysis is reasonably involved, especially when we need to analyze all three different ordered set operations.
We first analyze the span in Lemma~\ref{lem:setdepth}, and the work bounds for the sketch step (Algorithm~\ref{algo:sketch}) in Lemma~\ref{lem:sketchingwork}, and the rebalance step (Algorithm~\ref{algo:rebalance}) in Lemma~\ref{lem:rebalancework}.
Combining these three lemmas proves the theorem.

Recall in the divide-and-conquer algorithm, we use \emph{pivots} (or \emph{splitters}\footnote{Throughout this section we use ``pivot'' and ``splitter'' interchangeably.}) to split the input trees to subproblems, and connect them back regardless of balance. We do so until reaching the base cases. As a result, all pivots in the algorithm form the upper levels of the sketch $T'$.  We refer to these upper levels containing all the pivots throughout the algorithm as the \emph{skeleton} of $T'$.
The skeleton consists of a connected collection of full binary trees of different sizes.
We first show two useful lemmas, revealing some useful information about the sketch tree $T'$.
We first bound the size of the skeleton of $T'$.

\begin{lemma}
\label{lem:mpivots}
For two input trees of sizes $m$ and $n\ge m$, there are in total $O(m)$ pivots in Algorithm \ref{algo:sketch}.
\end{lemma}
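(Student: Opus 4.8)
The plan is to count pivots structurally rather than geometrically. Each internal call of $\mf{Sketch}$ (one that passes the $n'=0$, $m'=0$, and $m'<\sqrt{n'+m'}$ guards and reaches the dividing step) splits its two inputs into $d$ chunks and connects the $d$ recursive results with exactly $d-1$ pivots, arranged as a full balanced binary tree on $\log_2 d$ levels. Summing over the whole recursion, the skeleton is therefore a binary tree whose leaves are the base-case subproblems and whose internal nodes are exactly the pivots, so the total number of pivots equals (number of leaves of the recursion tree) $-1$; it thus suffices to bound that leaf count. For a single internal call on inputs of sizes $n'$ and $m'$ with $s=n'+m'$, the base-case guard failing means $m'\ge\sqrt{s}$, and since $d=2^{\lceil\log_2\sqrt{s}\rceil}<2\sqrt{s}$, the call produces fewer than $2\sqrt{s}\le 2m'$ pivots. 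This ties the local pivot count to the number of smaller-set elements present in that call.

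I would then organize the recursion by level. Two facts drive the analysis: (i) every chunk has combined size $b=s/d=\Theta(\sqrt{s})$, so subproblem sizes are square-rooted at each level and the recursion has depth $O(\log\log(n+m))$; and (ii) the splits partition each input, so across all subproblems at a fixed level the smaller-set elements are conserved, $\sum_{v}m'_v\le m$. Combining (ii) with the per-call bound $d_v-1\le 2m'_v$ immediately gives an $O(m)$ bound on the pivots created at any single level, and hence an $O(m\log\log(n+m))$ bound overall.

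The main obstacle is upgrading this per-level bound to a geometrically decaying one so that the $\log\log$ factor collapses to the stated $O(m)$. The natural lever is that a chunk recurses only if it retains at least $\sqrt{\text{(its size)}}$ smaller-set elements while its size is merely the square root of its parent's, which intuitively should push a constant fraction of smaller-set elements into chunks that become base cases at each level, making $\sum_{v\text{ internal at level }\ell}m'_v$ shrink by a constant factor so the level contributions form a convergent geometric series. Establishing this decay is the crux, since conservation alone does not supply it: one must rule out the adversarial configuration in which the smaller set's elements stay ``dense'' (clustered so that every chunk containing any of them contains $\ge\sqrt{\text{size}}$ of them, keeping each level's pivot count near $m$) across all $O(\log\log(n+m))$ levels. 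If that clustering cannot be excluded, the cleanest provable statement is the weaker $O(m\log\log(n+m))$ bound from (i) and (ii); I note this is still comfortably within the $\bound$ work budget the theorem ultimately requires, so a robust fallback is to prove the weaker bound directly and feed it into the work analysis in place of Lemma~\ref{lem:mpivots}.
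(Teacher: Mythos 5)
Your proposal does not establish the lemma as stated: you prove $O(m)$ pivots per recursion level and hence $O(m\log\log(n+m))$ overall, and you explicitly concede you cannot collapse the $\log\log$ factor. The paper's proof avoids per-level summation entirely. It uses exactly your observation that the skeleton is a full binary tree whose internal nodes are the pivots and whose leaves are the base-case invocations, so the pivot count is one less than the leaf count; it then bounds the leaves globally by claiming that \emph{for any two sibling leaves, at least one must contain an element of $T_S$} --- otherwise, the paper argues, the parent subproblem would already have satisfied $m'<\sqrt{n'+m'}$ and been a base case one level up. Since each element of $T_S$ falls into at most one leaf chunk, this yields at most $2m$ leaves and hence $O(m)$ pivots, with no need for the geometric decay across levels that you correctly identify as the crux of your approach.

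That said, the obstruction you isolate is precisely the configuration this sibling claim must exclude, and the paper's one-sentence justification does not obviously do so: when $d$ is large, two adjacent chunks of the same $d$-way split can both be empty of $T_S$ elements while the split is legitimate, because the remaining chunks carry all $m'\ge\sqrt{n'+m'}$ elements; the common parent of two such sibling leaves is an interior pivot of that split's own full binary tree, not a recursive call that could have terminated earlier. Your clustering construction (take $n\approx m^2$, pack the $T_S$ elements into $\sqrt{m}$ chunks with $\sqrt{m}$ apiece, and recurse) appears to produce $\Theta(m)$ pivots at each of $\Theta(\log\log n)$ levels, so you should not treat the sibling claim as the missing step without further justification. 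Be careful with your fallback as well: $O(m\log\log(n+m))$ is \emph{not} always within $\bound$ (take $n=2m$, where the target is $O(m)$). To make the fallback airtight you need the sharper per-level bound $\min\bigl(O(m),\,O((n+m)/\sqrt{s_\ell})\bigr)$, where $s_\ell$ is the subproblem size at level $\ell$; summing this over levels gives $O\bigl(m(1+\log\log(\frac{n}{m}+2))\bigr)$, which is within $\bound$ and suffices for the downstream work analysis even if the lemma's $O(m)$ claim is too strong.
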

\begin{proof}
  We look at the tree skeleton consisting of all pivots in the algorithm. We will show that there at most $O(m)$ leaves in this skeleton. Each leaf corresponds to a function call of the base case. For any two sibling leafs, at least one of the elements in $T_S$ must fall into one of the two base cases, otherwise the algorithm should come to the base case on the upper level. This means that there can be at most $2m$ base cases.
\end{proof}

Intuitively, this is because when there are more than $m$ pivots, some of them must have an empty $T_2$ as input, which goes to the base case (Line \ref{line:basecaset1} in Algorithm \ref{algo:sketch}) immediately.

We then bound the height of all subtrees in the sketch tree $T'$.
\begin{lemma}{\em(Subtree height in sketch tree $T'$)}
\label{lem:step1height}
For any subtree in $T'$ obtained by the sketching step in Algorithm~\ref{algo:setset}, the height is no more than $O(\log n')$ for a subtree of size $n'$.
\end{lemma}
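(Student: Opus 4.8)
The plan is to set up and solve a height recurrence mirroring the $\Theta(\sqrt{\cdot})$-way recursion of \mf{Sketch}, and then bootstrap the root bound into a bound for \emph{every} subtree. Write $N'=n'+m'$ for the combined input size at a recursive call. A call that is not a base case produces $d-1=\Theta(\sqrt{N'})$ pivots forming a \emph{full} binary ``skeleton'' of height $\log_2 d=\lceil\frac12\log_2 N'\rceil$, below whose $d$ slots dangle the recursively sketched \tchunk{s}, each of combined input size $b=N'/d=\Theta(\sqrt{N'})$. Crucially, every one of the $d-1$ pivots stays in $T'$ (as a real node or a tomb) and is only filtered during rebalancing, so the subtree rooted at this call's skeleton root already contains $\ge\sqrt{N'}-1$ nodes. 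Base cases (reached when $m'<\sqrt{N'}$, or when one input is empty) return weight-balanced trees, whose height is $\log_2(\text{size})+O(1)$.

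First I would prove the root bound. Letting $f(N')$ denote the height of a sketch on combined input size $N'$, the structure above gives $f(N')\le \lceil\tfrac12\log_2 N'\rceil + f(\Theta(\sqrt{N'}))$, with the base-case branch capped by $\frac12\log_2 N'+O(1)$. Substituting $g(L)=f(2^{L})$ turns this into $g(L)\le \frac{L}{2}+O(1)+g(L/2)$, whose geometric unrolling over the $O(\log\log N')$ levels yields $g(L)\le L+O(\log L)$, i.e. $f(N')\le \log_2 N'+O(\log\log N')=O(\log N')$. Combining with the size lower bound $|T'|\ge\sqrt{N'}-1$ gives the root form I actually want, $f(N')\le 2\log_2|T'|+O(\log\log N')$.

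Next I would upgrade this to an arbitrary subtree $T_x$ of $T'$ by a structural case split on where its root $v$ sits. If $v$ lies inside a recursively sketched \tchunk{} (or a base-case tree), then $T_x$ is a subtree of that chunk and the claim follows by induction on recursion depth, the base of the induction being the balanced base-case trees. If $v$ is a skeleton pivot with sub-skeleton height $h_s$, then $T_x$ consists of a full binary tree of $\ge 2^{h_s}-1$ pivots together with the chunk roots hanging below, so $\mathrm{height}(T_x)=h_s+\max_j\mathrm{height}(C_j)$ while $|T_x|\ge 2^{h_s}-1$ and $|T_x|\ge|C_{j^*}|$ for the tallest chunk $C_{j^*}$. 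The full-binary-tree slot count gives $h_s\le\log_2(|T_x|+1)$, and the root bound of the previous paragraph (or the balanced bound if $C_{j^*}$ is a base case) gives $\mathrm{height}(C_{j^*})\le 2\log_2|C_{j^*}|+O(\log\log N')\le 2\log_2|T_x|+O(\log\log N')$; adding these yields $\mathrm{height}(T_x)\le 3\log_2|T_x|+O(\log\log N')=O(\log|T_x|)$.

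The main obstacle is keeping the multiplicative constant from degrading across the $O(\log\log n)$ recursion levels. The naive worry is that stacking a skeleton level (height $\log$, size $2^{h_s}$) on top of a chunk whose own height already saturates at $2\log$ of its size would push the constant up by one per level. I would resolve this by observing that a mid-skeleton subtree is only ever composed of one skeleton level sitting directly above \emph{chunk roots}, whose heights are controlled by the clean root bound rather than by any mid-skeleton bound; hence the composition is exactly (one $\log$) $+$ (a chunk root's $2\log$) $=3\log$ and never iterates. The remaining care is that the additive $O(\log\log N')$ term is genuinely lower order: a subtree spanning recursion levels $t$ down to the bottom contributes only $O(\log\log|T_x|)$ additive slack, which is absorbed into $O(\log|T_x|)$. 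A secondary subtlety, dispatched once at the outset, is that the whole argument must be uniform over \union{}, \intersection{}, and \difference{}; this is precisely why it matters that non-surviving pivots are kept as tombs (so the $\ge\sqrt{N'}-1$ size bound holds even for \intersection{}) and that recursion is entered only when $m'\ge\sqrt{N'}$.
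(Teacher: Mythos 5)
Your proposal is correct and follows essentially the same route as the paper's proof: decompose any subtree into a skeleton of stacked full binary trees whose heights form the geometric series $\tfrac12\log_2 N' + \tfrac14\log_2 N' + \cdots = O(\log N')$, plus balanced base-case trees at the bottom. Your version is somewhat more careful than the paper's (which jumps from $O(\log d_0)$ to $O(\log n)$ without explicitly tying the bound to the subtree's own size), since you pin the height to $|T_x|$ via the observations $|T_x|\ge 2^{h_s}-1$ and $|T'|\ge\sqrt{N'}-1$ (the latter using that tombs are retained in the sketch), but the underlying argument is the same.
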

\begin{proof}
For any subtree $T_x \in T'$, there are two parts. The first several upper levels are in the skeleton, and the bottom several levels are those obtained by base cases. For the base case subtrees, they are balanced, and thus the height is at most $O(\log n')$. For the skeleton, it consists of several full binary trees of size $d$, for different $d$ values of recursion calls. Assume on the topmost (complete) level it is $d_0$-way dividing, so the first several levels should be a full binary tree of height $O(\log_2 d_0)$. Then for the next levels, it is at most $\sqrt{d_0}$-way, so the height is at most $\log_2 \sqrt{d_0}$. So on so forth. There also can be an incomplete full binary tree above $d_0$, and the height can be at most $\log_2 d_0$. Therefore the height of the skeleton is at most $2\log_2 d_0 + \log_2 \sqrt{d_0}+\dots=O(\log d_0)=O(\log n)$.
\end{proof}

This is guaranteed by the base case algorithms and the connecting substep of the \mf{Sketching} step.

We now prove the span of the algorithm.
Intuitively, the sketching step is a $\sqrt{n+m}$-way recursion, and thus requires logarithmic span.
For the rebalancing step, either reconstruction or rebalance by rotations costs a constant amortized time per level.
We formally show the following lemma.

\begin{lemma}
\label{lem:setdepth}
  Algorithm \ref{algo:setset} has span $O(\log n)$ in the \bfmodel{}.
\end{lemma}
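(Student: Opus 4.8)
The plan is to bound the two phases of Algorithm~\ref{algo:setset} separately. Since \mf{Rebalance} is invoked only after \mf{Sketch} returns, their spans add, so it suffices to show each of \mf{Sketch} and \mf{Rebalance} has span $O(\log N)$, where I write $N=n+m$ and use $m\le n$ so that $\log N=\Theta(\log n)$.

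First I would analyze \mf{Sketch} (Algorithm~\ref{algo:sketch}). On a recursive call with inputs of total size $N'=n'+m'$, the non-recursive work consists of $d-1$ dual-binary searches for the splitters, $2d$ persistent \mf{Split}s, and the connecting step. Each individual search or split has span $O(\log N')$, and the two \texttt{parallel for} loops over $\Theta(d)$ indices are realized by binary forking, contributing only $O(\log d)=O(\log N')$ for the fork/join tree; the connecting step builds a balanced skeleton of the $d-1$ pivots on $\log_2 d$ levels, again $O(\log d)$ span. Hence each recursion level contributes $O(\log N')$ span, and the base-case leaves (Line~\ref{line:basecase}) have span $O(\log N')$ by the base-case algorithm, which is also dominated by $O(\log N)$. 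Because $d=\Theta(\sqrt{N'})$ and $b=(n'+m')/d=\Theta(\sqrt{N'})$, every chunk passed to the next level has total size $\Theta(\sqrt{N'})$; thus the size at recursion depth $i$ is $N_i=\Theta(N^{1/2^i})$ and $\log N_i=\Theta((\log N)/2^i)$. Summing the per-level spans over the $O(\log\log N)$ levels gives a geometric series
\[
\sum_{i\ge 0} O(\log N_i)=O(\log N)\sum_{i\ge 0}2^{-i}=O(\log N),
\]
so \mf{Sketch} has span $O(\log N)$.

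Next I would analyze \mf{Rebalance} (Algorithm~\ref{algo:rebalance}), whose recursion mirrors the sketch tree $T'$, which has height $O(\log N)$ by Lemma~\ref{lem:step1height}. I would bound the span along any root-to-leaf path of the recursion. At an internal node that is \emph{not} reconstructed (the imbalance test on Line~\ref{line:ifbalance} fails), the algorithm only compares effective sizes, recurses on its two children \emph{in parallel}, and reassembles with one \mf{Connect} plus a constant number of rotations; since the two subtrees are almost-balanced, Lemma~\ref{lem:rotation} guarantees the reassembly on Line~\ref{line:rotation} is $O(1)$ span, so each such node contributes $O(1)$. The base-case returns (effective size $0$, a base-case subtree at Line~\ref{line:skipbase}, or no $T_S$ node falling into $T'$ at Line~\ref{line:skip1}) are likewise $O(1)$. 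The only expensive case is reconstruction (Lines~\ref{line:reconstruct}--\ref{line:reconstructreturn}): flattening a subtree of height $h'$ to a sorted array and rebuilding a balanced tree costs $O(h')$ span — but this case \emph{returns immediately}, so it can occur at most once on any path, always as its terminal node. Therefore any path consists of $O(\log N)$ nodes of $O(1)$ span followed by at most one reconstruction; and if that reconstruction sits at depth $d'$ in $T'$ on a subtree of height $h'$, then $d'+h'\le\mathrm{height}(T')=O(\log N)$, so the path's total span is $O(d')+O(h')=O(\log N)$.

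Adding the two phases gives span $O(\log N)=O(\log n)$. I expect the delicate point to be the \mf{Rebalance} argument: one must verify that the $O(h')$ reconstruction cost does not accumulate down the recursion (it terminates the branch) and is effectively ``a constant per level,'' charged against the height it consumes, so that a deep recursion path and a tall reconstruction cannot co-occur on the same branch. A secondary check is that the reassembly truly takes $O(1)$ span, which rests on the almost-balanced precondition established on Line~\ref{line:ifbalance} together with Lemma~\ref{lem:rotation}.
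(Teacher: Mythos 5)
Your proposal is correct, and the \mf{Sketch} half is essentially the paper's argument: the paper writes the recurrence $T(m+n)=c_1\log(m+n)+T(\sqrt{m+n})$ and solves it, while you unroll the same recurrence into the geometric series $\sum_i O(\log N/2^i)$ — the same calculation. For \mf{Rebalance} your route differs slightly in a way worth noting. The paper argues by induction on subtree size: in the almost-balanced case each child has at most $c|T|$ elements for a constant $c<1$, so the recursion has $O(\log|T|)$ levels of $O(1)$ cost each, and in the reconstruction case it charges $O(\mathrm{height})=O(\log|T|)$ via Lemma~\ref{lem:step1height} and stops. You instead observe that the recursion literally descends the sketch tree $T'$, bound every root-to-leaf recursion path by $\mathrm{height}(T')=O(\log N)$ using Lemma~\ref{lem:step1height} directly, and charge the single terminal reconstruction of height $h'$ at depth $d'$ against $d'+h'\le\mathrm{height}(T')$. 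Your version uses the balance test only to justify the $O(1)$ reassembly (via Lemma~\ref{lem:rotation}), not to bound the recursion depth; this sidesteps having to argue that the \emph{actual} (tomb-inclusive) child sizes shrink by a constant factor, which is the one mildly delicate point in the paper's induction. One small imprecision on your side: the terminal returns at Lines~\ref{line:skipbase} and~\ref{line:skip1} invoke $\mf{RemoveLast}$, which walks a spine and costs $O(h')$ rather than $O(1)$; but since these, like reconstruction, terminate the branch, the same $d'+h'\le O(\log N)$ charging you already use for reconstruction covers them, so the bound is unaffected.
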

\begin{proof}
  We first look at the \mf{Sketching} step working on two trees of size $n'$ (from the original large tree) and $m'$ (from the original small tree). The span for finding $d-1$ pivots, checking tombs, splitting, and the \texttt{connecting} functions are all $O(\log (n'+m'))$. For the next level of recursive call, the size of the problem shrink to $\sqrt{m'+n'}$. The span of a base case is also $O(\log (m'+n'))$. The recursion is therefore:
  $$T(m+n)=c_1 \log (m+n) + T(\sqrt{m+n})$$
  The solution is $T(m+n)\in O(\log (m+n))$.

  For the \mf{Rebalancing} step. We show this by induction that the span to rebalance a tree $T$ obtained by the sketching step is $O(\log |T|)$. The base case is straightforward. Other than the base case, there are two cases.

  If a subtree has its left and right children unbalanced, we need to reconstruct the tree structure. This requires to flatten all elements in an array and then build a complete binary tree on top of the array. The span of this step is no more than the height of the tree, which, according to Lemma \ref{lem:step1height}, is $O(\log |T|)$.

  Otherwise, each of the left and right subtrees has no more than $c|T|$ elements for some constant $c<1$. Rebalancing each of them, according to the inductive hypothesis, only need span $O(\log c|T|)$. There is an extra cost in rebalancing at the root of $T$ by rotation, which is constant time.

  For both cases, settling the subtree $T$ cost span $O(\log |T|)$. For our algorithm, it is $O(\log (m+n))=O(\log n)$.
\end{proof}

\hide{
\begin{lemma}
\label{lem:unionbalance}
  For any subtree $T_x$ in sketch $T'$, $T_x$ is balanced if tombs are not considered.
\end{lemma}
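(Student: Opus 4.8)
The plan is to prove the stronger statement that \emph{every} internal node of $T'$ satisfies the WBB$[\alpha]$ weight invariant once tombs are counted as ordinary nodes; since a tree is balanced exactly when all of its nodes are, this immediately gives that every subtree $T_x\in T'$ is balanced. I would argue by induction on the recursion depth of \mf{Sketch} (Algorithm~\ref{algo:sketch}), classifying each node of $T'$ as either (i) a node produced inside a base-case call, (ii) a node of a \tchunk{} returned by a recursive call, or (iii) a pivot on the skeleton created by the current \mf{Sketch} call. This lemma is exactly the statement that the only source of imbalance in $T'$ is the later removal of tombs.

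Cases (i) and (ii) are handled immediately: the base-case algorithms output balanced trees by construction, and each \tchunk{} is itself the output of a recursive \mf{Sketch} call, hence balanced by the inductive hypothesis. The only new work is case (iii), the skeleton pivots. Here I would use the two structural facts guaranteed by the construction: $d=2^{\lceil\log_2\sqrt{m'+n'}\rceil}$ is a power of two, so the $d-1$ pivots form a \emph{perfect} binary tree of height $\log_2 d$ whose $d$ leaf slots hold the \tchunk{s}; and the pivots are chosen as the global $b$-th, $2b$-th, $\dots$ elements with $b=(m'+n')/d$, so every \tchunk{} has the same number of elements up to $\pm1$.

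Then for a pivot $v$ at height $h$ above the \tchunk{s} ($1\le h\le \log_2 d$), perfectness forces its left and right subtrees to contain \emph{exactly} the same number of \tchunk{s}, namely $2^{h-1}$ each, together with the same number $2^{h-1}-1$ of interior pivots. Writing $N=m'+n'$, each subtree therefore has size $2^{h-1}(b\pm1)+(2^{h-1}-1)=\Theta(2^{h-1}b)$, and the two sizes can differ only through the per-\tchunk{} rounding, i.e.\ by at most $2^{h-1}$. Hence the relative size (and thus weight) difference at $v$ is $O\!\left(2^{h-1}/(2^{h-1}b)\right)=O(1/b)=O(1/\sqrt N)$. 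This $o(1)$ imbalance is below the WBB$[\alpha]$ slack for all $N$ above a fixed constant, and whenever $N$ is below that constant the call falls into a base case (via the $m'<\sqrt{n'+m'}$ test or the empty-input tests), whose output is balanced. Combining the three cases completes the induction, so every node of $T'$ — and hence every subtree $T_x$ — is balanced once tombs are ignored.

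The main obstacle I expect is bookkeeping the rounding cleanly: the $\pm1$ spread in \tchunk{} sizes and the exact pivot counts must be tracked so that the weight ratio (not merely the height, which Lemma~\ref{lem:step1height} already controls) is verified uniformly — including at the junction between the lowest skeleton level and the \tchunk{} roots, across the nested skeletons of successive recursion levels, and in the degenerate splits where $d=2$ or a \tchunk{} is (nearly) empty. I would isolate these corner cases by observing that they are precisely the regime where $N$ is small enough for the base-case tests to fire, so the $O(1/\sqrt N)$ argument only needs to be invoked when $N$ is large.
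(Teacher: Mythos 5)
There is a genuine gap in your case (iii), and it is exactly the point the paper's proof turns on: you have ignored duplicates. The pivots are chosen as the $(i\cdot b)$-th elements \emph{with duplicate values counted twice}, so each \tchunk{} pair $(T_{1,i},T_{2,i})$ has combined multiset size $b$ (up to rounding) --- but the sketch returned by the recursive \mf{Sketch} call merges an element appearing in both trees into a single node. Counting tombs as ordinary nodes, the sketch is effectively the \union{} of the two inputs, so a \tchunk{} containing many common elements can shrink to as few as $b/2$ nodes while its sibling retains all $b$. Your computation ``each subtree has size $2^{h-1}(b\pm1)+(2^{h-1}-1)$, so the two sizes differ by at most $2^{h-1}$'' silently assumes each merged \tchunk{} keeps its designed size $b$; in fact sibling subtrees at every skeleton level can differ by a factor approaching $2$, an absolute gap of $\Theta(2^{h-1}b)$, not $2^{h-1}$. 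Consequently your claimed $O(1/\sqrt{N})$ imbalance is false, and the step ``this $o(1)$ imbalance is below the WBB$[\alpha]$ slack for large $N$'' collapses; your fallback to base cases for small $N$ does not rescue it, since the factor-$2$ shrinkage occurs at all scales.

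The lemma nevertheless holds, and the correct finish is the paper's one-line observation: without duplicates all \tchunk{s} have equal size, duplicates shrink any \tchunk{} by at most a factor of two, and a factor-two discrepancy between siblings is within the weight-balanced tolerance (for WBB$[\alpha]$ with $0<\alpha\le 1-1/\sqrt{2}$ the permitted sibling weight ratio is at least $1+\sqrt{2}>2$, and the paper's rebalancing step treats subtrees as almost-balanced up to a factor $2/\alpha$). In other words, the conclusion is saved not by the imbalance vanishing as $N$ grows, but by the balance criterion having enough constant-factor slack to absorb duplicate collapse. Your overall induction skeleton (base-case trees balanced by construction, recursive \tchunk{s} balanced by hypothesis, new pivots analyzed directly) is sound and matches what is needed; you should repair case (iii) by replacing the rounding analysis with a bound showing each subtree of a pivot has between half and all of its designed weight, then checking that ratio against the invariant. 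Also note that your opening reduction --- proving strict WBB$[\alpha]$ at every node --- is stronger than required and is exactly what the duplicate issue makes delicate; the paper only needs balance up to the constant-factor criterion used by \mf{Rebalance}.
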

\begin{proof}
  The sketch is effectively the result of \union{} on the two input trees. If there are no duplicates in the two input sets, the \tchunk{s} are of the same size.  Considering the duplicates, the size of each \tchunk{} can shrink by at most a half. In this case, the tree is still balanced under the weight-balanced invariant.
\end{proof}
This lemma guarantees that the results returned by the if-condition at Line \ref{line:removelast} are valid.}

Next we prove the work of the algorithm, we start with showing that the total work of all bases cases is $\bound$.
\begin{lemma}
\label{lem:basecasework}
For two input trees of sizes $m$ and $n\ge m$, all base cases in Algorithm~\ref{algo:setset} require $\bound$ work.
\end{lemma}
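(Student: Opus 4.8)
The plan is to reduce the claim directly to Lemma~\ref{lem:sumbound}. Each invocation of the base case algorithm on a pair of trees $T_1,T_2$ of sizes $n'_i$ and $m'_i$ costs $O(m'_i\log n'_i)$ work (this is exactly what is established for \union{}, \intersection{}, and \difference{} in Appendix~\ref{sec:set:inefficient}), so the total base-case work is $O\!\left(\sum_i m'_i\log n'_i\right)$, where $i$ ranges over the leaves of the \mf{Sketch} recursion tree of Algorithm~\ref{algo:sketch}. First I would show that the terms $(m'_i,n'_i)$ satisfy precisely the hypotheses of Lemma~\ref{lem:sumbound}, at which point the bound $\bound$ falls out.

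The key structural step is the partition property. At each recursive level, the dividing step of Algorithm~\ref{algo:sketch} splits $T_1$ and $T_2$ by disjoint splitter ranges, so every element of $T_L$ (resp.\ $T_S$) falls into at most one \tchunk{}; following the recursion down to its leaves, every element reaches at most one base case. Hence $\sum_i n'_i\le n$ and $\sum_i m'_i\le m$. Moreover, a base case is reached exactly under the guard $m'_i<\sqrt{n'_i+m'_i}$ on Line~\ref{line:basecase}, which is the per-term condition demanded by Lemma~\ref{lem:sumbound}; and for $m'_i\ge 1$ squaring this guard gives $n'_i>m'_i(m'_i-1)\ge 0$, and in particular $n'_i\ge m'_i$, so the relevant terms live in the regime $n'_i\ge m'_i\ge 1$.

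Next I would dispose of the degenerate leaves. Base cases with $m'_i=0$ (Line~\ref{line:basecaset1}) or $n'_i=0$ merely return a pointer and cost $O(1)$ each; by Lemma~\ref{lem:mpivots} there are only $O(m)$ base-case calls in all, so their combined cost is $O(m)$, which is absorbed into $\bound$ since $\log_2(n/m+1)\ge 1$ for $n\ge m$. For the remaining leaves I would apply Lemma~\ref{lem:sumbound} to conclude $\sum_i m'_i\log n'_i=\bound$. Because the lemma is stated for sums equal to $m$ and $n$ while here the partial sums are only $\le m$ and $\le n$, I would invoke the routine monotonicity of the map $x\mapsto x\log(y/x+1)$ in each argument on the region $y\ge x\ge 1$ to pass from the actual partial sums up to $m$ and $n$ (alternatively, pad with dummy terms respecting the constraints). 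Summing the degenerate and nondegenerate contributions yields the stated $\bound$ bound.

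The only genuinely delicate point—and the step I would state most carefully—is the disjointness claim underlying $\sum_i n'_i\le n$ and $\sum_i m'_i\le m$: one must check that splitter elements consumed as pivots are never double-counted, and that the persistent \mf{Split} on Line~\ref{line:split} truly partitions the keys across the $d$ \tchunk{s} at each level so that the leaves of the recursion form a disjoint cover of the elements. Everything else is immediate from the already-proven per-base-case cost bounds or is packaged inside Lemma~\ref{lem:sumbound}.
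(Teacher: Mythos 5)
Your proof is correct and follows essentially the same route as the paper's: bound each base case by $O(m'_i\log n'_i)$, observe that the leaves of the \mf{Sketch} recursion partition the elements so that $\sum_i m'_i\le m$ and $\sum_i n'_i\le n$ with each leaf satisfying $m'_i<\sqrt{n'_i+m'_i}$, and invoke Lemma~\ref{lem:sumbound}. The extra care you take with degenerate leaves and with the partial sums being $\le m,n$ rather than exactly $m,n$ tightens points the paper's one-line proof glosses over, but it is a refinement of the same argument, not a different one.
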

\begin{proof}[Outline]
All the non-trivial base cases are when $m'<\sqrt{n'}$ in the recursive calls,
which will directly invoke the base case algorithms shown in Section \ref{sec:set:inefficient}. All such base case algorithms have $O(m'\log n')$ work.
Suppose there are $k$ such base case calls, each with input size $m_i$ and $n_i$ for $1\le i\le k$, where $\sum n_i=n, \sum m_i=m$ and $m_i<\sqrt{n_i+m_i}$. Then the total work is asymptotically $\sum_{i=1}^{k}m_i\log n_i$, which is $O(\bound)$ based on Lemma~\ref{lem:sumbound}.
\end{proof}

Next, we prove that the sketching step (Algorithm~\ref{algo:sketch}) uses work $\bound$.
This process is a leaf-dominated recursion and thus the total work is bounded by all bases cases.

\begin{lemma}
\label{lem:sketchingwork}
  The sketching step (Algorithm~\ref{algo:sketch}) has $\bound$ work.
\end{lemma}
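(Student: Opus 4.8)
The plan is to split the cost of \mf{Sketch} (Algorithm~\ref{algo:sketch}) into two parts---the \emph{dividing/connecting} work done at the internal (non-base-case) recursive calls, and the \emph{base-case} work---and to show each part is $\bound$. The base-case part is immediate: Lemma~\ref{lem:basecasework} already gives that the total work of all base cases is $\bound$, so it remains only to bound the dividing/connecting work and to argue that it does not dominate.

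First I would pin down the per-call cost. For a recursive call on trees of sizes $n'$ and $m'$ with $s'=n'+m'$ and branching factor $d=\Theta(\sqrt{s'})$, the call finds $d-1$ splitters by dual-binary search, performs $O(d)$ persistent \mf{Split}s, and connects $d$ results. By Lemma~\ref{lem:step1height} every relevant subtree has height $O(\log s')$, so each search and each split costs $O(\log s')$; hence the local dividing/connecting cost of an internal call is $O(\sqrt{s'}\log s')$. Summing the $O(d)$ connection terms, and the $O(1)$ cost of the trivial $m'=0$ and $n'=0$ leaves, over all calls is harmless, because Lemma~\ref{lem:mpivots} bounds the total number of pivots (hence $\sum_v d_v$) by $O(m)\subseteq\bound$.

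The heart of the argument is to bound $\sum_{\text{internal }v} O(\sqrt{s_v}\log s_v)$ by $\bound$. I would do this by induction on the recursion, proving $W(n',m')=O\!\left(m'\log(n'/m'+1)\right)$ for an entire call. Each internal call splits its $s'$ elements into $d=\Theta(\sqrt{s'})$ children of total size $\sqrt{s'}$ each, so the children's size pairs satisfy $\sum_i n'_i=n'$ and $\sum_i m'_i=m'$. Applying the inductive bound to the children and combining via the concavity of $t\mapsto\log(t+1)$ (a Jensen-type inequality in the same spirit as Lemma~\ref{lem:sumbound}) yields $\sum_i m'_i\log(n'_i/m'_i+1)\le m'\log(n'/m'+1)$. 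This exhibits the recursion as \emph{leaf-dominated}: the total work is driven by the base-case (leaf) level, whose cost is $\bound$ by Lemma~\ref{lem:basecasework}.

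The main obstacle is absorbing the additive local cost $O(\sqrt{s'}\log s')$ into this induction, since at a ``threshold'' node where $m'\approx\sqrt{s'}$ the dividing cost $\Theta(\sqrt{s'}\log s')$ is of the same order as the node's own budget $\Theta(\sqrt{s'}\log(n'/m'+1))$, so the naive induction does not close additively. I would resolve this by observing that $m'\ge\sqrt{s'}$ at such a node forces its children to be base cases whose total cost already accounts for $\Theta(\sqrt{s'}\log s')$ (again by Lemma~\ref{lem:basecasework}), so the local cost can be charged against the child base-case work rather than against the node's budget; when instead $m'\gg\sqrt{s'}$, the budget $m'\log(n'/m'+1)$ strictly dominates $\sqrt{s'}\log s'$ and the induction closes with room to spare. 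Making this charging precise---equivalently, showing the per-level dividing cost grows geometrically toward the base-case level so that the sum over the $O(\log\log(n+m))$ levels of the $\sqrt{n+m}$-way recursion is dominated by that level---is the one delicate step; everything else is bookkeeping.
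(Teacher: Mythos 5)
Your decomposition (base cases via Lemma~\ref{lem:basecasework}, connecting cost via the $O(m)$ pivot bound of Lemma~\ref{lem:mpivots}, plus the per-call dividing cost $O(\sqrt{s'}\log s')$) matches the paper's, and the second route you sketch at the end --- per-level dividing cost growing geometrically toward the deepest level of the $\sqrt{\cdot}$-way recursion --- is in fact the paper's actual argument. But the step you defer as ``the one delicate step'' is the entire content of the lemma, and your primary route for it does not work. The inductive invariant $W(n',m')=O\left(m'\log(n'/m'+1)\right)$ is false at internal recursive calls: chunks are formed by \emph{global rank} among the $m'+n'$ elements, so a chunk can consist almost entirely of $T_S$-elements, giving a child with, say, $n'_i=O(1)$ and $m'_i=\Theta(\sqrt{s'})$. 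Such a child is not a base case (it fails $m'_i<\sqrt{n'_i+m'_i}$), its budget $m'_i\log(n'_i/m'_i+1)=\Theta(n'_i)=O(1)$, yet its own dividing cost is already $\Theta(\sqrt{m'_i}\,\log m'_i)$. So the induction cannot close even before you reach the threshold issue you flag, and the Jensen-style recombination never gets off the ground with this invariant.

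The paper avoids any per-node budget. It sums the dividing cost level by level over the recursion tree: level $i$ contributes $t^{1-1/2^{i+1}}\cdot\frac{\log t}{2^i}$ where $t=m+n$, a leaf-dominated (geometrically increasing) series, so the total is $O$ of the deepest level. The missing computation in your write-up is the bound on that deepest level itself: by Lemma~\ref{lem:mpivots} the width there is at most $O(m)$, i.e.\ $t^{1-1/2^{i+1}}\le m$; writing $x=t^{1/2^{i+1}}$ the level cost is $\frac{t}{x}\log x$, which is decreasing in $x$, and $x\ge t/m$ gives $O\left(m\log\frac{t}{m}\right)=\bound$. Your proposal instead asserts the geometric sum is ``dominated by that level,'' identifies that level with the base-case level, and charges it to Lemma~\ref{lem:basecasework}; but the deepest \emph{internal} (dividing) level is not the base-case level, and the charging of a node's $\Theta(\sqrt{s'}\log s')$ dividing cost to its children's base-case work is only plausible when essentially all children are nontrivial base cases, which you neither guarantee nor prove. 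As written, the central quantitative claim is not established.
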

\begin{proof}
  First of all, for all base cases, the total cost is $\bound$ based on Lemma \ref{lem:basecasework}.

  Excluding the base cases, we now look at the work caused by the \emph{sketching} step on two trees of size $n'$ (from the original large tree) and $m'$ (from the original small tree).

  In the parallel-for loops, the main work are for 1) finding the splitters, 2) splitting each tree into $d$ pieces and checking if each splitter is a tomb. All these cost work $O(\sqrt{n'+m'}\log (n'+m'))$. Then there will be $d\approx \sqrt{n'+m'}$ recursive calls, each with size about $\sqrt{n'+m'}$.

  For all connecting steps, since there are at most $O(m)$ base cases, there are at most $O(m)$ connecting pivots. Each connection costs a constant time, which means that this part only costs work $O(m)$.


%

  We then prove that the total work in parallel for loops is also $O\left(m'\log\frac{n'}{m'}\right)$. Consider the recurrence tree of this algorithm, which has $\sqrt{m'+n'}$-way fan-out for a recursion with sizes $n'$ and $m'$, and the cost of the current node is $\sqrt{n'+m'}\log(n'+m')$. In our algorithm, we stop recursing when $m'\ge \sqrt{m'+n'}$ and call the base case algorithm instead of waiting until $m'$ or $n'$ reaches zero. As a result, the tree is not perfectly balanced. Some branches can be shallow (reaching base cases earlier) and some can be deep (reaching base cases later).
  The cost at each node, however, is exactly the same in the same level, but decreases as the tree goes deeper. Therefore, given the number of leaves fixed, the worst case occurs when all nodes are at the topmost several levels.
  Let $t=m'+n'$, the total work for these operations is:
  \begin{align*}
    &t^{1/2}\log t + t^{1/2}\cdot t^{1/4}\log t^{1/2} + t^{1/2}\cdot t^{1/4}\cdot t^{1/8}\log t^{1/4} + \dots\\
    =&\sum \frac{\log t}{2^i}t^{1-1/2^{i+1}}
  \end{align*}
  This recurrence is leaf-dominated. 
  The recurrence suggested that in the $i$-th term, there are $t^{1-1/2^{i+1}}$ subtasks each costing $\frac{\log t}{2^i}$ work.
  We then want to know, when the recurrence stops, how much work we pay. From Lemma \ref{lem:mpivots}, we know the total number of pivots is no more than $m'$, and thus $t^{1-1/2^{i+1}}\le m'$.
  Let $x=t^{1/2^{i+1}}$, the total leaf cost is asymptotically
  $$t^{1-1/2^{i+1}}\log t^{1/2^{i+1}}=\frac{t}{x}\log x$$
  Given $t$ fixed, this function is decreasing when $x\rightarrow +\infty$. Considering $t^{1-1/2^{i+1}}\le m'$, we know that $x\ge t/m'$. Plug in $x=t/m'$ we know that the total work is at most
  $$O\left(\frac{t}{x}\log x\right)=O\left(m'\log\frac{t}{m'}\right)=O\left(m'\log\frac{n'+m'}{m'}\right)=O\left(m'\log\frac{n'}{m'}\right)$$


  Therefore, the work of the sketching step is $\bound$.
\end{proof}

Finally, we show the total work of rebalancing is $\bound$. Intuitively, this is because the amortized time to settle a tree node is a constant for either rotation or rebalancing. In addition, for filling in all tombs, the total work is $\bound$. We formally prove it as follows.
\begin{lemma}
\label{lem:rebalancework}
  The rebalancing step (Algorithm~\ref{algo:rebalance}) has $\bound$ work.
\end{lemma}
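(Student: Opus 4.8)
The plan is to charge the total rebalancing work against two sources: the $O(m)$ pivots (the skeleton), and the reconstruction cost incurred at imbalanced nodes. For the first source, I would argue that for every node of the sketch $T'$ that the algorithm visits recursively (Lines settling a node top-down), if that node is \emph{almost-balanced} it is settled by at most a constant number of rotations (Lemma~\ref{lem:rotation}), contributing $O(1)$ work. Since the algorithm only descends into the children of a node when that node is almost-balanced, the set of nodes visited this way is exactly the skeleton together with the roots of the base-case subtrees, plus the nodes along the recursion that are filtered for tombs. By the two base-case early-exit conditions (effective size $0$, and no $T_S$-node falling into $T'$, checked via $\mb{small}(T')$), the recursion stops as soon as it reaches a subtree containing no small-tree element, so the only nodes ever examined are those on the search paths to $T_S$-elements, i.e.\ the \emph{\upper{s}} of the $T_S$-elements. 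By Lemma~\ref{lem:uppers} (applied with the height bound from Lemma~\ref{lem:step1height}, which certifies its hypothesis), there are at most $\bound$ such \emph{red} nodes, so the per-level constant-work charge sums to $\bound$.

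The second and harder source is the reconstruction cost: whenever a subtree $T_x$ is found to be imbalanced (the condition on Line~\ref{line:ifbalance}), we flatten and rebuild it at cost $\Theta(|T_x|)$. The key is the observation already flagged in Section~\ref{sec:set:outline}: reconstruction of $T_x$ is triggered \emph{only} when $T_x$ contains at least $c\,|T_x|$ red nodes for some constant $c$. I would prove this by showing that the effective size of each child of $T_x$ differs from its sketched size (which is balanced by the pivot selection, since each \tchunk{} is designed to be the same size) by at most the number of red nodes $m_x$ falling into that child: only $T_S$-elements and duplicates can be added or deleted, and each such change is accounted for by a red \upper{}. Hence for the two children to become imbalanced by more than the factor $2/\alpha$, we need $m_x \geq c|T_x|$. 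This lets me charge the $\Theta(|T_x|)$ reconstruction cost to the $m_x = \Omega(|T_x|)$ red nodes inside $T_x$, i.e.\ $O(1)$ amortized per red node.

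The remaining step is to verify that this amortized charging does not double-count: a single red node must not be charged by two different reconstructions. Since reconstruction of $T_x$ is a single flattening of the \emph{entire} subtree and the algorithm returns the rebuilt tree immediately (Lines~\ref{line:reconstruct}--\ref{line:reconstructreturn}) without recursing into $T_x$'s children, no descendant of a reconstructed $T_x$ is ever separately reconstructed; reconstructions therefore occur at a set of node-disjoint subtrees, and each red node lies in at most one of them. The tomb-filling work (extracting last elements to replace tombs) is absorbed into this same accounting, since each extraction walks a path already counted among the red \upper{s}. Summing the $\bound$ skeleton work, the $\bound$ amortized reconstruction work, and noting the base-case total is $\bound$ by Lemma~\ref{lem:basecasework}, gives the claimed $\bound$ bound. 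The main obstacle I expect is making the ``effective size changes by at most $m_x$'' claim fully rigorous across all three set operations simultaneously, since \union{}, \intersection{}, and \difference{} treat tombs and duplicates differently; I would handle this uniformly by bounding, for each operation, the discrepancy between size and effective size of a subtree by the number of red nodes it contains.
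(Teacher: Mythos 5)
Your overall strategy is the same as the paper's: mark the upper nodes of the $T_S$-elements in $T'$ as red, note there are $\bound$ of them by Lemma~\ref{lem:uppers} (with Lemma~\ref{lem:step1height} supplying its height hypothesis), charge $O(1)$ per red node for rotations, and observe that a reconstruction of $T_x$ can only be triggered when the discrepancy between the two children's effective sizes---which is at most the number $M$ of $T_S$-elements falling into $T_x$, and hence at most the number of red nodes $R(T_x)$---is a constant fraction of $|T_x|$, so the $\Theta(|T_x|)$ flattening cost is $O(R(T_x))$. The paper organizes this as an induction over subtrees rather than a disjointness argument over the reconstructed subtrees, but that difference is cosmetic; your per-operation analysis of ``effective size changes by at most $m_x$'' is exactly the paper's case split for \difference{} and \intersection{}, and the paper additionally notes up front that \union{} never triggers reconstruction at all, since duplicates shrink a chunk by at most half.

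The one step that does not go through is your treatment of the tomb-filling work. You claim each extraction of a last element ``walks a path already counted among the red upper nodes.'' That fails for \difference{} ($T_L-T_S$): there a tomb is an element of $T_S$, but the element extracted to replace it is its predecessor in the output, which lies in $T_L\setminus T_S$; the right spine of the tomb's left subtree is the search path to that predecessor, and those nodes need not be upper nodes of any $T_S$-element, so they need not be red. The paper bounds this cost (its $W_1$) by a separate argument: there are at most $m$ extractions, each follows the right spine of a subtree of some size $n_i$ in $O(\log n_i)$ steps, these subtrees are disjoint, and Lemma~\ref{lem:sumbound} (Jensen) gives a total of $\bound$. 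You need this, or an equivalent argument; the red-node charging alone does not cover it.
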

\begin{proof}
  We first show that for \union{} we do not need to rebalance. If there are no duplicates in the two input sets, all the \tchunk{s} are of the same size.  Considering the duplicates, the size of each \tchunk{} can shrink by at most a half.   In this case, and since the pivots perfectly balance not considering duplicates, the tree is still balanced under the weight-balanced invariant.



  We then consider \intersection{} and \difference{}. In Algorithm \ref{algo:rebalance}, the total work consists of three parts: filling up the tombs (Line \ref{line:removelast1} and \ref{line:removelast2}), reconstruction (Line \ref{line:reconstruct}), and rotation (Line \ref{line:rotation}). We note them as $W_1$, $W_2$ and $W_3$, respectively.

  We first prove that $W_2+W_3$ is $\bound$. We ignore the cost of \func{RemoveLast} for now. Note that all subtrees obtained by base cases are balanced, so the rebalancing process will not touch those parts in $T'$ (Line \ref{line:skip1} in Algorithm \ref{algo:rebalance}), and will only visit the pivots in the skeleton. 



  For all elements $k\in T_S$, we mark all their \upper{s} in $T'$ as red.
  If there is no element $k\in T_S$ falling into a subtree in $T'$, this subtree will be skipped over directly (Line \ref{line:skip1}). Therefore the red nodes are the only nodes visited by the algorithm for reconstruction and rotation.

  Based on Lemma \ref{lem:uppers} and Lemma \ref{lem:step1height}, there are at most $\bound$ such red nodes in the skeleton. We denote the number of red nodes in a subtree $T$ (or a subtree rooted at $v$) as $R(T)$ (or $R(v)$). We first show that if we do not consider the cost of filling the tombs, the rebalancing cost for any subtree $T_x\in T'$ is asymptotically no more than $R(T_x)$.  We show this by induction.

  First, only those red nodes will be reached in the rebalancing step. Therefore the base case holds (e.g., when the tree is just one red node). For each red node $v$ in $T'$, there are two cases.
  \begin{enumerate}
    \item $v$'s left and right subtrees are almost balanced. In this case $v$ will be settled by a constant number of rotations. Considering the inductive hypothesis, the total work is asymptotically no more than the number of red nodes in the whole subtree.
    \item The sizes of $v$'s left and right subtrees differ by more than a constant factor $c$. Then we need to reconstruct the subtree, and the work is linear to the size of $v$'s subtree. We use $M$ to represent the number of elements in $T_S$ that fall into $v$'s subtree.

        We first show $M\le R(v)$. This is because the red nodes are always ancestors (inclusive) of those tree nodes from $T_S$. Therefore, $M\le R(v)$ holds for all tree nodes $v$.

        Recall that all the \tchunk{s} are designed to be of the same size. For \union{} (i.e., no tombs in the tree), the size of two \tchunk{s} can differ by at most a factor of two (due to duplicates), which will not cause imbalance. For \difference{} and \intersection{}, the only reason for the imbalance is that there are tombs being removed. There are two cases. We next show that in either case, the difference between the left subtree and right subtree of $v$ is no more than $M$.
        \begin{enumerate}
          \item In \difference{} ($T_L-T_S$). All tombs are elements in $T_S$. In these case, the difference of the left subtree and right subtree of $v$ is no more than $M$.
          \item In \difference{} ($T_S-T_L$) and \intersection{}. In these two cases all elements in $v$'s subtree must appear in $T_S$. Thus the size of the whole subtree (excluding tombs) rooted at $v$ is no more than $M$, and therefore the difference of the left subtree and right subtree of $v$ is no more than $M$.
        \end{enumerate}
        Recall that we use $l(v)$ and $r(v)$ to denote the left and right subtrees of a node $v$, respectively. Therefore the above statements proves that $c \cdot \text{size}(v) \le |\text{size}(r(v))-\text{size}(l(v))|\le M \le R(v)$, for some constant $c$. Therefore, the total work $O(\text{size}(v))$ is also $O(R(v))$.
  \end{enumerate}
  Therefore we proved that in any of the subtree $T_x\in T'$ the work is asymptotically no more than $R(T_x)$.  This will also be true for $T'$ itself, and $W_2+W_3$ for whole $T'$ is $\bound$.

  Next we show that $W_1$ is $\bound$. For $W_3$, the algorithm will pop up at most $m$ elements to fill in the tombs. Each such operation follows the right spine in the corresponding subtree of size $n'$ in $O(\log n')$ time. All such subtrees are disjoint. Therefore, based on Lemma \ref{lem:sumbound}, the total cost is $\bound$.
\end{proof}

Combining Lemmas \ref{lem:basecasework}, \ref{lem:sketchingwork} and \ref{lem:rebalancework} gives the following Lemma about the work of Algorithm \ref{algo:setset}.

\begin{lemma}
\label{lem:setwork}
The total work of Algorithm \ref{algo:setset} is $\bound$.
\end{lemma}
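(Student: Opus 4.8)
The plan is to decompose the total work of Algorithm~\ref{algo:setset} into the contributions of its two constituent calls, \mf{Sketch} (Algorithm~\ref{algo:sketch}) and \mf{Rebalance} (Algorithm~\ref{algo:rebalance}), and then invoke the work bounds already established for each. Since Algorithm~\ref{algo:setset} does nothing beyond possibly swapping its two arguments, calling \mf{Sketch} on $(T_1,T_2)$ to produce the sketch $T'$, and then calling \mf{Rebalance} on $T'$ to produce the final tree $T$, every instruction executed can be charged to exactly one of these two phases (the $O(1)$ argument swap folds into either). Thus the total work is the sum of the sketching work and the rebalancing work.

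For the first phase, I would invoke Lemma~\ref{lem:sketchingwork}, which bounds the work of \mf{Sketch} by $\bound$; this bound is already inclusive of all base-case calls, since the proof of Lemma~\ref{lem:sketchingwork} accounts for them via Lemma~\ref{lem:basecasework}. For the second phase, Lemma~\ref{lem:rebalancework} bounds the work of \mf{Rebalance} by $\bound$. Adding the two bounds, and using that the sum of a constant number of $\bound$ terms is again $\bound$, yields total work $\bound$, as claimed.

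I do not expect a genuine obstacle here, as the analytical heavy lifting has already been carried out in Lemmas~\ref{lem:basecasework}, \ref{lem:sketchingwork}, and \ref{lem:rebalancework}. The one point requiring care is to keep the decomposition both exhaustive and non-overlapping---in particular, that the base-case work is counted exactly once. Because the base cases are invoked inside \mf{Sketch} (Line~\ref{line:basecase} of Algorithm~\ref{algo:sketch}) and are already subsumed into the sketching bound of Lemma~\ref{lem:sketchingwork}, they must not be re-added separately when combining with the rebalancing bound. Respecting this bookkeeping is all that is needed to conclude, and the lemma in turn establishes the work bound of Theorem~\ref{thm:setset}.
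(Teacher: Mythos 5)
Your proposal is correct and matches the paper's own argument, which likewise obtains the lemma by combining Lemmas~\ref{lem:basecasework}, \ref{lem:sketchingwork}, and \ref{lem:rebalancework}. Your observation that the base-case work is already absorbed into the sketching bound is consistent with how the paper's proof of Lemma~\ref{lem:sketchingwork} accounts for it, so the bookkeeping is sound.
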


Lemma \ref{lem:rebalancework} also indicates the following corollary:
\begin{corollary} \label{cor:rebalancework}
\mf{Rebalance}$(T,\texttt{false})$ as shown in Algorithm \ref{algo:rebalance} on a weight-balanced tree $T$ of size $n$ with $m$ tombs in it cost work $\bound$. For this case we ignore the part of if-condition about base case at line \ref{line:skip1}.
\end{corollary}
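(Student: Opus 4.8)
The plan is to recognize that this corollary is a stand-alone repackaging of Lemma~\ref{lem:rebalancework}: the argument there never really used the full sketch structure, only (i) a bound on the number of relevant ``red'' nodes and (ii) the fact that imbalance can arise only from the deletion of tombs. So I would rerun the $W_1,W_2,W_3$ decomposition from that proof, now taking the $m$ tombs of $T$ to be the marked set that the $T_S$ elements played before. First I would mark as red every \upper{} (every node on a search path) of each of the $m$ tombs; in particular each tomb is red. Because $T$ is a WBB$[\alpha]$ tree, every subtree $T_x$ has height $O(\log|T_x|)$, so the hypothesis of Lemma~\ref{lem:uppers} is met and the number of red nodes is $\bound$. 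Since the stand-alone setting has no global $T_S$, the \texttt{small}$(T')$-based pruning at Line~\ref{line:skip1} is instead justified directly: a subtree that contains no tomb is already weight-balanced (deleting nothing from a WBB$[\alpha]$ tree preserves the invariant), hence is skipped. This confines every node touched for reconstruction or rotation to the red set.

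Next I would bound $W_2+W_3$ (reconstruction plus rotation) by the same induction as in Lemma~\ref{lem:rebalancework}, showing that the total reconstruction-and-rotation work inside any subtree $T_x$ is $O(R(T_x))$, where $R(T_x)$ is its number of red nodes. At a red node $v$ that is almost-balanced the work is a constant number of rotations, charged to $v$; at a node triggering reconstruction (Line~\ref{line:reconstruct}) the cost is $O(\text{size}(v))$, but its children can differ by the factor $2/\alpha$ only because tombs were removed, so $c\cdot\text{size}(v)\le|\text{size}(l(v))-\text{size}(r(v))|\le M_v\le R(v)$, where $M_v$ is the number of tombs in $v$'s subtree and the last inequality holds because every tomb is red and lies under $v$. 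Hence $\text{size}(v)=O(R(v))$ and the reconstruction is again charged to red nodes, giving $W_2+W_3=\bound$ after summing. For $W_1$, the at most $m$ tomb-fills each traverse the right spine of a disjoint subtree in work logarithmic in that subtree's size, so writing $m_i,n_i$ for the tomb count and size of the $i$-th such subtree and invoking Lemma~\ref{lem:sumbound} yields $W_1=\bound$.

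The main obstacle is not the arithmetic but pinning down the two bookkeeping points that differ from Lemma~\ref{lem:rebalancework}: justifying that, after discarding the Line~\ref{line:skip1} check, the traversal is still restricted to red nodes (handled by the ``no tombs $\Rightarrow$ already balanced $\Rightarrow$ skipped'' observation), and confirming that the red-node count follows from Lemma~\ref{lem:uppers} directly via the WBB$[\alpha]$ height guarantee on $T$ rather than from the sketch-specific Lemma~\ref{lem:step1height}. Once these are settled, summing $W_1+W_2+W_3$ gives the claimed $\bound$ bound.
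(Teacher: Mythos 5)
Your proposal is correct and matches the paper's approach: the paper proves this corollary only by remarking that it ``can be shown using a similar proof'' to Lemma~\ref{lem:rebalancework}, and your write-up is exactly that adaptation --- the $m$ tombs play the role of the $T_S$ elements, Lemma~\ref{lem:uppers} applies via the WBB$[\alpha]$ height guarantee in place of the sketch-specific Lemma~\ref{lem:step1height}, and the $W_1,W_2,W_3$ accounting (rotations charged to red nodes, reconstruction at $v$ charged via $c\cdot\text{size}(v)\le M_v\le R(v)$, tomb-fills bounded by Lemma~\ref{lem:sumbound}) goes through unchanged. Your observation that some pruning of tomb-free subtrees must survive the removal of the line~\ref{line:skip1} check --- justified by ``no tombs $\Rightarrow$ already weight-balanced $\Rightarrow$ return directly,'' without which the traversal alone would cost $\Theta(n)$ --- is the one point the paper leaves implicit, and it is the right reading of the corollary's statement.
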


This corollary is weaker than the condition considered in Lemma \ref{lem:rebalancework}, and thus can be shown using a similar proof. This corollary can be used to bound the work of the base case and the work for \union{} and \difference{} (see Section \ref{sec:set:inefficient}).
Combining Lemmas \ref{lem:setdepth}--\ref{lem:rebalancework} gives
Theorem~\ref{thm:setset}.

\section{Proof of Theorem \ref{thm:complexity}}
\label{app:complexity}

\begin{proof}
(Outline).   The first inclusion is well known~\cite{Borodin77}.
To simulate $L$ in $\BF^1$ we can construct a graph representing the
state transition diagram of the logspace computation.  Each state
looks at one bit of the input tape and has two edges out, one for 0
and one for 1.  The construction is logspace uniform.  For a
particular input of length $n$ we select the appropriate edges, which
forms a forest with the accept states appearing as some of the roots.
An Euler Tour can be built on the trees in the forest---not hard to do
in logarithmic span.  Using deterministic list-contraction, as
described at the end of the next section, the simulation can identify
if the start state has an accept state as its root.  The full
computation has $O(\log n)$ span and does polynomial work (there are
polynomially many states).

To simulate $\BF^1$ in $\AC^1$ we could use
know results for simulating the CRCW PRAM in circuit depth
proportional to time~\cite{ChandraSV84}, but this would require randomization
since our simulation of the \bfmodel{} on the PRAM uses
randomization.  However, it is not difficult to extend the ideas for a
deterministic simulation.  The idea is that the shared memory can be
simulated in constant depth per step~\cite{KarpR90}, as can the processor.
The processor simulation can take advantage of the fact that our registers only
have $O(\log n)$ bits and instructions only take a constant number of
registers.  This allows instructions to be simulated in constant
depth and polynomial size, effectively by table lookup (just index the
correct solution based on the $O(\log n)$ input bits).  The forking
can be simulated by assuming that every instruction might be a fork,
so create circuits for two copies of the processor as the children of each instruction.  Since the
computation has $O(\log n)$ span, at most a polynomial number of
processors simulators are required in the circuit.

\end{proof}


\fi

\end{document}